\newtheorem{theorem}{Theorem}
\newtheorem{corollary}[theorem]{Corollary}
\newtheorem*{corollary*}{Corollary}
\newtheorem{lemma}[theorem]{Lemma}
\newtheorem{proposition}[theorem]{Proposition}
\newtheorem*{proposition*}{Proposition}
\newtheorem*{property*}{Property}
\newtheorem{myclaim}{Claim}
\theoremstyle{definition}
\newtheorem{definition}[theorem]{Definition}
\newtheorem*{definition*}{Definition}
\theoremstyle{remark}
\newtheorem*{remark*}{Remark}
\newtheorem*{example*}{Example}
\newcommand{\Z}{\mathbb Z}    % Integers
\providecommand{\dsnp}{Directed Steiner Forest\xspace}
\newcommand*{\twokdst}{2-connected Directed $k$ Steiner Tree\xspace}
\newcommand*{\onetwotwodst}{1-2-connected Directed $2$ Steiner Tree\xspace}
\newcommand*{\dstp}{Directed Steiner Tree\xspace}
\newcommand*{\mcfp}{MCF\xspace}
\newcommand*{\onerkdpa}{{Fault-Tolerant $\ell$-Flow Augmentation}\xspace}
\newcommand*{\dortf}{{Fault Tolerant $2$-Flow}\xspace}
\newcommand*{\dorkf}{{Fault-Tolerant $\ell$-Flow}\xspace}
\newcommand*{\dorkkf}{{Fault-Tolerant $k$-Flow}\xspace}
\newcommand*{\ftflong}{{Fault-Tolerant Flow}\xspace}
\newcommand*{\ftf}{{FTF}\xspace}
\newcommand*{\ftp}{{FTP}\xspace}
\newcommand*{\wrma}{Weighted Robust Matching Augmentation\xspace}
\newcommand*{\opt}{\ensuremath{\operatorname{OPT}}\xspace}
\tikzset{
	edge/.style={very thick, gray},
	medge/.style={decorate,very thick,decoration={snake}},
	aedge/.style={very thick,dashed,black},
	dedge/.style={thick,->},
	availedge/.style={thick,blue},
	vertex/.style={shape=circle,thick,draw,node distance=3em}
}
\title{Fault-Tolerant Edge-Disjoint Paths --- Beyond Uniform Faults}
\author[1]{David Adjiashvili}
\author[2]{Felix Hommelsheim\thanks{Research partially supported by the German Research Foundation (DFG), RTG 1855}}
\author[3]{Moritz M\"uhlenthaler}
\author[4]{Oliver Schaudt}
\affil[1]{Department of Mathematics, ETH Z\"urich}
\affil[2]{Department of Mathematics, TU Dortmund University}
\affil[2]{Laboratoire G-SCOP, Grenoble INP, Univ. Grenoble-Alpes}
\affil[4]{Department of Mathematics, RWTH Aachen University}
\begin{document}

\maketitle

\begin{abstract}
The overwhelming majority of survivable (fault-tolerant) network design models assume a uniform fault model. Such a model assumes that every subset of the network resources (edges or vertices) of a given cardinality $k$ may fail. While this approach yields problems with clean combinatorial structure and good algorithms, it often fails to capture the true nature of the scenario set coming from applications.  One natural refinement of the uniform model is obtained by partitioning the set of resources into \textit{vulnerable} and \textit{safe} resources. The scenario set contains every subset of at most $k$ faulty resources. This work studies the \textit{Fault-Tolerant Path} (FTP) problem, the counterpart of the Shortest Path problem in this fault model and the \textit{Fault-Tolerant Flow} problem (\ftf), the counterpart of the $\ell$-disjoint Shortest $s$-$t$ Path problem. We present complexity results alongside exact and approximation algorithms for both models. We emphasize the vast increase in the complexity of the problem with respect to the uniform analogue, the Edge-Disjoint Paths problem.
\end{abstract}

\newpage

\section{Introduction}\label{sec:intro}

The \emph{Minimum-Cost Edge-Disjoint Path} (EDP) problem is a classical
network design problem, defined as follows.  Given an edge-weighted directed graph
$D=(V,A)$, two terminals $s,t\in V$ and an integer parameter $k\in
\mathbb{Z}_{\geq 0}$, find $k$ edge-disjoint paths connecting $s$ and $t$ with minimum
total cost. EDP is motivated by the following survivable network design
problem: what is the connection cost of two nodes in a network, given that any
$k-1$ edges can be a-posteriori removed from the graph. The implicit assumption
in EDP is that every edge in the graph in equally vulnerable. Unfortunately,
this assumption is unrealistic in many applications, hence resulting in
overly-conservative solutions.  Our goal is to advance the understanding of
non-uniform models for network design problems in order to avoid solutions that
are too conservative and hence, too costly.  To this end we study a natural
generalization of the EDP problem called the \emph{Fault-Tolerant Path} (FTP)
problem, 
in which we consider a subset of the edges to be vulnerable.  The problem asks
for a minimum-cost subgraph of a given graph that contains an $s$-$t$ path
after removing any $k$ vulnerable edges from the graph. Formally, it is defined
as follows.

\begin{framed}
	\begin{tabular}[h]{lp{0.8\linewidth}}
		\multicolumn{2}{l}{\textit{Fault-Tolerant Path (FTP)}} \tabularnewline
		\textbf{Instance:} &  edge-weighted directed graph $D=(V,A)$, two nodes $s,t\in V$, subset $M\subseteq A$ of vulnerable edges, and integer
		$k\in \mathbb{Z}_{\geq 0}$.\tabularnewline
		\textbf{Task:}    & Find minimum-cost set $S \subseteq A$, such that $S \setminus F$ contains an $s$-$t$ path for every $F\subseteq M$ with
		$|F| \leq k$.
	\end{tabular}
\end{framed}

Observe that \ftp becomes EDP when $M = A$. We will also study EDP with a
simpler, but still non-uniform, fault model: The problem \ftflong (\ftf) asks
for $\ell \geq 1$ fault-tolerant disjoint $s$-$t$ paths, assuming that only a
single edge can be a-posteriori removed from the graph. The problem is defined
as follows.
\begin{framed}
	\begin{tabular}[h]{lp{0.8\linewidth}}
		\multicolumn{2}{l}{\textit{\ftflong (\ftf)}} \tabularnewline
		\textbf{Instance:} &  edge-weighted directed graph $D=(V, A)$, two nodes $s,t\in V$, set $M\subseteq A$ of \emph{vulnerable} arcs, and integer
		$\ell\in \mathbb{Z}_{\geq 0}$.\tabularnewline
		\textbf{Task:}    & Find minimum cost set $S \subseteq A$, such that $S \setminus f$ contains $\ell$ disjoint $s$-$t$ paths for every $f \in M$.
	\end{tabular}
\end{framed}

\subsection{Results}

A well-known polynomial algorithm for EDP works as follows. Assign unit capacities to all edges in $G$ and find a minimum-cost
$k$-flow from $s$ to $t$. The integrality property of the \textit{Minimum-Cost $s$-$t$ Flow} (MCF) problem guarantees that an
extreme-point optimal solution is integral, hence it corresponds to a subset of edges. It is then straightforward to verify that
this set is an optimal solution of the EDP problem (for a thorough treatment of this method we refer to the book of Schrijver~\cite{schrijver2003combinatorial}).

The latter algorithm raises two immediate questions concerning FTP. The first
question is whether FTP admits a polynomial time algorithm. In this paper we
give a negative answer conditioned on $\P\neq\NP$, showing that FTP is
\NP-hard.  In fact, the existence of constant-factor approximation algorithms
is unlikely already for the restricted case of directed acyclic graphs.
Consequently, it is natural to ask whether polynomial algorithms can be
obtained for restricted variants of FTP %and devote Section~\ref{sec:exact} 
to this question. In particular we provide polynomial-time algorithms for
arbitrary graphs and $k=1$, directed acyclic graphs and fixed $k$ and series-parallel graphs.
A second question
concerns the natural fractional relaxation FRAC-FTP of FTP, %, denoted by \textit{Fractional FTP} (), 
in which the task is to find a minimum cost capacity vector $x \in [0, 1]^A$
such that for every set $F$ of at most $k$ vulnerable edges, the maximum
$s$-$t$ flow in $G_F = (V, A \setminus F)$, capacitated by $x$, is at least
one.  As we previously observed, one natural relaxation of EDP is the MCF
problem.  This relaxation admits an integrality gap of one, namely the optimal
integral solution value is always equal to the corresponding optimal fractional
value. 
We later show that, in contrast to MCF, the integrality gap of FRAC-FTP is
bounded by $k+1$. Furthermore, we show that this bound is tight, namely that
there exists an infinite family of instances with integrality gap arbitrarily
close to $k+1$.  This result also leads to a simple $(k+1)$-approximation
algorithm for FTP, which we later combine with an algorithm for the case $k=1$
to obtain a $k$-approximation algorithm for \ftp.

The second variant of the EDP we study is \ftf, which asks for
$\ell \geq 1$ disjoint $s$-$t$ paths in the presence of non-uniform faults.
Note that if we consider uniform faults (every edge is vulnerable), an optimal
solution is a minimum-cost $s$-$t$ flow of value $k + \ell$, which can be
computed in polynomial time. We show that, again, the presence of non-uniform
faults makes the problem much harder. In fact, it is as hard to approximate as
\ftp, despite the restriction to single-arc faults
%, the problem is as hard to
%approximate as \ftp. 
(the same result holds for \ftf on undirected graphs).  On the positive side, we give a
simple polynomial-time $(\ell + 1)$-approximation algorithm for \ftf which
computes a MCF with appropriately chosen capacities. 

Note that our positive results for \ftp imply a polynomial-time algorithm for
\ftf and $\ell = 1$. Hence it is natural to investigate the dependence of the
complexity of \ftf on the number $\ell$ of disjoint paths.  To this end, we fix
$\ell$ and study the corresponding slice \emph{\dorkf} of \ftf. 
Our main result in this setting is a 2-approximation algorithm for \dorkf. In a
nutshell, the algorithm first computes %$\ell$ disjoint $s$-$t$ paths using a a
minimum-cost $\ell$-flow and then makes the resulting $\ell$ disjoint paths
fault tolerant by solving the corresponding \emph{augmentation problem}.
We solve the augmentation problem by reducing it to a shortest path problem; it
is basically a dynamic programming algorithm in disguise. The reduction is
quite involved: in order to construct the instance of Shortest $s$-$t$-Path, we
solve at most $n^{2\ell}$ instances of the Min-cost Strongly Connected
Subgraphs problem on $\ell$ terminal pairs, all of which can be done in
polynomial time since $\ell$ is fixed.

Given our approximation results, one may wonder whether \dorkf might admit a
polynomial-time algorithm (assuming $\P \neq \NP$, say). An indication in this
direction is that for number of problems with a similar flavor, including
robust paths~\cite{bulk}, robust matchings~\cite{robmatch} or robust spanning
trees~\cite{flex}, hardness results were obtained by showing that the
corresponding augmentation problems are hard. In the light of our results above
this approach does not work for \dorkf. 
%So there is hope that
%\dorkf admits a polynomial-time algorithm. 
On the other hand, we show that such a polynomial-time algorithm for \dorkf
implies polynomial-time algorithms for \onetwotwodst and a special case of
\twokdst. Whether these two problems are \NP-hard or not are long-standing open
questions.

\subsection{Related Work}\label{sec:related}

The shortest path problem is one of \emph{the} classical problems in
combinatorial optimization, and as such, it has received considerable
attenation also in the context of fault tolerance/robustness, see for example
\cite{aissi2005approximationA,buesing_12,DemandR,DemandR2,RRSP,RSP,CompRSP}.
Considering \ftp and \ftf, one of the most relevant notions of robustness is
\emph{bulk-robust}, introduced by Adjiashvili, Stiller and
Zenklusen~\cite{bulk}. Here, we are given a set of \emph{failure scenarios}, that is,
a set of subsets of resources that may fail simultaneously. The task is to find
a minimum-cost subset of the resources, such that a desired property (e.g.,
connectivity of a graph) is maintained, no matter which failure scenario
emerges.  Both \ftp and \ftf are special cases of this model.
Adjiashvili, Stiller and Zenklusen considered bulk-robust counterparts of the
Shortest Path and Minimum Matroid basis problems. For bulk-robust shortest
paths on undirected graphs they give a $O(k + \log n)$-approximation algorithm,
where $k$ is the  maximum size of a failure scenario. However, not that the running-time
of this algorithm is exponential in $k$. Note
that their bulk-robust shortest path problem generalizes \ftp, and therefore
the same approximation guarantee holds for \ftp.
Our approximation algorithm for \ftp significantly improves on this bound, on both
the approximation guarantee and the running-time.

The robustness model used in this paper is natural for various classical
combinatorial optimization problems. Of particular interest is the counterpart
of the Minimum Spanning Tree problem. This problem is closely related to the
Minimum $k$-Edge Connected Spanning Subgraph ($k$-ECSS) problem, a
well-understood robust connection problem. There are numerous results for the
unweighted version of $k$-ECSS.  Gabow, Goemans, Tardos and
Williamson~\cite{kEdgeConnected1} developed a polynomial time $(1 +
\frac{c}{k})$-approximation algorithm for $k$-ECSS, for some fixed constant
$c$. The authors also show that for some constant $c' < c$, the existence of a
polynomial time $(1 + \frac{c'}{k})$-approximation algorithm implies $\P=\NP$.
An intriguing property of $k$-ECSS is that the problem becomes easier to
approximate when $k$ grows. Concretely, while for every fixed $k$, $k$-ECSS is
\NP-hard to approximate within some factor $\alpha_k>1$, the latter result
asserts that there is function $\beta(k)$ tending to one as $k$ tends to
infinity such that $k$-ECSS is approximable within a factor $\beta(k)$.  This
phenomenon was already discovered by Cheriyan and Thurimella~\cite{CT:00}, who
gave algorithms with a weaker approximation guarantee.  The more general
Generalized Steiner Network problem admits a polynomial $2$-approximation
algorithm due to Jain~\cite{Jain2001}. This is also the best known bound for
weighted $2$-ECSS.

Adjiashvili, Hommelsheim and M\"uhlenthaler\cite{flex} considered the bulk-robust minimum spanning tree problem with non-uniform single-edge failures.
Their main result is a $2.523$-approximation algorithm for this problem. 
A problem of a similar flavor but with a uniform fault model is Weighted Robust Matching Augmentation, which
was studied by Hommelsheim, M\"uhlenthaler and Schaudt~\cite{robmatch}. The
task is to find a minimum-cost subgraph, such that after the removal of any
single edge, the resulting graph contains a perfect matching. They show that
this problem is as hard to approximate as \dsnp, which is known to admit no
$\log^{2-\varepsilon}$-approximation algorithm unless $\NP \subseteq
\mathsf{ZTIME}(n^{\polylog(n)})$~\cite{DBLP:conf/stoc/HalperinK03}. We will
later show that \ftf generalizes Weighted Robust Matching Augmentation.

\subsection{Notation}
We mostly consider directed graphs, which we denote by $(V, A)$, where $V$ is
the set of vertices set and $A$ the set of arcs.  Undirected graphs  are
denoted by $(V, E)$, where $E$ is a set of edges. 
%For elements in $E$ we usually write $\{v, w \}$ and for elements in $A$ we write $(v, w)$. 
%If it is clear
%from the context we sometimes simply write $v w$.
An  \emph{orientation} of a set $E$ of undirected edges is an arc-set that
orients each edge $vw \in E$ as an arc $vw$ or $wv$.
%$an edge-set $E$ assigns a direction to each edge in $E$. 
%$\overrightarrow{E'}$ of some edge set 
%$E' \subseteq E$ each edge $\{ u, v \} \in E'$
%is replaced by an arc $(u, v)$ or $(v, u)$. 
For some vertex set $V'
\subseteq V$ we denote by $\delta(V') \coloneqq \{ v w \in E \mid v \in V', w
\in V \setminus V' \}$.  For two vertex sets $X, Y \subseteq V$ we write $E(X,
Y) \coloneqq \{ x y  \in E \mid x \in X \setminus Y, y \in Y \setminus X \}$
for the set of edges joining $X$ and $Y$ (the graph should be clear from the
context). In a directed graph we simply replace $E$ by $A$.
%Analogue we define $A(X, Y) \coloneqq \{ (x, y) \in A \mid x \in X \setminus Y, y \in Y \setminus X \}$.
In this paper we usually consider edge-weighted graphs and assume throughout that weights are non-negative.
The arcs of $A$ that are not vulnerable are called \emph{safe}, denoted by 
$\overline{M} \coloneqq A \setminus M$.

For the sake of a clearer presentation, we moved proofs of results marked by $(\ast)$ to the appendix.
A preliminary version of this paper can be found here~\cite{adjiashvili2013fault}.

\subsection{Organization}
The remainder of this paper is organized as follows: We present our results on
the problem \ftp in Section~\ref{sec:ftp} and our results on the problem \ftf
in Section~\ref{sec:ftf}.
In Section~\ref{sec:undirected}, we show that \ftf on undirected graphs is a
special case of \ftf on directed graphs.  We study the approximation hardness
of \ftf in Section~\ref{sec:complexity} and we provide some exact polynomial
algorithms for special cases in Section~\ref{sec:exact}. In
Section~\ref{sec:intgap} we relate \ftp and FRAC-FTP by proving a tight bound
on the interagrality gap and show how this result leads to a $k$-approximation
algorithm for FTP. In Section~\ref{sec:FTF:apx-hardnes} we prove approximation
hardness of \ftf. We then give an $(\ell +1)$-approximation algorithm in
Section~\ref{sec:FTF:apx}, followed by a 2-approximation algorithm for
\ftf with a fixed flow value $\ell$.
Furthermore, in Section~\ref{sec:FTF:other-problems}, we relate the complexity
of \ftf with fixed flow value to other problems of open complexity status.
Section~\ref{sec:conclusions} concludes the paper and mentions some interesting
open problems.

\section{Fault-Tolerant Paths}\label{sec:ftp}

\subsection{Directed Versus Undirected Graphs}\label{sec:undirected}
%In this section we give some structural insights on the relationship between directed \ftp and undirected \ftp.
The classical shortest path problem is set on directed graphs.
Assuming non-negative edge-weights, undirected graphs are a special case, since
we may replace each undirected edge by two antiparallel directed edges and
conclude that any shortest path in the resulting digraph corresponds to a
shortest path in the original undirected graph. Here, we show that the same is
true for \ftp. The main insight is that, even if at most
$k$ vulnerable edges may fail, no undirected edge is used in both directions.
As a consequence, all our positive results for directed graphs in this section also hold
for \ftf  on undirected graphs.

\begin{proposition}
	\label{prop:ftp:directed-undirected}
	Let $X \subseteq E$ be a feasible solution to an instance of \ftp on an
	undirected graph $(V, E)$. Then there is an orientation
	$\overrightarrow{X}$ of $X$ such that $(V, \overrightarrow{X} - F)$
	contains a directed $s$-$t$ path for every $F \subseteq M$ with $|F|
	\leq k$.
\end{proposition}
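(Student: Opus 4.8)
The plan is to show that a feasible solution $X$ can be oriented so that it stays feasible, and the core claim will be that no edge of $X$ ever needs to be traversed in both directions across the various failure scenarios. Concretely, I would first reduce to a minimal feasible $X$: if $X' \subseteq X$ is an inclusion-wise minimal feasible subset, it suffices to orient $X'$ and orient the remaining edges of $X \setminus X'$ arbitrarily. So assume $X$ is minimal. For such a minimal $X$, every edge $e \in X$ is "used": there is some scenario $F_e \subseteq M$, $|F_e| \le k$, such that every $s$-$t$ path in $(V, X \setminus F_e)$ uses $e$ — otherwise $e$ could be deleted. I would then argue that $(V, X)$ is essentially a series-parallel-like structure between $s$ and $t$, or at least that the edges have a consistent "direction of flow" from $s$ towards $t$.

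The cleanest route is via a potential/level function. I would consider the undirected graph $(V, X)$ and, for each scenario $F$ with $|F| \le k$, there is an $s$-$t$ path $P_F$ in $X \setminus F$; the plan is to define an orientation where edge $vw$ is oriented from $v$ to $w$ iff in "all relevant" scenarios any path through $vw$ traverses it from $v$ to $w$. To make this well-defined, the key lemma is: if some path in $X \setminus F_1$ uses $vw$ from $v$ to $w$ and some path in $X \setminus F_2$ uses $vw$ from $w$ to $v$, then $X$ was not minimal — because one could short-cut. Indeed, if a path goes $s \to \dots \to v \to w \to \dots \to t$ and another goes $s \to \dots \to w \to v \to \dots \to t$, then concatenating the $s$–$v$ prefix of the first with the $v$–$t$ suffix of the second yields an $s$-$t$ walk avoiding $vw$, and it avoids $F_1 \cup F_2$... but that set can have size up to $2k$, which is the obstacle. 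So instead I would fix a single scenario at a time.

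The right statement to prove is per-scenario consistency combined with the observation that the needed orientation can be read off from the structure of $2$-connectivity. I would instead argue as follows: take the orientation $\overrightarrow{X}$ induced by a topological-style argument — contract $s$ and $t$ is not quite it; rather, root everything at $s$. Here is the approach I would commit to: let $d(v)$ be the length (in edges) of a shortest $s$-$v$ path inside $(V, X)$; orient $vw$ from $v$ to $w$ if $d(v) < d(w)$, and handle ties ($d(v) = d(w)$) by a secondary rule or by showing tie-edges are never needed. This does not obviously preserve feasibility, so the honest plan is: use minimality of $X$ to show $(V, X)$ has no "useless" detours, invoke the fact (provable by an exchange/uncrossing argument on two $s$-$t$ paths in a common scenario) that for each edge $e$ there is a well-defined direction consistent across all scenarios in which $e$ is used, orient $e$ that way, orient unused edges arbitrarily, and finally verify: for every $F$ with $|F| \le k$, the undirected path $P_F$ in $X \setminus F$ becomes a directed $s$-$t$ path in $\overrightarrow{X} \setminus F$ because each of its edges is used in scenario $F$ and hence oriented along $P_F$.

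The main obstacle I anticipate is making the "consistent direction" claim rigorous without the scenario-union blowing up past $k$: a single pair of conflicting paths may live in two different scenarios, and naive concatenation argues about $F_1 \cup F_2$. The fix I would pursue is to show that if edge $vw$ is traversed $v \to w$ in scenario $F_1$ and $w \to v$ in scenario $F_2$, then already within scenario $F_1$ alone there is an $s$-$t$ path in $X \setminus F_1$ avoiding $vw$ (using that, in $X \setminus F_1$, the component structure forces an alternative), contradicting that $vw$ is used in $F_1$; this localizes the argument to one scenario at a time and keeps the cardinality bound intact. Getting this localization right — essentially a statement that in a minimal feasible solution each edge lies on every $s$-$t$ path of its witnessing scenario, and that two $s$-$t$ paths in the same graph cannot traverse a common edge in opposite directions unless one can be rerouted — is the crux, and everything else is bookkeeping.
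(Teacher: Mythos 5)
Your plan correctly isolates the crux --- that an edge of a (minimal) feasible solution cannot be needed in both directions, where the two directions are witnessed by two \emph{different} failure scenarios $F_1$ and $F_2$ --- and you correctly observe that naive path concatenation only produces a walk avoiding $F_1\cup F_2$, which may have up to $2k$ faults. But your proposed fix does not close this gap: the assertion that ``already within scenario $F_1$ alone there is an $s$-$t$ path in $X\setminus F_1$ avoiding $vw$, using that the component structure forces an alternative'' is essentially a restatement of the claim to be proved, not an argument. (The within-one-scenario rerouting you describe --- splicing the $s$-$v$ prefix of one path with the $v$-$t$ suffix of another --- only works when both conflicting paths live in the \emph{same} graph $X\setminus F$; it says nothing about a conflict between a path in $X\setminus F_1$ and a path in $X\setminus F_2$.) The auxiliary ideas you float (BFS levels $d(v)$, series-parallel structure, topological ordering) are not developed and would not obviously survive the tie cases you mention. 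So as written the proof has a genuine hole exactly at the step you yourself identify as the crux.

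The paper closes this hole with a cut-uncrossing argument rather than a path-rerouting one. It takes a partial orientation $Y$ maximizing the number of oriented edges and supposes some edge $e=vw$ must remain undirected. The necessity of $e$ in direction $v\to w$ is witnessed by a cut $\delta(S_1)$ with $vw\in\delta(S_1)$, and in direction $w\to v$ by a cut $\delta(S_2)$ with $wv\in\delta(S_2)$; each witnessing cut has exactly $k+1$ edges, all vulnerable except possibly $e$. Feasibility against all scenarios forces $|\delta(S_1\cap S_2)|\ge k+1$ and $|\delta(S_1\cup S_2)|\ge k+1$, so submodularity gives equality everywhere, and the exact identity
\[
|\delta(S_1)|+|\delta(S_2)| = |\delta(S_1\cap S_2)|+|\delta(S_1\cup S_2)| + |A(S_1\setminus S_2,\,S_2\setminus S_1)| + |A(S_2\setminus S_1,\,S_1\setminus S_2)|
\]
is then violated because $e$ itself contributes to the cross terms. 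Note that this argument is inherently global --- it combines the two witnessing cuts via intersection and union --- which is precisely why the ``one scenario at a time'' localization you were hoping for is not available. If you want to complete your proof, replacing your localization step with this submodularity argument is the missing ingredient.
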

\begin{proof}
	Let us assume for a contradiction that there is no such orientation.  A
	set $Y$ of (undirected and directed) edges is a \emph{partial orientation
	of $X$} if there is a partition of $X$ into sets $X_1$ and
	$X_2$, such that $Y= X_1 \cup \overrightarrow{X_2}$, where
	$\overrightarrow{X_2}$ is an orientation of $X_2$.  Let $Y$ be a
	partial orientation of $X$ that maximizes the number of
	directed edges, such that $(V, \overrightarrow{X} - F)$ contains a
	directed $s$-$t$ path for every $F \subseteq M$ with $|F| \leq k$. By
	our assumption, there is at least one undirected edge $e = vw$ in $Y$.
	Furthermore, there are two sets $S_1, S_2 \subset V$ of vertices, such
	that $\{s\} \subseteq S_1, S_2 \subseteq V \setminus \{t\}$, $v \in S_1
	\setminus S_2$, and $w \in S_2 \setminus S_1$. Note that $vw \in
	\delta(S_1)$ and $wv \in \delta(S_2)$.

	%That is, the edge $e$ is
	%used in the direction $(v, w)$ for the directed cut $\delta(S_1)$ and
	%in the direction $(w, v)$ for the directed cut $\delta(S_2)$.
	Since $e$ is needed in both directions for $Y$ to be feasible, there is
	some $F \subseteq M$, $|F| \leq k$, such that $X \setminus F$ contains an
	$s$-$t$ path that must leave $S_1$ via $vw$. Therefore, the cut $\delta(S_1)$ contains at
	most $k+1$ edges and all of them except possibly $e$ are vulnerable. The same holds for $\delta(S_2)$
	and therefore we have $|\delta(S_1)| = |\delta(S_2)| = k+1$. From the
	feasibility of $Y$ and the fact that all edges in $\delta(S_1)$ and
	$\delta(S_2)$ except possibly $e$ are vulnerable, it follows that 
	$|\delta(S_1 \cap S_2)| \geq k+1$ and $|\delta(S_1 \cup S_2)| \geq k+1$.  
	By the submodularity of the cut function $|\delta(\cdot)|$ we have
	\[
		2k+2 = |\delta(S_1)| + |\delta(S_2)| \geq |\delta(S_1 \cap S_2)| + |\delta(S_1 \cup S_2)| \geq 2k+2 
	\]
	and it follows that
	\begin{equation}
		|\delta(S_1)|= |\delta(S_2)|=|\delta(S_1 \cap S_2)| = |\delta(S_1 \cup S_2)| = k+1 \enspace.
		\label{eq:samesize}
	\end{equation}
	The cut-function $|\delta(\cdot)|$ satisfies the following identity
	\[
		|\delta(S_1)| + |\delta(S_2)| =  |\delta(S_1 \cap S_2)| + |\delta(S_1 \cup S_2)| + | A(S_1 \setminus S_2, S_2 \setminus S_1)| + | A(S_2 \setminus S_1, S_1 \setminus S_2)|\enspace, 
	\]
	but the observation that $e$ is an edge connecting $S_1 \setminus S_2$
	and $S_2 \setminus S_1$, together with~\eqref{eq:samesize} yields a
	contradiction to the previous identity.
\end{proof}

\subsection{Complexity of FTP}\label{sec:complexity}

Our first observation is that \ftp generalizes the \textit{Directed $m$-Steiner Tree Problem} ($m$-DST). The input
to $m$-DST is a weighted directed graph $D = (V,A)$, a \textit{source} node $s\in V$, a collections of \textit{terminals}
$T\subseteq V$ and an integer $m \leq |T|$. The goal is to find a minimum-cost arboresence $X \subseteq A$ rooted at $s$, that contains a 
directed path from $s$ to some subset of $m$ terminal. 

The $m$-DST is seen to be a special case of \ftp as follows. Given an instance $I = (D, s,T,m)$ of $m$-DST 
define the following instance of \ftp. The graph $D$ is augmented by $|T|$ new zero-cost arcs $A'$ 
connecting every terminal $u\in T$ to a new node $t$. Finally, we set $M = A'$ and $k = m-1$. 
The goal is to find a fault-tolerant path from $s$ to $t$ in the new graph. It is now straightforward to see
that a solution $S$ to the \ftp instance is feasible if and only if $S\cap A$ contains a feasible solution
to the $k$-DST problem (we can assume that all arcs in $A'$ are in any solution to the \ftp instance).

The latter observation implies an immediate conditional lower bound on the approximability of \ftp. Halperin
and Krauthgamer~\cite{DBLP:conf/stoc/HalperinK03} showed that $m$-DST cannot be approximated within a factor
$\log ^{2-\epsilon} m$ for every $\epsilon > 0$, unless $\NP \subseteq \mathsf{ZTIME}(n^{\polylog(n)})$. As a result
we obtain the following.

\begin{proposition}\label{prop:complexity}
	\ftp admits no polynomial-time approximation algorithms with ratio
	$\log ^{2-\epsilon} k$ for every $\epsilon > 0$, unless $\NP \subseteq
	\mathsf{ZTIME}(n^{polylog(n)})$.
\end{proposition}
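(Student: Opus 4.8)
The plan is to leverage the reduction from $m$-DST to \ftp that was just described in the paragraph preceding the statement, and combine it with the known inapproximability of $m$-DST. I would proceed as follows.

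\textbf{Step 1: Record the properties of the reduction.} Given an instance $I = (D, s, T, m)$ of $m$-DST with $D = (V,A)$, form the \ftp instance $I'$ by adding a new sink $t$, adding zero-cost arcs $A' = \{ut \mid u \in T\}$, setting $M = A'$ and $k = m-1$. All arc costs from $D$ are preserved, and the new arcs have cost $0$. I would verify two things: first, that $|A'| = |T|$ of these arcs can be assumed to lie in any optimal \ftp solution (since they are free, adding all of them never hurts), and second, the correctness of the equivalence already asserted in the excerpt: a set $S$ with $A' \subseteq S$ is feasible for $I'$ if and only if $S \cap A$ contains an arborescence rooted at $s$ reaching at least $m$ terminals of $T$. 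The ``only if'' direction uses that removing any $k = m-1$ of the vulnerable arcs $A'$ must still leave an $s$-$t$ path, which forces $s$ to reach at least $m$ of the terminals inside $D$; the ``if'' direction is immediate. Consequently $\opt(I') = \opt(I)$.

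\textbf{Step 2: Transfer approximation ratios.} Suppose, for contradiction, that \ftp admits a polynomial-time $\log^{2-\epsilon} k$-approximation for some fixed $\epsilon > 0$. Given an $m$-DST instance $I$, build $I'$ in polynomial time; here $k = m-1 \le m$. Run the assumed \ftp algorithm to get $S$ with $\mathrm{cost}(S) \le \log^{2-\epsilon}(k)\cdot \opt(I') \le \log^{2-\epsilon}(m)\cdot\opt(I)$. Then extract from $S \cap A$ an arborescence reaching $m$ terminals (by Step 1 such a subarborescence exists and can be found by a simple search); its cost is at most $\mathrm{cost}(S)$ since arc costs are non-negative. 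This yields a polynomial-time $\log^{2-\epsilon} m$-approximation for $m$-DST, contradicting the result of Halperin and Krauthgamer~\cite{DBLP:conf/stoc/HalperinK03} under the hypothesis $\NP \not\subseteq \mathsf{ZTIME}(n^{\polylog(n)})$. (One minor technicality: if $m \le 2$ then $\log^{2-\epsilon} m$ may be less than $1$; but $m$-DST for $m \le 1$ is trivial and for small constant $m$ the inapproximability statement is understood to apply for growing $m$, so we may assume $m$ large enough that $\log^{2-\epsilon} m \ge 1$.)

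\textbf{Main obstacle.} There is essentially no deep obstacle here — the statement is a direct corollary of the reduction and the cited hardness result. The only thing requiring a little care is making the ``$k$'' in the approximation factor match the ``$m$'' in the $m$-DST hardness bound: the reduction gives $k = m-1$, so I need $\log^{2-\epsilon}(m-1)$ versus $\log^{2-\epsilon} m$, which are the same up to the choice of $\epsilon$ (for any $\epsilon' > \epsilon$ we have $\log^{2-\epsilon'} m \le \log^{2-\epsilon}(m-1)$ for large $m$), so the conditional lower bound carries over verbatim after an arbitrarily small adjustment of the exponent. I would state the conclusion exactly as in Proposition~\ref{prop:complexity} and note that it already follows from the discussion above, so the ``proof'' is really just a one-line pointer to the reduction plus the observation that costs only decrease when passing to the subarborescence.
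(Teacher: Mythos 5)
Your proposal is correct and follows exactly the route the paper takes: Proposition~\ref{prop:complexity} is stated as an immediate consequence of the cost-preserving reduction from $m$-DST to \ftp (with $k=m-1$ and vulnerable zero-cost terminal arcs) combined with the Halperin--Krauthgamer hardness bound, which is precisely your argument. The extra care you take with the equivalence of feasible solutions and the $\log^{2-\epsilon}(m-1)$ versus $\log^{2-\epsilon}(m)$ adjustment is fine and consistent with what the paper leaves implicit.
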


%\begin{remark}\label{rem:alg:mneart} 
	The reduction above can be easily adapted to obtain a
	$k^\epsilon$-approximation algorithm for \ftp for the special case that
	$M \subseteq \{e\in A: t\in e\}$ using the algorithm of Charikar et.
	al.~\cite{charikar1999approximation}.
%\end{remark}
In fact, any approximation algorithm with factor $\rho(k)$ for \ftp is an
approximation algorithm with factor $\rho(m)$ for $m$-DST. The best known
algorithm for the latter problem is due to Charikar et.
al.~\cite{charikar1999approximation}.  Their result is an approximation scheme
attaining the approximation factor of $m^\epsilon$ for every $\epsilon > 0$.

%In light of the previous discussion, it seems that obtaining an approximation algorithm for \ftp
%with a poly-logarithmic approximation guarantee is a challenging task. In fact, although we have
%no concrete result to support this possibility, such an algorithm might not exist. 
% To support the impossiblity of the latter kind of algorithm 
We end this discussion by showing that \ftp contains a more general Steiner
problem, which we call \textit{Simultaneous Directed $m$-Steiner Tree}
($m$-SDST),  as a special case.  An input to $m$-SDST specifies two arc-weighted
graphs $D_1 = (V,A_1,w_1)$ and $D_2=(V,E_2,w_2)$ on the same set of vertices
$V$, a source $s$, a set of terminals $T\subseteq V$ and an integer $m \leq
|T|$. The goal is to find a subset $U\subseteq T$ of $m$ terminals and two
arboresences $S_1\subseteq E_1$ and $S_2\subseteq A_2$ connecting $s$ to $U$ in
the respective graphs, so as to minimize $w_1(S_1) + w_2(S_2)$. $m$-SDST is
seen to be a special case of \ftp via the following reduction. Given an
instance of $m$-SDST, construct a graph $D = (V',A)$ as follows.  Take a
disjoint union of $D_1$ and $D_2$, where the direction of every arc in $D_2$ is
reversed. Connect every copy of a terminal $u\in T$ in $D_1$ to its
corresponding copy in $D_2$ with an additional zero-cost arc $e_u$. Finally,
set $M = \{e_u : u\in T\}$ and $k = m-1$. A fault-tolerant path connecting the
copy of $s$ in $D_1$ to the copy of $s$ in $D_2$ corresponds to a feasible
solution to the $m$-SDST instance with the same cost, and vice-versa.

\subsection{Polynomial Special Cases}\label{sec:exact}

This section is concerned with tractable restrictions of \ftp. Concretely we give polynomial
algorithms for arbitrary graphs and $k=1$ and directed acyclic graphs (DAGs) and fixed $k$
 and for Series-parallel graphs. 
We denote the problem \ftp restricted to instances with some 
fixed $k$ by $k$-\ftp.

\paragraph*{$1$-\ftp}\label{subsec:kisone}

We start by giving the following structural insight.

\begin{lemma}[$\ast$]\label{lem:disjointpaths}
 Let $X^*$ be an optimal solution to \ftp on the instance $(D,M,k)$.
The minimum
$s$-$t$ flow in the graph $(V, X^*)$ capacitated by the vector $c_e = 1$ if $e\in M$ and $c_e = \infty$,
otherwise is at least $k+1$.
\end{lemma}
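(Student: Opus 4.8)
The plan is to reduce the statement to a statement about cuts and then invoke the max-flow min-cut theorem. By that theorem, the value of a maximum $s$-$t$ flow in $(V,X^*)$ with the capacities $c$ equals the minimum of $\sum_{e\in\delta(S)}c_e$ over all $S\subseteq V$ with $s\in S$ and $t\notin S$, where $\delta(S)$ denotes the set of arcs of $X^*$ with tail in $S$ and head in $V\setminus S$. So I would instead prove the equivalent statement that every such cut has $c$-capacity at least $k+1$. Observe first that if $\delta(S)$ contains a safe arc then $\sum_{e\in\delta(S)}c_e=\infty\ge k+1$ and there is nothing to show; hence it remains to handle cuts $\delta(S)$ all of whose arcs are vulnerable, for which $\sum_{e\in\delta(S)}c_e=|\delta(S)|$.

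I would then argue by contradiction. Suppose some $S$ with $s\in S$, $t\notin S$ has $\delta(S)\subseteq M$ and $|\delta(S)|\le k$. Put $F:=\delta(S)$; then $F\subseteq M$ and $|F|\le k$, so feasibility of $X^*$ yields a directed $s$-$t$ path $P$ contained in $X^*\setminus F$. Since $P$ runs from $s\in S$ to $t\in V\setminus S$, it must use at least one arc of $X^*$ with tail in $S$ and head outside $S$, i.e.\ an arc of $\delta(S)=F$ --- contradicting $P\subseteq X^*\setminus F$. Thus every $s$-$t$ cut of $(V,X^*)$ has $c$-capacity at least $k+1$, and by max-flow min-cut the same bound holds for the $s$-$t$ flow value, which is the assertion of the lemma.

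I do not expect a genuine obstacle here: the argument is essentially a single application of max-flow min-cut together with the definition of feasibility for \ftp. The only points requiring a little care are the role of the infinite capacities (which is precisely what lets us discard every cut except those consisting solely of vulnerable arcs) and the elementary observation that a directed path from inside $S$ to outside $S$ must cross the directed cut $\delta(S)$. It is worth noting that only feasibility of $X^*$, not its optimality, is used; conversely, this cut/flow characterization of feasibility is the natural starting point for the $k=1$ algorithm.
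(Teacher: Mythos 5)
Your argument is correct and is essentially identical to the paper's proof: both reduce the claim via max-flow min-cut to showing every $s$-$t$ cut has capacity at least $k+1$, observe that the infinite capacities dispose of cuts containing safe arcs, and derive a contradiction by treating a small all-vulnerable cut as a failure scenario $F$. Your write-up merely spells out the final contradiction (a surviving $s$-$t$ path would have to cross $\delta(S)=F$) a bit more explicitly than the paper does.
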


An $s$-$t$ \textit{bipath} in the graph $D =(V,A)$ is a union of two $s$-$t$ paths $P_1, P_2 \subseteq A$
In the context of $1$-\ftp, we call a bipath $Q=P_1 \cup P_2$ \textit{robust}, if it holds that $P_1 \cap P_2 \cap M = \emptyset$.
Note that every robust $s$-$t$ bipath $Q$ in $G$ is a feasible solution to the $1$-\ftp instance.
Indeed, consider any vulnerable edge $e\in M$. Since $e\not\in P_1\cap P_2$ it holds that either $P_1 \subseteq Q-e$,
or $P_2 \subseteq Q-e$. It follows that $Q-e$ contains some $s$-$t$ path.
The next lemma shows that every feasible solution of the $1$-\ftp instance contains a robust $s$-$t$ bipath.

\begin{lemma}[$\ast$]\label{lem:robustbipath_contained}
Every feasible solution $S^*$ to an $1$-\ftp instance contains a robust $s$-$t$ bipath.
\end{lemma}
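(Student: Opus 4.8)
The plan is to read the robust bipath off a flow decomposition of an integral $s$-$t$ flow of value two supported on $S^*$. The key observation is that a unit flow-path can traverse a given vulnerable edge at most once, so two paths arising from such a decomposition automatically satisfy $P_1 \cap P_2 \cap M = \emptyset$ — which is exactly what robustness demands.

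Concretely, I would first invoke Lemma~\ref{lem:disjointpaths} with $k=1$: since $S^*$ is feasible for the $1$-\ftp instance, the graph $(V, S^*)$ capacitated by $c_e = 1$ for $e \in M$ and $c_e = \infty$ for $e \in \overline{M}$ admits an $s$-$t$ flow of value at least $2$. By max-flow/min-cut together with the integrality of maximum flows under integral (or $+\infty$) capacities, there is an \emph{integral} $s$-$t$ flow $f$ on $(V, S^*)$ with $f(e) \le c_e$ for every edge $e$ and value at least $2$. Next I would apply the flow-decomposition theorem to $f$, writing it as a sum of unit flows along simple $s$-$t$ paths and along directed cycles, each using only edges of $S^*$; since $f$ has value at least $2$, the decomposition contains at least two $s$-$t$ paths, and I would pick any two of them, $P_1$ and $P_2$. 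By construction $Q := P_1 \cup P_2 \subseteq S^*$ is an $s$-$t$ bipath.

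Finally I would check robustness. Fix any vulnerable edge $e \in M$. Each path or cycle in the decomposition that uses $e$ contributes exactly one unit to $f(e)$, and $f(e) \le c_e = 1$ forces at most one member of the decomposition to use $e$; in particular $e$ lies on at most one of $P_1, P_2$, so $e \notin P_1 \cap P_2$. Since $e \in M$ was arbitrary, $P_1 \cap P_2 \cap M = \emptyset$, i.e.\ $Q$ is a robust $s$-$t$ bipath contained in $S^*$.

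I do not expect a real obstacle here — the argument is a short reduction to flow decomposition. The only points needing a little care are that the capacity bound $c_e = 1$ is imposed only on vulnerable edges (so the $\infty$-capacity safe edges never interfere with the counting argument), and the degenerate possibility that the two chosen decomposition paths coincide; but two copies of a common simple path can appear only if $f \ge 2$ on each of its edges, hence that path contains no vulnerable edge, and $P_1 \cap P_2 \cap M = \emptyset$ still holds.
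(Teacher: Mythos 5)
Your proof is correct, but it takes a different route from the paper's. The paper proves this lemma structurally: it passes to an inclusion-wise minimal feasible solution, observes that every bridge of $(V,S^*)$ separating $s$ from $t$ must be safe, and then applies a Menger-type argument to obtain two edge-disjoint subpaths between consecutive bridge endpoints; the robust bipath is assembled from the bridges together with these path pairs. You instead reuse Lemma~\ref{lem:disjointpaths} (with $k=1$) to get a capacitated $s$-$t$ flow of value $2$ supported on $S^*$, pass to an integral such flow, and read $P_1,P_2$ off a flow decomposition; the unit capacity on vulnerable edges then forces $P_1\cap P_2\cap M=\emptyset$ directly. Your argument is shorter and avoids the minimality assumption and the bridge bookkeeping, and it is consistent with the flow viewpoint the paper adopts later (the ``canonical flow'' $f^S$ in Definition~\ref{def:canonicalflow}); the paper's version has the advantage of exhibiting the bridge/cut-vertex structure that is reused in the $k$-approximation algorithm. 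One small point of hygiene: Lemma~\ref{lem:disjointpaths} is stated for an \emph{optimal} solution $X^*$, whereas you apply it to an arbitrary feasible $S^*$; this is harmless because its proof uses only feasibility (a capacitated cut of value less than $k+1$ would consist solely of vulnerable edges and yield an infeasible scenario), but you should say so explicitly or re-run that one-line cut argument for $S^*$. Your handling of the degenerate case $P_1=P_2$ is also correct: such an edgewise-doubled path carries flow $2$ on every edge and hence contains no vulnerable edge.
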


We can conclude from the previous discussion and Lemma~\ref{lem:robustbipath_contained} that all
minimal feasible solutions to the $1$-\ftp instance are robust bipaths. This observations leads
to the simple algorithm, which is given in the proof of the following theorem.

\begin{theorem}[$\ast$]\label{thm:kisone}
	1-\ftp admits a polynomial-time algorithm.
\end{theorem}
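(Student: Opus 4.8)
The plan is to reduce $1$-\ftp to a single shortest-path computation in an auxiliary digraph. By Lemma~\ref{lem:robustbipath_contained} and the discussion preceding the theorem, every minimal feasible solution is a robust $s$-$t$ bipath $Q=P_1\cup P_2$ with $P_1\cap P_2\cap M=\emptyset$, so it suffices to find a minimum-weight robust bipath. First I would record the relevant structure. The common arcs $P_1\cap P_2$ are all safe, and since $P_1$ (a simple path) uses exactly one out-arc at each vertex, $P_1\cap P_2$ is a vertex-disjoint union of directed all-safe \emph{common segments}. Calling a vertex lying on both $P_1$ and $P_2$ a \emph{meeting point}, one can cut $Q$ at its meeting points into \emph{blocks}; each block joining two consecutive meeting points $u,v$ is either a common (all-safe) $u$-$v$ path or a pair of arc-disjoint $u$-$v$ paths, and the weight of $Q$ is at most the sum of the weights of its blocks (sharing arcs only helps).

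This motivates the following graph $H$ on vertex set $V$: for every ordered pair $(u,v)$ include an arc $u\to v$ of weight
\[
	\rho(u,v)\;=\;\min\bigl\{\,\mathrm{dist}_{\overline{M}}(u,v),\ \mu(u,v)\,\bigr\},
\]
where $\mathrm{dist}_{\overline{M}}(u,v)$ is the length of a shortest $u$-$v$ path in $(V,\overline{M})$ and $\mu(u,v)$ is the minimum total weight of two arc-disjoint $u$-$v$ paths. The former is a shortest-path computation, the latter a minimum-cost integral flow of value $2$ with unit arc capacities, so all $O(n^2)$ weights $\rho(u,v)$ are computable in polynomial time and are non-negative. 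The algorithm computes a shortest $s$-$t$ path $s=c_0,c_1,\dots,c_m=t$ in $H$, replaces each arc $c_i\to c_{i+1}$ by the gadget realising $\rho(c_i,c_{i+1})$ (a safe path, or two arc-disjoint paths), and returns the union $S$ of these gadgets; I claim $S$ is optimal.

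Feasibility of $S$ together with $w(S)\le\sum_i\rho(c_i,c_{i+1})$ is the easy direction: a vulnerable arc of $S$ can only lie inside an arc-disjoint-paths gadget, and there the companion path avoids it, so $S$ minus any single vulnerable arc still contains an $s$-$t$ walk through $c_1,\dots,c_{m-1}$. Hence $\opt$ is at most the length of a shortest $s$-$t$ path in $H$. For the reverse inequality I would take an optimal solution $S^\ast$, which we may assume minimal and hence a robust bipath, and peel off one block at a time. If $S^\ast$ contains two arc-disjoint $s$-$t$ paths or an all-safe $s$-$t$ path we are done immediately; otherwise Menger's theorem gives a single arc $e=(x,y)$ lying on every $s$-$t$ path of $S^\ast$, and since a flow of value $2$ (vulnerable arcs of capacity $1$, safe arcs uncapacitated) runs through $S^\ast$, that arc must be safe. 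Recursing on the $s$-$x$ and $y$-$t$ portions of $S^\ast$ then yields an $s$-$t$ path in $H$ of weight at most $w(S^\ast)$.

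The main obstacle is making this last ``peeling'' rigorous: one must verify that, once the safe bottleneck arc $e=(x,y)$ is removed, $S^\ast$ genuinely splits into a feasible $1$-\ftp instance from $s$ to $x$ and one from $y$ to $t$ on strictly smaller arc sets. Equivalently, one needs that an optimal robust bipath visits its common meeting points in the same order along $P_1$ and $P_2$, which I would establish by an uncrossing argument: swapping the tails of $P_1$ and $P_2$ at a pair of shared vertices encountered in opposite orders never increases $w(P_1\cup P_2)$ and preserves $P_1\cap P_2\cap M=\emptyset$ (passing to simple subpaths if walks arise). Everything else — correctness of the two gadget types and the polynomial running time, dominated by the $O(n^2)$ flow and shortest-path subroutines plus one shortest-path computation in $H$ — is routine.
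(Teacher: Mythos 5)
Your proposal is correct and is essentially the paper's own proof: the same auxiliary complete graph with link weights $\min\{\mathrm{dist}_{\overline{M}}(u,v),\mu(u,v)\}$, the same two gadget types, and a single shortest-path computation. The ``obstacle'' you flag at the end is not actually one: since each safe bottleneck arc produced by Menger is an $s$-$t$ cut of $(V,S^*)$, these arcs are linearly ordered along every $s$-$t$ path and the portions between consecutive ones are arc-disjoint, which is exactly the bridge decomposition underlying Lemma~\ref{lem:robustbipath_contained}, so no uncrossing argument is needed.
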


\paragraph{$k$-\ftp and Directed Acyclic Graphs}\label{subsec:dagfixedk}

Let us first consider the case of a layered graph. The generalization to a
directed acyclic graph is done via a standard transformation, which we describe later.
Recall that a layered graph 
$D = (V, A)$ is a graph with a partitioned vertex set $V = V_1 \cup \cdots\cup V_r$
and a set of edges satisfying $A \subset \bigcup_{i\in [r-1]} V_i \times V_{i+1}$.
We assume without loss of generality that $V_1 = \{s\}$ and $V_{r} = \{t\}$. For every $i\in [r-1]$ 
we let $A_i = A \cap V_i \times V_{i+1}$.

Analogously to the algorithm in the previous section, we reduce $k$-\ftp to a shortest
path problem in a larger graph. The following definition sets the stage for the
algorithm.

\begin{definition}\label{def:conf}
 An \emph{$i$-configuration} is a vector $d \in \{0,1,\cdots,k+1\}^{V_i}$ 
satisfying $\sum_{v\in V_i} d_v = k+1$. We let $supp(d) = \{v\in V_i : d_v > 0\}$.
For an $i$-configuration $d^1$ and an $(i+1)$-configuration $d^2$ we let 
$$
V(d^1, d^2) = supp(d^1) \cup supp(d^2) \,\, \text{ and } \,\, A(d^1, d^2) = A[V(d^1, d^2)].
$$

We say that an $i$-configuration $d^1$ \emph{precedes} an $(i+1)$-configuration $d^2$ if the 
following flow problem is feasible. The graph is defined as $H(d^1, d^2) = (V(d^1, d^2), A(d^1, d^2))$.
The demand vector $\nu$ and the capacity vector $c$ are given by 
\begin{equation*}
\nu_u = 
\begin{cases} 
-d^1_u    & \mbox{if }  u\in supp(d^1)\\ 
d^2_u      & \mbox{if }  u\in supp(d^2) 
\end{cases} \,\, \text{and} \,\,\,\,
c_e = 
\begin{cases} 
1    	    & \mbox{if }  e\in M\\ 
\infty      & \mbox{if } e\in E\setminus M,
\end{cases}
\end{equation*}
respectively. If $d^1$ precedes $d^2$ we say that the \emph{link} $(d^1,d^2)$ exists. Finally, the \emph{cost}
$\ell(d^1,d^2)$ of this link is set to be minimum value $w(A')$ over all $A'\subseteq A(d^1, d^2)$,
for which the previous flow problem is feasible, when restricted to the set of edges $A'$.
\end{definition}

The algorithm constructs a layered graph $\mathcal{H} = (\mathcal{V}, \mathcal{A})$ with $r$ layers 
$\mathcal{V}_1, \cdots, \mathcal{V}_r$. For every $i\in[r]$ the set of 
vertices $\mathcal{V}_i$ contains all $i$-configurations. Observe that $\mathcal{V}_1$ and $\mathcal{V}_r$ 
contain one vertex each, denoted by $c^s$ and $c^t$, respectively. The edges correspond to links between
configurations. Every edge is directed from the configuration with the lower index to the one with the higher index.
The cost is set according to Definition~\ref{def:conf}.
The following lemma provides the required observation, which immediately leads to a polynomial algorithm.

\begin{lemma}[$\ast$]
	\label{lem:label}
	Every $c^s$-$c^t$ path $P$ in $H$ corresponds to a fault-tolerant path $S$ with $w(S) \leq \ell(P)$, and vise-versa.
\end{lemma}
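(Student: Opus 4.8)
The plan is to establish a cost-preserving correspondence between $c^s$-$c^t$ paths in $\mathcal H$ and fault-tolerant $s$-$t$ paths in $D$, in both directions.

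For the forward direction, take a $c^s$-$c^t$ path $P = (c^s = d^1, d^2, \dots, d^r = c^t)$ in $\mathcal H$, where $d^i$ is an $i$-configuration. For each consecutive link $(d^i, d^{i+1})$, by Definition~\ref{def:conf} there is an edge set $A'_i \subseteq A(d^i,d^{i+1}) \subseteq A_i$ with $w(A'_i) = \ell(d^i,d^{i+1})$ on which the prescribed flow problem is feasible. Set $S = \bigcup_{i=1}^{r-1} A'_i$; since the $A_i$ are disjoint across layers, $w(S) = \sum_i w(A'_i) = \ell(P)$. It remains to check that $S$ is feasible for $k$-\ftp. The intuition is that the configuration $d^i$ records, for each vertex $v \in V_i$, how many of the $k+1$ "flow units" pass through $v$; the flow feasibility on each link guarantees these units can be routed layer by layer using each vulnerable edge at most once. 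Concretely, I would argue that the minimum $s$-$t$ flow in $(V,S)$ under capacities $c_e = 1$ for $e \in M$ and $c_e = \infty$ otherwise is at least $k+1$: one glues the per-link flows into a single $s$-$t$ flow of value $k+1$ in which every vulnerable edge carries at most one unit, hence lies in at most one of $\le k+1$ edge-disjoint (in the vulnerable edges) $s$-$t$ walks. Then for any $F \subseteq M$ with $|F| \le k$, removing $F$ destroys at most $k$ of these, so an $s$-$t$ path survives in $S \setminus F$; this is exactly the max-flow/min-cut argument underlying Lemma~\ref{lem:disjointpaths}, used here in the reverse direction.

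For the converse, let $S$ be a fault-tolerant $s$-$t$ path (feasible solution), which we may assume minimal. By Lemma~\ref{lem:disjointpaths} the minimum $s$-$t$ flow in $(V,S)$ with $c_e = 1$ on $M$ and $c_e = \infty$ otherwise is at least $k+1$; take an integral such flow $g$ of value exactly $k+1$ (integrality holds since capacities are integral or $\infty$, and we may cap at $k+1$). For each layer $i$ define the $i$-configuration $d^i_v = $ (net flow of $g$ into $v$) for $v \in V_i$; since $D$ is layered, all flow from $V_i$ goes to $V_{i+1}$, so $\sum_{v \in V_i} d^i_v = k+1$ and each $d^i_v \in \{0,\dots,k+1\}$. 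The restriction of $g$ to arcs between $V_i$ and $V_{i+1}$ is a feasible flow for the problem defining the link $(d^i, d^{i+1})$, so every link exists, and $P = (d^1,\dots,d^r)$ is a $c^s$-$c^t$ path in $\mathcal H$. Finally $\ell(d^i,d^{i+1}) \le w(S \cap A_i)$ because $S \cap A_i$ is one admissible choice of $A'$ in the definition of the link cost; summing over $i$ gives $\ell(P) \le w(S)$, as required.

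The main obstacle is the first direction: showing that gluing the link-wise flows yields a genuinely fault-tolerant subgraph, i.e.\ that "enough flow units survive the deletion of any $k$ vulnerable edges." The subtlety is that a single vulnerable edge could a priori be reused by several of the routed units if one is not careful; the configuration bound $d_v \le k+1$ and the capacity-$1$ constraint on $M$ in each link's flow problem are precisely what rule this out, but making the global flow-conservation argument across all $r$ layers clean — rather than hand-waving "the flows concatenate" — is where the real work lies. Everything else is bookkeeping about layered graphs and the integrality of min-cost flow.
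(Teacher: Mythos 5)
Your proposal is correct and follows essentially the same route as the paper: both directions hinge on extracting an integral $(k{+}1)$-flow with unit capacities on $M$ (via Lemma~\ref{lem:disjointpaths} and max-flow/min-cut), reading off the configurations from the per-layer flow values, and in the reverse direction gluing the link-wise flows using the disjointness of the layers and the matching supplies/demands $d^i_v$ at each intermediate vertex. The paper phrases the $S\to P$ direction via a path decomposition of the induced flow rather than net vertex throughput, but this is only a cosmetic difference.
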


Finally, we observe that the number of configurations is bounded by $O(n^{k+1})$, which implies that
$k$-\ftp can be solved in polynomial time on layered graphs. 

To obtain the same result for directed
acyclic graphs we perform the following transformation of the graph. Let $v_1, \cdots, v_n$
be a topological sorting of the vertices in $D$. Replace every edge $e = v_iv_j$ ($i<j$) with a path 
$p_e = v_i, u^e_{i+1}, \cdots, u^e_{j-1}, v_j$ of length $j-i+1$ by subdividing it sufficiently many times. Set
the cost of the first edge on the path to $w'(v_iu^e_{i+1}) = w(v_iv_j)$ and set the costs of all other edges on the path
to zero. In addition, create a new set of faulty edges $M'$, which contains all edges in a path $p_e$ if $e\in M$.
It is straightforward to see that the new instance of \ftp is equivalent to the original one, while the obtained
graph after the transformation is layered. We summarize the result as follows.

\begin{theorem}\label{thm:dagpoly}
There is a polynomial algorithm for $k$-\ftp restricted to instances with a directed acyclic graph.
\end{theorem}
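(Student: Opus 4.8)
\textbf{Proof proposal for Theorem~\ref{thm:dagpoly}.}
The plan is to reduce $k$-\ftp on a DAG, in two stages, to a single shortest-path computation. First I would dispose of the layered case using the auxiliary graph of Definition~\ref{def:conf}, and then I would lift this to arbitrary DAGs by invoking the subdivision transformation already described just before the statement.

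For the layered case, the algorithm is to build $\mathcal H=(\mathcal V,\mathcal A)$ with $\mathcal V_i$ the set of all $i$-configurations, an arc from $d^1\in\mathcal V_i$ to $d^2\in\mathcal V_{i+1}$ of weight $\ell(d^1,d^2)$ whenever the link $(d^1,d^2)$ exists, and then to output a shortest $c^s$–$c^t$ path. Correctness is precisely Lemma~\ref{lem:label}: a shortest path $P$ yields a feasible solution $S$ with $w(S)\le \ell(P)$, and conversely an optimal solution $S$ (which, by Lemma~\ref{lem:disjointpaths}, carries a capacitated $s$-$t$ flow of value $k+1$) induces a $c^s$–$c^t$ path of cost at most $w(S)$; hence the shortest-path value equals $\opt$. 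It then remains only to check that $\mathcal H$ has polynomial size and that its arc weights are computable in polynomial time. The number of $i$-configurations is the number of nonnegative integer solutions of $\sum_{v\in V_i}d_v=k+1$, i.e. $\binom{|V_i|+k}{k+1}=O(n^{k+1})$, so $\mathcal H$ has $O(r\,n^{k+1})$ vertices and $O(r\,n^{2(k+1)})$ arcs, which is polynomial for fixed $k$. Each cost $\ell(d^1,d^2)$ is the optimum of a flow problem on $H(d^1,d^2)$ in which one pays $w_e$ only for edges in the support; I would linearize this fixed-charge objective in the standard way by replacing each safe edge $e$ with $k+1$ parallel unit-capacity copies, the first of cost $w_e$ and the rest of cost $0$ (legitimate since the total demand $k+1$ bounds how much flow can traverse $e$), turning the computation of $\ell(d^1,d^2)$, and the existence test for the link, into one minimum-cost flow computation. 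Thus the layered algorithm runs in polynomial time.

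For a general DAG I would apply the transformation stated above: fix a topological order $v_1,\dots,v_n$, replace every arc $e=v_iv_j$ with the path $p_e=v_i,u^e_{i+1},\dots,u^e_{j-1},v_j$, assign cost $w(e)$ to its first arc and $0$ to the rest, and place every arc of $p_e$ in $M'$ iff $e\in M$; putting $v_i$ and every $u^e_i$ in layer $i$ makes the resulting graph layered, and applying the layered algorithm to it finishes the proof. The hard part of this last stage is verifying that the two instances are genuinely equivalent: one has to argue that it suffices to consider solutions of the canonical form $S'=\bigcup_{e\in S}p_e$ (extra interior arcs that do not complete a path $p_e$ neither help nor hurt), that $w'(S')=w(S)$, and — the delicate point — that an adversary confined to at most $k$ faults in $M'$ gains nothing by scattering them over the zero-cost interior arcs of several paths $p_e$ rather than spending one vulnerable arc per path. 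Concretely, each fault arc lies in exactly one $p_e$, so at most $k$ of the paths $p_e$ can be broken, which matches the original bound $|F|\le k$; pushing this bijection in both directions (mapping $F'$ to $F=\{e\in M:p_e\cap F'\neq\emptyset\}$ and, conversely, $F$ to one chosen arc of each $p_e$, $e\in F$) shows ``$S'-F'$ has an $s$-$t$ path for all such $F'$'' is equivalent to ``$S-F$ has an $s$-$t$ path for all such $F$''. Everything else — the size bound on $\mathcal H$, the linearization of the link costs, and the appeal to Lemma~\ref{lem:label} — is routine bookkeeping.
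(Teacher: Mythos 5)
Your overall route is the paper's: handle the layered case with the configuration graph $\mathcal H$ of Definition~\ref{def:conf} plus Lemma~\ref{lem:label}, bound the number of configurations by $O(n^{k+1})$, and reduce a general DAG to a layered graph by the subdivision gadget. Your verification of the DAG-to-layered equivalence (canonical solutions $\bigcup_{e\in S}p_e$ and the bijection between fault sets $F'$ and $F$) is a correct, more explicit version of what the paper dismisses as straightforward. The problem is your computation of the link costs.

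The quantity $\ell(d^1,d^2)$ is a \emph{fixed-charge} objective: the minimum of $w(A')$ over edge sets $A'$ whose \emph{support} admits the required flow, i.e.\ you pay $w_e$ once as soon as $e$ carries any flow. Your gadget --- $k+1$ parallel unit-capacity copies of a safe edge, the first of cost $w_e$ and the rest of cost $0$ --- does not model this: a minimum-cost flow routes through the zero-cost copies first, so an edge carrying $j\le k$ units is charged nothing, and $w_e$ is paid only when $e$ is saturated at $k+1$. (Fixed charges are concave, and parallel copies in a min-cost flow can only encode convex per-edge costs.) Concretely, with $k=1$, a supply node $u$ with $d^1_u=2$ and two demand nodes $v,w$ with $d^2_v=d^2_w=1$ joined by two safe edges of weight $1$ has $\ell(d^1,d^2)=2$, while your gadget reports $0$. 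This breaks the ``vice-versa'' direction of Lemma~\ref{lem:label}: the edge set $R_i$ realizing a link can cost far more than the computed link weight, so the shortest path in $\mathcal H$ no longer certifies optimality of the returned solution (it only yields a lower bound). The repair is to use the bounded demand differently: an integral feasible flow of total value $k+1$ with an inclusion-minimal support uses at most $k+1$ edges, so one can enumerate all candidate supports $A'\subseteq A(d^1,d^2)$ of size at most $k+1$ and test feasibility of each, which is polynomial for fixed $k$; this computes $\ell(d^1,d^2)$ exactly and also decides existence of the link. With that substitution the rest of your argument goes through.
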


\paragraph{Series-Parallel Graphs}\label{subsec:srp}

Recall that a graph is called \emph{series-parallel (SRP)} with terminal $s$ and $t$ if 
it can be composed from a collection of disjoint edges using the \emph{series} and \emph{parallel} compositions.
The series composition of two SRP graphs with terminals $s$, $t$ and $s',t'$ respectively, takes
the disjoint union of the two graphs, and identifies $t$ with $s'$. The parallel composition 
takes the disjoint union of the two graphs and identifies $s$ with $s'$ and $t$ with $t'$.
Given a SRP graph it is easy to obtain the aforementioned decomposition.

The algorithm we present has linear running time whenever the robustness parameter $k$ is fixed.
The algorithm is given as Algorithm~\ref{alg:ercc:sircc:srp}. 
In fact, the algorithms computes the optimal solutions $S_{k'}$ for all parameters
$0 \leq k' \leq k$. The symbol $\perp$ is returned if the problem is infeasible.

\begin{algorithm}
\caption{\textbf{: \ftp-SeriesParallel($G,s,t,k$)}}\label{alg:ercc:sircc:srp}
\begin{algorithmic}[1]
\REQUIRE{$G=(V,E)$ a series-parallel graph, $s,t\in V$ and $M\subset E$, $k\in \mathbb{Z}_{\geq 0}$.}
\ENSURE{Optimal solution to \ftp for parameters $0,1,\cdots, k$.}
\IF{$E = \{e\} \wedge e\in M$}
  \STATE Return $(\{e\}, \perp, \cdots, \perp)$
\ENDIF
\IF{$E = \{e\} \wedge e\not\in M$}
  \STATE Return $(\{e\}, \cdots, \{e\})$
\ENDIF

\vspace{2mm}

$\Rightarrow$ $G$ is a composition of $H_1, H_2$.
\vspace{2mm}

\STATE $(S_0^1, \cdots, S_k^1) \leftarrow $ \textbf{\ftp-SeriesParallel}$(H_1, M\cap E[H_1], k)$
\STATE $(S_0^2, \cdots, S_k^2) \leftarrow $ \textbf{\ftp-SeriesParallel}$(H_2, M\cap E[H_2], k)$

\IF{$G$ \text{is a series composition of} $H_1, H_2$}

\FOR{$i=0,\cdots,k$}
  \IF{$S_i^1 = \perp \vee \, S_i^2 = \perp$}
    \STATE $S_i \leftarrow \perp$
  \ELSE 
    \STATE $S_i \leftarrow S_i^1 \cup S_i^2$
  \ENDIF
\ENDFOR

\ENDIF

\IF{$G$ \text{is a parallel composition of} $H_1, H_2$}
  \STATE $m_1 \leftarrow \max\{i : S_i^1 \neq \perp\}$
  \STATE $m_2 \leftarrow \max\{i : S_i^2 \neq \perp\}$

  \FOR{$i=0,\cdots,k$}
    \IF{$i > m_1 + m_2 + 1$} 
      \STATE $S_i \leftarrow \perp$
    \ELSE 
      \STATE $r \leftarrow argmin_{-1\leq j\leq i}\{w(S_j^1) + w(S_{i-j-1}^2)\}$ \hspace{20mm} $// \, S^1_{-1} = S^2_{-1} := \emptyset$
      \STATE $S_i \leftarrow S_r^1 \cup S_{i-r-1}^2$
    \ENDIF
  \ENDFOR
\ENDIF

\STATE Return $(S_0, \cdots, S_k)$
\end{algorithmic}
\end{algorithm}

\begin{theorem}[$\ast$]\label{thm:fastalgSRP}
Algorithm~\ref{alg:ercc:sircc:srp} returns an optimal solution to the \ftp problem on SRP
graphs. The running time of Algorithm~\ref{alg:ercc:sircc:srp} is $O(nk)$.
\end{theorem}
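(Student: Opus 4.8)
The plan is to prove, by structural induction over a series--parallel decomposition of $G$, the following invariant: for every subgraph $H$ appearing in the decomposition, Algorithm~\ref{alg:ercc:sircc:srp} applied to $(H, M\cap E[H], k)$ returns a tuple whose $i$-th entry is an optimal solution to $i$-\ftp on $H$ (with vulnerable set $M\cap E[H]$) if that problem is feasible, and $\perp$ otherwise, for every $0\le i\le k$; instantiating this for $H=G$ and reading off the $k$-th entry yields the theorem. Throughout I adopt the conventions $S_{-1}:=\emptyset$ and $w(\perp):=+\infty$, and I use the \emph{monotonicity} observation that any feasible $i$-\ftp solution is also a feasible $(i-1)$-\ftp solution, so that on a fixed graph the feasible parameters form a down-closed interval $\{0,\dots,m\}$ and the optimal value is non-decreasing in $i$ --- this is what makes the thresholds $m_1,m_2$ in the algorithm meaningful. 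The base case is the one-edge graph with edge $e=st$: if $e\notin M$ no fault is possible and $\{e\}$ is feasible and optimal for every $i$; if $e\in M$ then $\{e\}$ is the unique, hence optimal, $0$-\ftp solution, while for $i\ge1$ the fault $F=\{e\}$ disconnects $s$ from $t$ and $i$-\ftp is infeasible --- which is exactly what the first two branches of the algorithm return.

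For the series composition, let $u$ be the vertex identified by the composition. Since $u$ separates $s$ from $t$ in $G$, every $s$-$t$ path of $G$ splits into an $s$-$u$ path in $H_1$ followed by a $u$-$t$ path in $H_2$. I would first show that $S\subseteq E[G]$ is feasible for $i$-\ftp on $G$ if and only if $S\cap E[H_1]$ is feasible for $i$-\ftp on $H_1$ and $S\cap E[H_2]$ is feasible for $i$-\ftp on $H_2$: for the ``if'' direction one decomposes a fault set $F$ into $F\cap E[H_1]$ and $F\cap E[H_2]$, each of size at most $i$, and concatenates the two surviving subpaths; for the ``only if'' direction one places all faults inside a single $H_j$. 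Since $H_1$ and $H_2$ are edge-disjoint (so costs add) and feasibility decouples, the optimal $i$-\ftp value on $G$ is the sum of the optimal values on $H_1$ and $H_2$, realised by $S_i^1\cup S_i^2$, and $i$-\ftp on $G$ is infeasible iff one of the parts is; this is exactly the series branch of the algorithm.

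The parallel composition is the crux. Here $H_1$ and $H_2$ share only $s$ and $t$, hence every $s$-$t$ path of $G$ lies entirely within $H_1$ or entirely within $H_2$. For a candidate solution $S$, write $S_j:=S\cap E[H_j]$ and let $\tau_j$ be the largest $a\ge0$ for which $S_j$ is a feasible $a$-\ftp solution on $H_j$, with $\tau_j:=-1$ if $S_j$ has no $s$-$t$ path. The key lemma to establish is: $S$ is feasible for $k'$-\ftp on $G$ if and only if $\tau_1+\tau_2\ge k'-1$. For ``$\Leftarrow$'', a fault set $F$ with $|F|\le k'$ cannot simultaneously satisfy $|F\cap E[H_1]|\ge\tau_1+1$ and $|F\cap E[H_2]|\ge\tau_2+1$ (that would force $|F|\ge\tau_1+\tau_2+2>k'$), so at least one side retains an $s$-$t$ path. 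For ``$\Rightarrow$'', if $\tau_1+\tau_2\le k'-2$ then both $\tau_j$ are finite and, since $S_j$ is feasible for $\tau_j$-\ftp but not for $(\tau_j+1)$-\ftp, there is a set $F_j\subseteq M\cap E[H_j]$ of exactly $\tau_j+1$ edges whose removal destroys every $s$-$t$ path of $S_j$; then $F_1\cup F_2$ has size $\tau_1+\tau_2+2\le k'$ and destroys every $s$-$t$ path of $S$. Combined with the induction hypothesis --- the cheapest $S_j$ realising $\tau_j\ge a$ costs $w(S_a^j)$, with $w(S_{-1}^j)=0$ --- and using monotonicity of the cost vectors to restrict to splits with $a+b=k'-1$, this yields that the optimal $k'$-\ftp value on $G$ equals $\min_{-1\le j\le k'}\{w(S_j^1)+w(S_{k'-j-1}^2)\}$, which is $+\infty$ (infeasibility) precisely when $k'>m_1+m_2+1$, the realising edge set being $S_r^1\cup S_{k'-r-1}^2$; this matches the parallel branch of the algorithm. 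I expect this lemma, together with the careful handling of the $\perp$ and $\tau_j=-1$ boundary cases, to be the main obstacle; the remaining steps are routine.

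For the running time, a series--parallel decomposition tree of $G$ has $O(n)$ nodes, and each node is processed with $O(k)$ arithmetic operations on the stored cost vectors; at a parallel node the $k+1$ minimisations are computed in $O(k)$ total time by exploiting that these cost vectors are convex --- a property preserved under the series operation (a sum of convex functions) and the parallel operation (an infimal convolution). A single top-down pass over the tree then materialises the $k+1$ returned edge sets, giving an overall running time of $O(nk)$.
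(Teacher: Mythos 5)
Your correctness argument follows the paper's proof essentially step for step: structural induction over the decomposition, decoupling of feasibility across a series composition via the identified cut vertex, and, for the parallel composition, the characterization that $S$ is feasible for parameter $k'$ if and only if $\tau_1+\tau_2\ge k'-1$ (the paper's condition $i\le n_1+n_2+1$), established by the same pair of cut/fault-set arguments. The one place you genuinely diverge is the running-time analysis: the paper merely asserts the recurrence $T(m,k)\le T(m',k)+T(m-m',k)+O(k)$, even though the parallel branch as written performs an argmin over $j$ for each of the $k+1$ values of $i$, i.e.\ $O(k^2)$ work per parallel node. Your observation that the optimal-value vectors are convex in the parameter (convexity being preserved under sums for series nodes and under the shifted infimal convolution for parallel nodes), so that all $k+1$ minimizations at a parallel node can be done in $O(k)$ total time by merging increment sequences, is precisely what is needed to justify the claimed $O(nk)$ bound and fills a gap the paper glosses over.
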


\subsection{Integrality Gap and Approximation Algorithms}\label{sec:intgap}

In this section we study the natural fractional relaxation of \ftp. We prove a tight
bound on the integrality gap of this relaxation. This results also suggests a simple
approximation algorithm for \ftp with ratio $k+1$. We later combine this algorithm 
with the algorithm for $1$-\ftp to obtain a $k$-approximation algorithm.

\paragraph{Fractional \ftp and Integrality Gap}

Let us start by introducing the fractional relaxation of \ftp, which we denote
by FRAC-FTP. The input to FRAC-FTP is identical to the input to \ftp. The goal
in FRAC-FTP is to find a capacity vector $x:A \rightarrow [0,1]$ of minimum
cost $w(x) = \sum_{e\in A} w_e x_e$ such that for every $F\subseteq M$ of size
at most $k$, the maximum $s$-$t$ flow in $D-F$, capacitated by $x$ is at least
one.  Note that by the Max-Flow Min-Cut Theorem, the latter condition is
equivalent to requiring that the minimum $s$-$t$ cut in $D-F$ has capacity of
at least one. We will use this fact in the proof of the main theorem in this
section.

Observe that by requiring $x\in \{0,1\}^E$ we obtain \ftp, hence FRAC-FTP is indeed a fractional relaxation
of \ftp.

In the following theorem by 'integrality gap' we mean the maximum ratio between the optimal solution
value to an \ftp instance, and the optimal value of the corresponding FRAC-FTP instance.

\begin{theorem}[$\ast$]\label{thm:integralitygap}
	The integrality gap of \ftp is bounded by $k+1$. Furthermore, there exists an infinite family of
	instances of \ftp with integrality gap arbitrarily close to $k+1$.
\end{theorem}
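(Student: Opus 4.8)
The plan is to prove the two halves of Theorem~\ref{thm:integralitygap} separately: first the upper bound $\mathrm{OPT}_{\mathrm{FTP}} \leq (k+1)\cdot\mathrm{OPT}_{\mathrm{FRAC}}$, and then the construction of a family of instances witnessing that $k+1$ is tight.

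\textbf{Upper bound.} Let $x$ be an optimal fractional solution, so for every $F \subseteq M$ with $|F| \leq k$, the minimum $s$-$t$ cut in $D-F$ under capacities $x$ has value at least one. First I would argue that it suffices to round $x$ to the integral vector $\bar{x}$ defined by $\bar{x}_e = 1$ if $x_e > 0$ and $\bar{x}_e = 0$ otherwise — except this is too lossy, since the support of $x$ could be cheap-but-large. Instead, the right move is a scaling argument: consider $y = \min\{1, (k+1)x\}$ (coordinatewise). The cost satisfies $w(y) \leq (k+1)w(x)$. I claim $S = \{e : y_e > 0\} = \mathrm{supp}(x)$ is a feasible \ftp solution, and since all its edges have $\bar x_e=1$, $w(S) \le w(y) \le (k+1)w(x)$. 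To verify feasibility, fix $F \subseteq M$, $|F| \leq k$, and any $s$-$t$ cut $\delta^+(U)$ in $D - F$. We must show $\delta^+(U)$ contains an edge of $S$, i.e. some edge $e \in \delta^+(U)\setminus F$ with $x_e > 0$. Suppose not: every edge of $\delta^+(U)$ with positive $x$-value lies in $F$; there are at most $k$ such edges, say with values $x_{e_1},\dots,x_{e_j}$, $j \le k$. Now consider the cut in $D - F'$ where $F' \subseteq F$ is chosen to knock out the $k$ (or fewer) heaviest $x$-valued edges among these — actually since $j \le k$ we may simply take $F' = \{e_1,\dots,e_j\}$, and then $\delta^+(U)$ in $D - F'$ carries zero $x$-capacity, so the min cut in $D-F'$ is zero, contradicting feasibility of $x$ (as $|F'| \le k$). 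Hence $S$ is feasible. (The factor $k+1$ rather than $k$ is slack here; the sharper statement comes from a more careful accounting, but the clean argument above already gives $k+1$, matching the claimed bound. I would double-check whether the intended argument instead routes through the MCF integrality of each $D-F$ and a union/averaging over the $\binom{|M|}{\le k}$ cut constraints — but the direct cut argument is cleanest.)

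\textbf{Tightness.} For the lower bound I would build an instance where the fractional optimum can "spread" a unit of flow over many parallel safe-ish routes but the integral optimum is forced to pay $k+1$ times. A natural candidate: take $N$ internally-disjoint $s$-$t$ paths, each consisting of a single vulnerable edge of cost $1$; then fractional FTP sets $x_e = 1/(N-k)$ on each (any $k$ deletions leave $N-k$ paths carrying total capacity $1$), giving cost $N/(N-k)$, while any integral solution needs $k+1$ of these edges, cost $k+1$; the ratio $\tfrac{k+1}{N/(N-k)} = \tfrac{(k+1)(N-k)}{N} \to k+1$ as $N \to \infty$. I would need to confirm no cheaper fractional solution exists (it doesn't: every $F$ of size $k$ induces a cut of the $N-k$ surviving parallel edges, forcing $\sum_{e \notin F} x_e \ge 1$ over every $(N-k)$-subset, and minimizing $\sum x_e$ subject to all these constraints with the symmetric LP gives exactly $x_e = 1/(N-k)$) and that no cheaper integral solution exists (any feasible integral $S$ must survive all $k$-deletions, so $|S| \ge k+1$, and its cost is at least $k+1$ since each edge costs $1$; additional structure is unnecessary). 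So the family $\{I_N\}_N$ has integrality gap $\tfrac{(k+1)(N-k)}{N}$, which tends to $k+1$.

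\textbf{Main obstacle.} The tightness direction is essentially routine once the right parallel-edges gadget is identified, so the real work is the upper bound — specifically, making the rounding/scaling argument handle the interaction between the fault set $F$ (which the adversary picks after seeing $S$) and the cuts: one must argue uniformly over all $\binom{|M|}{\le k}$ scenarios that scaling $x$ by $k+1$ and taking the support yields a set hitting every surviving cut. The delicate point is that a single edge $e$ with small $x_e$ contributes to possibly many cuts, so the naive "charge each fault to a deleted unit of capacity" bookkeeping must be done at the level of cuts, not edges; phrasing it as "in the worst case the adversary removes the $\le k$ positive-capacity edges crossing a chosen cut, and feasibility of $x$ in that residual graph forces a surviving positive edge" is the clean way around it, and that is the step I would write out in full detail.
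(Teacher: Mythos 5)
Your tightness construction is exactly the paper's (parallel unit-cost vulnerable edges between $s$ and $t$, fractional value $1/(N-k)$ on each), so that half is fine. The upper bound, however, has a genuine gap: the cost inequality $w(S)\le w(y)$ is backwards. With $y_e=\min\{1,(k+1)x_e\}$ and $S=\{e: y_e>0\}=\operatorname{supp}(x)$ you have $w(S)=\sum_{e\in S}w_e\ge\sum_{e\in S}w_e y_e=w(y)$, since $y_e\le 1$; the direction you need would require $x_e\ge 1/(k+1)$ on every support edge, which fails in general. Your own tightness instance refutes the approach: there the optimal fractional solution puts $1/(N-k)$ on each of $N$ edges, so its support costs $N$, while $(k+1)\cdot\mathrm{OPT}_{\mathrm{FRAC}}=(k+1)N/(N-k)\to k+1$. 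Hence ``take the support of the fractional optimum'' is not a $(k+1)$-approximation; the support is indeed feasible (your cut argument for that is correct), but it can be arbitrarily more expensive than $(k+1)\,\mathrm{OPT}_{\mathrm{FRAC}}$.

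The missing idea is an intermediate flow step that selects a \emph{cheap} subset of the support. The paper scales essentially as you do and then shows that every $s$-$t$ cut $C$ has $y$-capacity at least $k+1$: either $C$ contains at least $k+1$ vulnerable edges with $x_e\ge 1/(k+1)$ (each contributing $y_e=1$), or those at most $k$ edges form a legal fault set $F$, and feasibility of $x$ in $D-F$ forces $\sum_{e\in C\setminus F}x_e\ge 1$, hence $\sum_{e\in C\setminus F}y_e\ge k+1$. By max-flow/min-cut a $(k+1)$-flow therefore exists under the integral capacities $c_e=1$ for $e\in M$ and $c_e=k+1$ otherwise, and an integral minimum-cost such flow $z^*$ satisfies $w(z^*)\le w(y)\le(k+1)\,\mathrm{OPT}_{\mathrm{FRAC}}$. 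Its support costs at most $w(z^*)$ (each support edge carries at least one unit of flow) and is feasible by Lemma~\ref{lem:disjointpaths}: every cut of the support contains a safe edge or at least $k+1$ vulnerable ones, so no $k$ faults can disconnect it. This flow computation is exactly what Corollary~\ref{cor:kplusone_pprox} turns into the algorithm, and without it your upper bound does not go through.
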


The proof of Theorem~\ref{thm:integralitygap} implies a simple $(k+1)$-approximation algorithm for
\ftp. This algorithm simply solves the integer minimum-cost flow problem, defined in proof of the theorem, and returns 
the set of edges corresponding to the support of an optimal integral flow $z^*$ as the solution. 
This result is summarized in the following corollary.

\begin{corollary}\label{cor:kplusone_pprox}
There is a polynomial $(k+1)$-approximation algorithm for \ftp.
\end{corollary}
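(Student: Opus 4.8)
The plan is to make precise the algorithm sketched right after Theorem~\ref{thm:integralitygap}. Given an \ftp instance $(D,M,k)$, I would set up the following auxiliary flow problem: assign capacity $c_e = 1$ to every $e \in M$ and $c_e = k+1$ to every $e \in \overline{M}$, and ask for an \emph{integral} $s$-$t$ flow $z$ of value exactly $k+1$ minimizing $\sum_{e\in A} w_e z_e$. By the integrality of the minimum-cost flow polytope this can be solved in polynomial time (e.g.\ by successive shortest paths or network simplex). If it is infeasible then, by Lemma~\ref{lem:disjointpaths}, the \ftp instance itself is infeasible and we report this; otherwise let $z^*$ be an optimal integral flow and output $S \coloneqq \{e \in A : z^*_e \geq 1\}$, the support of $z^*$.

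\textbf{Feasibility of $S$.} I would decompose the integral flow $z^*$ of value $k+1$ into $k+1$ $s$-$t$ paths (plus possibly some cycles), all contained in $S$. Since every vulnerable edge has capacity $1$, each edge of $M$ lies on at most one of these paths. Hence for any $F \subseteq M$ with $|F| \leq k$, at most $k$ of the $k+1$ paths meet $F$, so at least one of them is an $s$-$t$ path contained in $S \setminus F$. Thus $S$ is a feasible \ftp solution.

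\textbf{Approximation guarantee.} Let $X^*$ be an optimal \ftp solution and $\opt = w(X^*)$. By Lemma~\ref{lem:disjointpaths}, the graph $(V,X^*)$ with the same capacity vector admits an $s$-$t$ flow $g$ of value at least $k+1$ (truncate it to value exactly $k+1$); view $g$ as a flow in $D$. Its cost satisfies $\sum_e w_e g_e \leq \sum_{e\in M\cap X^*} w_e + (k+1)\sum_{e\in \overline{M}\cap X^*} w_e \leq (k+1)\,w(X^*) = (k+1)\opt$, using $0 \le g_e \le 1$ on vulnerable edges, $0 \le g_e \le k+1$ on safe edges (since $k+1$ is the total value), and non-negativity of the weights. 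Since $z^*$ is cost-optimal and $z^*_e \geq 1$ for every $e \in S$, we get $w(S) \leq \sum_e w_e z^*_e \leq \sum_e w_e g_e \leq (k+1)\opt$, which is the claimed bound.

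\textbf{Main obstacle.} The only genuinely delicate point is the feasibility argument: one must ensure that deleting the edges of $F$ from a value-$(k+1)$ flow really leaves an $s$-$t$ path inside the remaining support, which is exactly what the path decomposition combined with the unit capacities on $M$ provides. Everything else — polynomial solvability via minimum-cost-flow integrality, and the cost comparison with $X^*$ through Lemma~\ref{lem:disjointpaths} — is routine.
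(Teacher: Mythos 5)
Your proposal is correct and is essentially the algorithm the paper intends: solve the integral minimum-cost $(k{+}1)$-flow with capacities $1$ on $M$ and $k{+}1$ on $\overline{M}$ and output its support, with feasibility from flow decomposition and the cost bound from comparing against a $(k{+}1)$-flow supported on an optimal solution (Lemma~\ref{lem:disjointpaths}). The only (harmless) difference is that you bound the cost against the integral optimum directly, whereas the paper's proof of Theorem~\ref{thm:integralitygap} bounds it against the fractional optimum $OPT_{FRAC}\le \opt$, which gives the corollary and the integrality-gap bound simultaneously.
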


\paragraph{A $k$-Approximation Algorithm}\label{subsec:kapprox}

In this paragraph we improve the approximation algorithm from the previous paragraph.
The new algorithm can be seen as a generalization of the algorithm for 1-\ftp to arbitrary \ftp instances. The main
observation is the following. The reason why the approximation algorithm implied by Theorem~\ref{thm:integralitygap}
gives an approximation ratio of $k+1$ is that the capacity of edges in $A \setminus M$ is set to $k+1$, hence,
if the flow $z^*$ uses such edges to their full capacity, the cost incurred is $k+1$ times the cost of these
edges. This implies that the best possible lower bound on the cost $w(z^*)$ is $(k+1)OPT_{FRAC}$, where 
$OPT_{FRAC}$ denotes the optimal solution value of the corresponding FRAC-FTP instance. 
To improve the 
algorithm we observe that the edges in $z^*$, which carry a flow of $k+1$ are cut-edges in the obtained solution.

To conveniently analyze our new algorithm let us consider a certain canonical flow defined 
by minimal feasible solutions.

\begin{definition}\label{def:canonicalflow}
Consider an inclusion-wise minimal feasible solution $S\subseteq A$ of an instance $I = (D,s,t,M)$ of \ftp. 
A \emph{flow $f^S$ induced by $S$} is any integral $s$-$t$ $(k+1)$-flow in $D$ respecting the capacity vector
$$
c^S_e = 
\begin{cases} 
1    & \mbox{if }  e \in S\cap M\\ 
k+1  & \mbox{if }  e \in S \setminus M\\ 
0    & \mbox{if }   e\in A\setminus S.
\end{cases}
$$
\end{definition}

To this end consider an optimal solution $X^* \subseteq A$ to the \ftp instance and consider any corresponding 
induced flow $f^*$. Define 
$$X_{PAR} = \{e\in X^* : f^*(e) \leq k\} \,\,\, \text{and} \,\,\,  X_{BRIDGE} = \{e\in X^* : f^*(e) = k+1\}.$$
As we argued before, every edge in $X_{BRIDGE}$ must be a bridge in $H = (V,X^*)$ disconnecting $s$ and $t$.
Let $u_e$ denote the tail vertex of an edge $e\in A$. Since every edge $e\in X_{BRIDGE}$ constitutes
an $s$-$t$ cut in $H$, it follows that the vertices in $U = \{e_u : e\in X_{BRIDGE}\} \cup \{s,t\}$ can be unambiguously 
ordered according to the order in which they appear on any $s$-$t$ path in $H$, traversed from $s$ to $t$. 
Let $s = u_1, \cdots, u_q = t$ be this order. Except for $s$ and $t$, every vertex in $U$ constitutes
a cut-vertex in $H$. Divide $H$ into $q-1$ subgraphs $H^1, \cdots, H^{q-1}$ by letting $H^i = (V, Y_i)$ contain the 
union of all $u_i$-$u_{i+1}$ paths in $H$. We observe the following property.

\begin{proposition}\label{prop:decomp}
For every $i\in [q-1]$ the set $Y_i\subseteq A$ is an optimal solution to the \ftp instance $I_i = (G,u_i, u_{i+1}, M)$.
\end{proposition}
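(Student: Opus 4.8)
The plan is an exchange argument built directly on the cut-vertex decomposition of $H=(V,X^*)$ constructed just above the proposition. If some block $Y_i$ failed to be optimal for the local instance $I_i$ (the \ftp instance with terminals $u_i,u_{i+1}$ and the same vulnerable set $M$ and parameter $k$), I would splice a cheaper feasible solution of $I_i$ into $X^*$ in place of $Y_i$ and argue that the resulting edge set is still feasible for the global instance $I$; since it would also be cheaper, this contradicts optimality of $X^*$.

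The first ingredient I would isolate is a \emph{block characterization} of feasibility. Since $u_2,\dots,u_{q-1}$ are $s$-$t$ cut vertices of $H$ that occur in this fixed order on every $s$-$t$ path, the subgraphs $H^1,\dots,H^{q-1}$ are pairwise edge-disjoint, each $Y_i\subseteq X^*$, and for any edge set $F$ one has: $H-F$ contains an $s$-$t$ path if and only if $Y_i-F$ contains a $u_i$-$u_{i+1}$ path for every $i\in[q-1]$. The ``only if'' direction holds because an $s$-$t$ path of $H-F$ must visit all $u_j$ in order and therefore decomposes into segments, the $i$-th of which is a $u_i$-$u_{i+1}$ path of $H$ and hence lies in $Y_i$ by definition; the ``if'' direction is mere concatenation. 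Specializing to all $F\subseteq M$ with $|F|\le k$, feasibility of $X^*$ for $I$ immediately gives that each $Y_i$ is feasible for $I_i$, which is the easy half of the proposition.

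For optimality, I would assume toward a contradiction that some $Y_i$ admits a feasible solution $Z_i$ of $I_i$ with $w(Z_i)<w(Y_i)$, and set $X':=(X^*\setminus Y_i)\cup Z_i$. Because $Y_i\subseteq X^*$, a one-line computation yields $w(X')\le w(X^*)-w(Y_i)+w(Z_i)<w(X^*)$. For feasibility, fix $F\subseteq M$ with $|F|\le k$: block $i$ is covered by a $u_i$-$u_{i+1}$ path in $Z_i-F\subseteq X'-F$ (feasibility of $Z_i$ for $I_i$), while each other block $j$ is covered by a $u_j$-$u_{j+1}$ path in $Y_j-F$ (feasibility of $X^*$, via the block characterization above), and this path still lies in $X'-F$ since edge-disjointness of the blocks gives $Y_j\subseteq X^*\setminus Y_i\subseteq X'$. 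Concatenating these paths for $j=1,\dots,q-1$ connects $s=u_1$ to $t=u_q$ inside $X'-F$, so $X'$ is feasible for $I$; being cheaper than $X^*$, it contradicts optimality. Hence each $Y_i$ is optimal for $I_i$.

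The cost inequality and the concatenation step are routine. The step that needs genuine care is the preservation of feasibility under the exchange, and this is precisely where the edge-disjointness of $Y_1,\dots,Y_{q-1}$ is essential: removing $Y_i$ must not disturb any of the remaining blocks, and $Z_i$ may use edges anywhere in the graph without breaking the argument. The only other point to keep in mind is that a fault set relevant to $I$ — any $F\subseteq M$ with $|F|\le k$ — is simultaneously a legitimate fault set for every $I_i$, which is immediate because the local instances share the vulnerable set $M$ and the parameter $k$; thus feasibility passes freely between the global and local problems with no rescaling of $k$.
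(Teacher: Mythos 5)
Your proof is correct, and it supplies an argument that the paper itself omits: Proposition~\ref{prop:decomp} is stated in the text as a bare observation with no proof (and no appendix proof), so your exchange argument --- splice a cheaper feasible solution $Z_i$ of $I_i$ into $X^*$ in place of $Y_i$, then re-assemble an $s$-$t$ path blockwise for every fault set $F$ --- is precisely the justification the surrounding discussion presupposes. The feasibility half (segmenting an $s$-$t$ path of $X^*\setminus F$ at the cut vertices $u_1,\dots,u_q$ and noting each segment lies in the corresponding $Y_j$) and the cost computation are both handled correctly.

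The one step where your justification is thinner than the claim requires is the pairwise edge-disjointness of $Y_1,\dots,Y_{q-1}$, which you rightly identify as the crux. It does \emph{not} follow merely from $u_2,\dots,u_{q-1}$ being cut vertices that occur in a fixed order on every $s$-$t$ path: in a digraph containing arcs that cross these cuts backwards, a $u_j$-$u_{j+1}$ path with $j>i+1$ could dip back into the region before $u_{i+1}$ and reuse arcs of $Y_i$. What makes the claim true is the stronger structure established just before the proposition: each $u_{i+1}$ (for $i\leq q-2$) is the tail of an arc $e_{i+1}\in X_{BRIDGE}$ that by itself forms an $s$-$t$ cut of $H=(V,X^*)$, and $X^*$ may be taken inclusion-wise minimal, so $H$ contains no arc crossing this one-arc cut backwards (such an arc would lie on no $s$-$t$ path of $H$ and could be deleted). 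Consequently every $u_i$-$u_{i+1}$ path has all its arcs on the $s$-side of that cut (it cannot use $e_{i+1}$, whose tail is its own endpoint), while every $u_j$-$u_{j+1}$ path with $j>i$ uses only $e_{i+1}$ itself (when $j=i+1$) and arcs on the $t$-side. With that justification supplied, your splice and concatenation go through verbatim.
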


Consider some $i\in[q-1]$ and let $f^*_i$ denote the flow $f^*$, restricted to edges in $H^i$. Note that
$f^*_i$ can be viewed as a $u_i$-$u_{i+1}$ $(k+1)$-flow.
Exactly one of the following cases can occur. Either $H^i$ contains a single edge $e\in A \setminus M$, or 
$$
\max_{e\in Y_i}{f^*_i(e)} \leq k.
$$
In the former case, the edge $e$ is the shortest $u_i$-$u_{i+1}$ path in $(V,A \setminus M)$. In the latter
case we can use a slightly updated variant of the algorithm in Corollary~\ref{cor:kplusone_pprox} to obtain
a $k$-approximation of the optimal \ftp solution on instance $I_i$. Concretely, the algorithm defines 
the capacity vector
$$
c'_e = 
\begin{cases} 
k    & \mbox{if } e\not\in M \\ 
1  & \mbox{otherwise},
\end{cases}
$$
and finds an integral minimum-cost $u_i$-$u_{i+1}$ $(k+1)$-flow $Y^*$ in $D$, and returns the support $Y\subseteq A$
of the flow as the solution. The existence of the flow $f^*_i$ guarantees that $w(y^*) \leq w(f^*_i)$,
while the fact that the maximum capacity in the flow problem is bounded by $k$ gives $w(Y) \leq kw(y^*)$.
It follows that this algorithm approximates the optimal solution to the \ftp instance $I_i$ to within a factor $k$.

To describe the final algorithm it remains use the blueprint of the algorithm for 1-\ftp. 
There is only one slight difference.
Instead of finding two edge-disjoint $u$-$v$ paths, the new
algorithm solves the aforementioned flow problem. 
We summarize the main result of this section in the following theorem. The proof is omitted, as it is identical
to that of Theorem~\ref{thm:kisone}, with the exception of the preceding discussion.

\begin{theorem}
\label{thm:k-approx}
There is a polynomial $k$-approximation algorithm for \ftp.
\end{theorem}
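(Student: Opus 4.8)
The plan is to mirror the structure of the proof of Theorem~\ref{thm:kisone} (the $1$-\ftp algorithm), replacing the "find two edge-disjoint $u$-$v$ paths" subroutine by the flow-based $k$-approximation subroutine developed in the paragraph preceding the statement. Concretely, I would first recall from Lemma~\ref{lem:robustbipath_contained} and the surrounding discussion that, for $k=1$, every minimal feasible solution is a robust bipath, and the $1$-\ftp algorithm enumerates candidate "bridge vertices" and between consecutive bridge vertices either takes a single safe edge or takes two edge-disjoint paths. For general $k$ the analogue is the decomposition established in Proposition~\ref{prop:decomp}: an optimal solution $X^*$, together with an induced flow $f^*$ (Definition~\ref{def:canonicalflow}), splits at the tails of the $X_{BRIDGE}$ edges into blocks $H^1,\dots,H^{q-1}$, each of which is an optimal solution to the corresponding $u_i$-$u_{i+1}$ \ftp instance $I_i$, and in each block either $H^i$ is a single safe edge or $\max_{e \in Y_i} f^*_i(e)\le k$.

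Next I would describe the final algorithm itself. Since we do not know the true bridge vertices, we build an auxiliary graph $\mathcal{G}$ on vertex set $V$: for every ordered pair $(u,v)$ of vertices we add an arc $uv$ whose weight is the minimum of (i) the weight of the cheapest single safe $u$-$v$ arc (or $\infty$ if none exists), and (ii) the cost $w(Y)$ of the support of an integral minimum-cost $u$-$v$ $(k+1)$-flow under the capacity vector $c'_e = k$ for $e\notin M$ and $c'_e=1$ for $e\in M$. A shortest $s$-$t$ path in $\mathcal{G}$ is then computed, and the solution returned is the union of the corresponding arc-sets / flow supports along that path. Computing all $O(n^2)$ arc weights is polynomial (each is one min-cost flow computation plus a trivial minimization), and the shortest path computation is polynomial, so the overall running time is polynomial.

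The correctness argument has two directions. For feasibility: along any $s$-$t$ path in $\mathcal{G}$, each arc contributes either a safe $u$-$v$ edge (which alone tolerates any $k$ faults on that segment since it is never vulnerable) or a $(k+1)$-flow support whose min-cut under capacities $c'$ is at least $k+1$, hence removing any $k$ vulnerable edges still leaves a $u$-$v$ path in that block; concatenating the blocks yields an $s$-$t$ path surviving any $k$ faults, because a set $F$ of $k$ vulnerable edges, however distributed among the blocks, leaves at least one surviving $u_i$-$u_{i+1}$ path in each block it touches (each block individually survives $k$ faults, and in particular survives the faults landing in it). For the approximation ratio: applying the decomposition of the optimal solution $X^*$ via Proposition~\ref{prop:decomp}, the ordered bridge vertices $s=u_1,\dots,u_q=t$ form an $s$-$t$ walk in $\mathcal{G}$, and the weight of arc $u_iu_{i+1}$ in $\mathcal{G}$ is at most $k\cdot w(Y_i)$ — in the single-safe-edge case it equals $w(Y_i)$, and in the other case $w(Y)\le k\,w(f^*_i)\le k\,w(Y_i)$ by the argument already given in the excerpt (the flow $f^*_i$ is feasible for capacities $c'$, so the min-cost flow costs at most $w(f^*_i)\le w(Y_i)$, and scaling the flow down incurs a factor at most $k$). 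Summing over $i$ and using $\sum_i w(Y_i) = w(X^*) = \opt$, the shortest $s$-$t$ path in $\mathcal{G}$ has weight at most $k\cdot\opt$, so the returned solution costs at most $k\cdot\opt$.

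I expect the main subtlety to be the feasibility argument for the concatenation, specifically making precise that distributing $k$ faults across the blocks does no worse than putting all of them inside a single block: one needs that each block $H^i$ (as realized by the flow support $Y$) is feasible for the full parameter $k$, not just for the number of faults that happen to fall inside it, and then observe that a global fault set $F$ with $|F|\le k$ restricts to a fault set of size $\le k$ in each block, so each block retains a surviving path and the paths chain up. A secondary point to get right is that in the "other" case the chosen $(k+1)$-flow support is genuinely feasible for $I_i$ with parameter $k$ — this follows because its min-cut under $c'$ (which assigns capacity $1$ to vulnerable edges) is $\ge k+1$, so after deleting any $k$ vulnerable edges the min-cut is still $\ge 1$. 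Everything else is a routine transcription of the $1$-\ftp proof, which is why the excerpt omits it.
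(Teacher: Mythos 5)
Your proposal is correct and follows essentially the same route as the paper: decompose an optimal solution at the tails of the $X_{BRIDGE}$ edges via Proposition~\ref{prop:decomp}, handle each block either by a safe edge or by the support of an integral min-cost $(k+1)$-flow under capacities $k$ (safe) and $1$ (vulnerable), and recombine via a shortest-path computation over all candidate pairs exactly as in the $1$-\ftp algorithm. The feasibility and ratio arguments you give, including the min-cut justification that each flow support tolerates $k$ vulnerable failures and the bound $w(Y)\leq k\,w(f^*_i)\leq k\,w(Y_i)$, match the paper's intended (omitted) proof.
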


\section{Fault-Tolerant Flows}\label{sec:ftf}

In this section we present our results on the problem \ftf. We show that it
admits no $\log^{2-\varepsilon} \ell$-approximation under standard complexity
assumptions. We then investigate its complexity for flows of
fixed value $\ell$. Our main result is a polynomial-time algorithm for the
corresponding augmentation problem, which we use to obtain a 2-approximation
for \dorkf. Finally, we show that a polynomial-time algorithm for \dorkf
implies polynomial-time algorithms for two problems whose complexity status is
open.

\subsection{Approximation Hardness of \ftf} \label{sec:FTF:apx-hardnes}

We show that \ftf is as hard to approximate as \dsnp by using an approximation
hardness result from~\cite{robmatch} for the problem \wrma.  The problem \wrma
asks for the cheapest edge-set (assuming non-negative costs) to
add to a bipartite graph such that the resulting graph is bipartite and
contains a perfect matching after a-posteriori removing any single edge. The
idea of our reduction is similar to that of the classical reduction from the
Bipartite Maximum Matching problem to the Max $s$-$t$ Flow problem. Note that
since matchings are required to be perfect, we may assume that both parts of
the input graph have the same size. We add to the bipartite input graph $(U, W,
E$) on $n$ vertices of a \wrma instance two
terminal vertices $s$ and $t$, and connect $s$ to each vertex of $U$ as well as
each vertex of $W$ to  $t$ by an arc. Now we add all possible arcs from $U$ to
$W$, marking those as vulnerable that correspond to an edge in $E$. It is
readily observed that a fault-tolerant $n/2$-flow corresponds to a feasible
solution to the given \wrma instance (after removing all arcs incident to $s$ or
$t$). We thus obtain the following hardness result.

\begin{lemma}[$\ast$]
	\label{lemma:dorkf:hardness}
	A polynomial-time $f(\ell)$ approximation algorithm for \ftf implies a
	polynomial-time $f(n/2)$-approximation algorithm for \wrma, where $n$ is
	the number of vertices in the \wrma instance.
\end{lemma}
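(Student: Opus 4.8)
The plan is to make the informal reduction sketched in the paragraph preceding the lemma fully precise and then verify the two directions of the correspondence between feasible solutions. Given an instance $(U, W, E)$ of \wrma with $|U| = |W| = n/2$ (we may assume equal part sizes since only perfect matchings are considered), I would construct an instance of \ftf as follows: add two new vertices $s$ and $t$; add an arc $su$ for every $u \in U$ and an arc $wt$ for every $w \in W$, all with cost $0$; add an arc $uw$ for every pair $(u,w) \in U \times W$ — arcs corresponding to edges of $E$ get cost $0$ and are declared \emph{vulnerable} (placed in $M$), while the remaining arcs $uw \notin E$ get cost equal to the \wrma-cost of adding edge $uw$ and are declared safe. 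Set $\ell = n/2$. The optimization over $S \subseteq A$ in \ftf then corresponds to choosing which safe $U$-$W$ arcs to buy; the $s$-, $t$-incident arcs and the vulnerable arcs can be assumed to be in $S$ for free, since they cost nothing and can only help feasibility.

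The first direction: given a feasible \ftf solution $S$ of cost $c$, I would show that the set $E' = \{uw : uw \in S \setminus M\}$ of purchased safe arcs, viewed as edges to add to $(U,W,E)$, is a feasible \wrma solution of the same cost $c$. Fix any edge $e^*$ to be removed from $G = (U,W,E \cup E')$; if $e^* \in E$, it corresponds to a vulnerable arc $f = e^* \in M$, and feasibility of $S$ gives $\ell = n/2$ arc-disjoint $s$-$t$ paths in $S \setminus f$. Each such path has the form $s, u, w, t$ (it must use exactly one arc out of $s$ and one into $t$, and the only other arcs available go directly $U \to W$), so arc-disjointness forces the $n/2$ paths to use $n/2$ distinct vertices of $U$ and $n/2$ distinct vertices of $W$, i.e. all of them, and the middle arcs $uw$ form a perfect matching of $G - e^*$ using only arcs of $S$, hence edges of $(E \setminus \{e^*\}) \cup E'$. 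If instead $e^* \in E'$ is an added edge, remove any single vulnerable arc $f \in M$ (or if $M = \emptyset$ the argument is even easier) and run the same argument; since $e^* = uw$ with $uw \notin E$, and the perfect matching extracted avoids $f$ but may use $uw$, it is a perfect matching of $G - e^*$ only if it avoids $e^*$ — so here I instead directly invoke feasibility of $S$ against the fault $f \in M$ and note the extracted matching lies in $E \cup E'$; since deleting $e^*$ from $E'$ can only matter if the matching uses it, I pick the vulnerable fault $f$ to be an arc not equal to anything and observe the extracted perfect matching of size $n/2$ avoids $e^*$ because... this is the point that needs care (see below).

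The second direction is the converse: given a feasible \wrma solution $E'$ of cost $c$, set $S = \{su : u \in U\} \cup \{wt : w \in W\} \cup M \cup \{uw : uw \in E'\}$, of cost $c$, and check it is \ftf-feasible. For any fault $f \in M$, $f = u_0 w_0$ with $u_0 w_0 \in E$; the graph $G - u_0 w_0$ has a perfect matching $N$ (feasibility of $E'$), and $N \subseteq (E \setminus \{u_0 w_0\}) \cup E'$, so the corresponding middle arcs all lie in $S \setminus f$; prepending $s$-arcs and appending $t$-arcs to the $n/2$ matching edges yields $n/2$ arc-disjoint $s$-$t$ paths in $S \setminus f$, as required. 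Finally, combining both directions, the optimum of the \ftf instance equals the optimum of the \wrma instance, an $f(\ell)$-approximate \ftf solution yields an $f(n/2)$-approximate \wrma solution (note $\ell = n/2$), and the construction is polynomial-time.

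\textbf{Main obstacle.} The delicate point is handling faults on \emph{added} edges $e^* \in E'$ in the first direction, since \wrma requires robustness against removal of \emph{any} edge of $G = (U,W,E \cup E')$, not just the original ones, whereas \ftf only protects against faults in $M$ (the arcs corresponding to $E$). The clean fix is to observe that we need not make $E'$-arcs safe: we may equally well place \emph{all} $U$-$W$ arcs into $M$, not just those from $E$. With this choice every purchased arc is vulnerable, so a fault on an added edge $e^* \in E'$ does correspond to a fault $f \in M$, and the extracted perfect matching automatically avoids $e^*$. One then only has to re-examine the cost bookkeeping (arcs in $M$ corresponding to $E$ still have cost $0$; arcs in $M$ corresponding to non-edges of $E$ carry the \wrma insertion cost) and re-run both directions with this modified $M$; nothing else changes. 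I would present the reduction directly in this cleaner form and relegate the verification of the two directions to short paragraphs, since each is a routine path-decomposition / matching argument once the correspondence "$s,u,w,t$ path $\leftrightarrow$ matching edge $uw$" is in place.
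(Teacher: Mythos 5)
Your construction (terminals $s,t$; zero-cost arcs $sU$ and $Wt$; all $U$-$W$ arcs present, the ones from $E$ vulnerable at cost $0$, the non-edges safe at their \wrma-cost; $\ell=n/2$), the extraction of a perfect matching of $G-e^*$ from the $n/2$ arc-disjoint $s$-$t$ paths surviving a fault $f\in M$, and the converse direction are exactly the paper's argument. The trouble is your last paragraph. The ``main obstacle'' you identify --- that \wrma demands robustness against removal of \emph{added} edges $e^*\in E'$ --- does not exist: in \wrma as used here (and in~\cite{robmatch}, whose reduction from \dsnp relies on the added edges being safe backup), only the edges of the \emph{input} graph $E$ may fail a posteriori. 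Under that fault model the case $e^*\in E'$ never arises, your first version of the reduction is already complete, and it is precisely the paper's: only the arcs corresponding to $E$ are placed in $M$.

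Your proposed ``fix'' --- declaring \emph{all} $U$-$W$ arcs vulnerable --- is therefore not a cosmetic cleanup but a change of problem, and it breaks the reduction. With every purchased arc in $M$, a feasible \ftf solution must survive the failure of each arc it buys, i.e.\ it corresponds to a solution of a strictly stronger augmentation problem in which added edges may also fail. The direction you need for the approximation transfer --- that every \wrma-feasible $E'$ yields an \ftf-feasible solution of the same cost, so that $\opt$ of the \ftf instance is at most $\opt$ of the \wrma instance --- then fails: a \wrma-optimal $E'$ need not survive the loss of one of its own edges, so the modified \ftf optimum can strictly exceed (and can even be infinite while) the \wrma optimum (is finite), and the returned solution's cost can no longer be bounded by $f(n/2)\cdot\opt$ of the \wrma instance. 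Revert to $M=\{uw : uw\in E\}$; everything you wrote before the ``main obstacle'' paragraph, minus the dangling discussion of $e^*\in E'$, is then a correct and complete proof along the paper's lines.
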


We combine Proposition~\ref{lemma:dorkf:hardness} with two results
from~\cite{robmatch} and~\cite{DBLP:conf/stoc/HalperinK03} to obtain the
following approximation hardness result for \ftf.

\begin{theorem}[$\ast$]
	\label{thm:dorkf:inapprox}
	\ftf admits no polynomial-time $\log^{2- \varepsilon}(\ell)$-factor approximation algorithm for every $\varepsilon > 0$, unless $\NP \subseteq \mathsf{ZTIME}(n^{\polylog(n)})$.
\end{theorem}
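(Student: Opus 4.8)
The plan is to chain together three facts: Lemma~\ref{lemma:dorkf:hardness} above, the approximation hardness of \wrma established in~\cite{robmatch}, and the approximation hardness of \dsnp proved by Halperin and Krauthgamer~\cite{DBLP:conf/stoc/HalperinK03}. First I would recall the statement from~\cite{robmatch}: \wrma is as hard to approximate as \dsnp, in the sense that an $\alpha$-approximation for \wrma on instances with $n$ vertices yields an $\alpha$-approximation for \dsnp (up to a fixed polynomial blow-up in instance size and a matching of the parameter). Composing this with Lemma~\ref{lemma:dorkf:hardness}, a polynomial-time $f(\ell)$-approximation for \ftf gives a polynomial-time $f(n/2)$-approximation for \wrma, hence a polynomial-time approximation algorithm for \dsnp whose ratio is $f$ evaluated at a quantity polynomially related to the number of terminals of the \dsnp instance.

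Next I would plug in $f(\ell) = \log^{2-\varepsilon}(\ell)$ and check that the resulting ratio for \dsnp still contradicts the Halperin--Krauthgamer bound. Their result rules out a $\log^{2-\varepsilon'} m$-approximation for \dsnp on $m$ terminals unless $\NP \subseteq \mathsf{ZTIME}(n^{\polylog(n)})$. Since the reduction chain changes $\ell$, $n$, and $m$ only by fixed polynomial factors, we have $\log^{2-\varepsilon}(\ell) = \log^{2-\varepsilon}(\Theta(\mathrm{poly}(m))) = O(\log^{2-\varepsilon} m)$, which for any $\varepsilon>0$ is $O(\log^{2-\varepsilon'} m)$ for a slightly smaller $\varepsilon' > 0$ (the polynomial inside the logarithm only inflates the constant and the base, not the exponent). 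This would yield a $\log^{2-\varepsilon'} m$-approximation for \dsnp, contradicting~\cite{DBLP:conf/stoc/HalperinK03} under the stated assumption, so no such $f(\ell)$-approximation for \ftf can exist.

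The only delicate point is bookkeeping: one must verify that at each reduction step the relevant size parameter is polynomially (indeed, typically linearly) related to the previous one, so that passing a polynomial through a polylogarithmic function preserves the exponent $2-\varepsilon$ up to an arbitrarily small loss. Concretely, in Lemma~\ref{lemma:dorkf:hardness} we have $\ell = n/2$ exactly, and in the \wrma-to-\dsnp reduction of~\cite{robmatch} the number of \wrma vertices is polynomial in the \dsnp instance size; since the \dsnp hardness is phrased in terms of the number of terminals $m$, and $m \le |V|$, one checks that $\ell$ is bounded by a polynomial in $m$. I expect this parameter-tracking to be the main (though routine) obstacle; the rest is a direct substitution. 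I would conclude by stating the contrapositive: a $\log^{2-\varepsilon}(\ell)$-approximation for \ftf would place \dsnp in $\mathsf{ZTIME}(n^{\polylog(n)})$, completing the proof.
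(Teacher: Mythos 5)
Your proposal is correct and follows essentially the same route as the paper: chain the Halperin--Krauthgamer hardness of \dsnp through the \dsnp-to-\wrma reduction of~\cite{robmatch} and then through Lemma~\ref{lemma:dorkf:hardness}, observing that the parameters change only by constant (in fact linear) factors, which a $\log^{2-\varepsilon}$ bound absorbs. The paper does exactly this bookkeeping explicitly ($n' = 4n$ vertices in the \wrma instance, hence $\ell = 2n$), arriving at the same contradiction.
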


Note that all results presented in this section also hold for the undirected variant of \ftf.

\subsection{Approximation Algorithms}\label{sec:FTF:apx}

We first present a simple polynomial-time $(\ell+1)$-approximation algorithm
for \ftf, which is very similar to the $(k+1)$-approximation for \ftp.  The
algorithm computes (in polynomial time) a minimum-cost $s$-$t$ flow of value
$\ell+1$ on the input graph with the following capacities: each vulnerable arc
receives capacity $1$ and any other arc capacity $1 + 1/\ell$. To see that for
this choice of capacities we obtain a feasible solution, recall that the value
of any $s$-$t$ cut upper-bounds the value of any $s$-$t$ flow. Therefore, each
$s$-$t$ cut $C$ has value at least $\ell+1$, so $C$  contains either at least
$\ell$ safe arcs or at least $\ell+1$ arcs. To prove the
approximation guarantee, we show that any optimal solution to an \ftf instance
contains an $s$-$t$ flow of value $\ell+1$ and observe that we over-pay for
safe arcs by a factor of at most $(1 + 1/\ell)$. We obtain the following
result.

%\todo[inline]{Write in the intro that the proofs for $(\ast)$ can be found in the appendix.}
\begin{theorem}[$\ast$]
	\ftf admits a polynomial-time $(\ell+1)$-factor approximation algorithm.
	\label{thm:ftf:approx}
\end{theorem}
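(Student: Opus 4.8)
The plan is to prove both directions of the approximation guarantee: feasibility of the computed flow's support, and the cost bound against an optimal solution. Let $I=(D,s,t,M,\ell)$ be the given \ftf instance. Define capacities $c_e = 1$ if $e\in M$ and $c_e = 1 + 1/\ell$ if $e\in \overline{M}$. I would compute an integral-support minimum-cost $s$-$t$ flow of value $\ell+1$ with respect to $c$ (the minimum-cost flow polytope has the property that an extreme optimum is a vertex, but since capacities are not integral I instead argue about the support directly, or rescale by $\ell$ so that all capacities become integers $\ell$ and $\ell+1$ and the flow value becomes $\ell(\ell+1)$, recovering integrality). Let $S\subseteq A$ be the support of this flow; this is the returned solution.

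First I would prove feasibility, i.e.\ that $S\setminus f$ contains $\ell$ disjoint $s$-$t$ paths for every $f\in M$. By the Max-Flow Min-Cut theorem it suffices to show that every $s$-$t$ cut in $(V, S\setminus f)$ has at least $\ell$ arcs. Fix $f\in M$ and an $s$-$t$ cut $\delta^+(W)$ in $(V,S\setminus f)$. In the original flow, the arcs of $S$ crossing this cut carried total flow $\ell+1$ (flow conservation across any $s$-$t$ cut). If the cut uses only safe arcs of $S$, then since each safe arc has capacity $1+1/\ell < 2$, carrying $\ell+1$ units requires at least $\ell+1$ arcs, so removing $f$ (which is vulnerable, hence not among them) leaves $\ell+1\ge\ell$ arcs. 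If the cut uses at least one vulnerable arc, then the cut in $S$ has value $\ell+1$ as a set of unit-ish capacities: more carefully, each arc contributes at most $1+1/\ell$, so the number of crossing arcs in $S$ is at least $(\ell+1)/(1+1/\ell) = \ell$; but we need $\ell$ arcs to remain after deleting $f$. Here I would use the sharper observation from the excerpt: each $s$-$t$ cut $C$ in $S$ contains either at least $\ell$ safe arcs, or at least $\ell+1$ arcs total — this follows because if $C$ has at most $\ell-1$ safe arcs and at most $\ell$ arcs total, its capacity is at most $(\ell-1)(1+1/\ell) + 1 = \ell - 1 + (\ell-1)/\ell + 1 = \ell + (\ell-1)/\ell < \ell+1$, a contradiction. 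In the first case deleting the single vulnerable arc $f$ leaves $\ge\ell$ safe arcs; in the second, $\ge\ell$ arcs remain. Either way the cut survives with $\ell$ arcs, giving feasibility.

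Next I would prove the cost bound $w(S)\le (\ell+1)\cdot\opt(I)$. Let $S^*$ be an optimal solution. The key claim is that $(V,S^*)$ admits an $s$-$t$ flow of value $\ell+1$ respecting $c$. To see this, note that feasibility of $S^*$ means $S^*$ minus any single vulnerable arc contains $\ell$ disjoint $s$-$t$ paths, so every $s$-$t$ cut of $S^*$ has $\ge \ell$ safe arcs or $\ge \ell+1$ arcs (otherwise some single vulnerable deletion would drop a cut below $\ell$); hence the $c$-capacity of every $s$-$t$ cut of $S^*$ is at least $\min\{\ell(1+1/\ell),\ \ell+1\} = \ell+1$, so by Max-Flow Min-Cut there is a flow of value $\ell+1$ in $(V,S^*)$ under $c$. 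This flow is feasible for our min-cost flow instance, so $w(f^*)\le w(f_{S^*})$ where $w(f_{S^*})$ denotes the cost of routing that flow. Finally, $w(S) \le w(f^*)$ is not literally how I'd bound it; instead I'd bound $w(S) = \sum_{e\in S} w_e \le \sum_{e\in S} w_e f^*(e) \cdot (\text{factor})$. More cleanly: since every arc in $S$ carries at least $1$ unit when capacities are at least $1$ — wait, safe arcs may carry as little as... no: rescale. After rescaling so capacities are $\ell$ (vulnerable) and $\ell+1$ (safe) and the flow has value $\ell(\ell+1)$, an integral min-cost flow exists; its support $S$ satisfies $f(e)\ge 1$ for $e\in S$, so $w(S)\le \sum_e w_e f(e) = $ cost of the flow $\le$ cost of routing the optimal witness flow $\le \sum_e w_e\cdot(\ell+1)[e\in S^*] = (\ell+1)w(S^*)$, where the last inequality uses that the witness flow on $S^*$ routes $\le \ell+1$ units through each arc. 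This completes the bound.

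The main obstacle is handling the non-integral capacity $1+1/\ell$ cleanly: the cost-to-support step requires that every arc in the support carries at least one unit of flow, which fails directly when arc capacities drop below $1$ is impossible here (all capacities are $\ge 1$) but integrality of the optimal flow still needs justification. The cleanest fix, which I would carry out, is the rescaling trick above (multiply all capacities and the flow value by $\ell$), after which standard integrality of min-cost flow applies and the support automatically carries $\ge 1$ unit per arc; then the factor $(\ell+1)$ falls out because each arc of the rescaled flow carries at most $\ell+1$ units. The feasibility direction is routine given the cut-counting observation already spelled out in the excerpt.
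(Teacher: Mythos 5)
Your proposal is correct and follows essentially the same route as the paper: the same capacities ($1$ on vulnerable arcs, $1+1/\ell$ on safe arcs), feasibility via the observation that every $s$-$t$ cut of the support has either $\ell$ safe arcs or $\ell+1$ arcs in total, the same witness claim that an optimal solution supports an $(\ell+1)$-flow under these capacities, and the same rescale-by-$\ell$ trick to recover integrality. Your explicit accounting in the rescaled instance (every support arc carries at least one unit, every arc of the witness flow at most $\ell+1$ units) is in fact a slightly cleaner rendering of the paper's own cost bound.
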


Note that % in comparison to the $k$-approximation for \ftp, 
we cannot simply use the dynamic programming approach as in the algorithm for
1-\ftp to obtain an $\ell$-approximation for \ftf, since a solution to \ftf in
general does not have cut vertices, which are essential for the decomposition
approach for the $k$-approximation for \ftp.

We now show that for a fixed number $\ell$ of disjoint paths, a much better
approximation guarantee can be obtained. That is, we give a polynomial-time
2-approximation algorithm for \dorkf (however, its running time is exponential
in $\ell$). The algorithm first computes a minimum-cost $s$-$t$ flow of value
$\ell$ and then augments it to a feasible solution by solving the following
\emph{augmentation problem}.

\begin{framed}
	\begin{tabular}[h]{lp{0.8\linewidth}}
		\multicolumn{2}{l}{\onerkdpa} \tabularnewline
		\textbf{Instance:} &  arc-weighted directed graph
		$D=(V, A)$, two nodes $s,t\in V$, arc-set $X_0 \subseteq A$ that
		contains $\ell$ disjoint $s$-$t$ paths, and set $M\subseteq A$ of
		vulnerable arcs.\tabularnewline
		\textbf{Task:}    & Find minimum weight set $S \subseteq A
		\setminus X_0$, such that for every $f \in M$, the set $(X_0
		\cup S) \setminus f$ contains $\ell$ disjoint $s$-$t$ dipaths.
	\end{tabular}
\end{framed}

Our main technical contribution is that \onerkdpa can be solved in
polynomial time for fixed $\ell$. Our algorithm is based on a dynamic programming
approach and it involves solving many instances of the problem \dsnp, which
asks for a cheapest subgraph connecting $\ell$ given terminal pairs. This
problem admits a polynomial-time algorithm for fixed $\ell$~\cite{feldman_ruhl_06}, but it is 
$\W[1]$-hard when parameterized in the number of terminal pairs, so it is likely
not fixed-parameter tractable~\cite{guo2011parameterized}. Roughly speaking, we traverse the $\ell$
disjoint $s$-$t$ paths computed previously in parallel, proceeding one arc
at a time. In order to deal with vulnerable arcs, at each step, we solve an
instance of \dsnp connecting the $\ell$ current vertices (one on each path) to
$\ell$ destinations on the same path by using backup paths. That is, we
decompose a solution to the augmentation problem into instances of \dsnp
connected by safe arcs. An optimal decomposition yields an optimal solution to
the instance of the augmentation problem. We find an optimal decomposition by
dynamic programming. Essentially, we give a reduction to a shortest path
problem in a graph that has exponential size in $\ell$.

Let us fix an instance $I$ of \onerkdpa on a digraph $D =
(V, A)$ with arc-weights $c \in \Z_{\geq 0}^A$ and and terminals $s$ and $t$.
Let $P_1, P_2, \ldots, P_\ell$ be $\ell$ disjoint $s$-$t$ paths contained in
$X_0$. In fact, we assume without loss of generality, that $X_0$ is the union
of $P_1, P_2, \ldots, P_\ell$. If $X_0$ contains an arc $e$ that is not on any
of the $\ell$ paths, we remove $e$ from $X_0$ and assign to it weight 0. 

We now give the reduction to the shortest path problem.
We construct a digraph $\mathcal{D}= (\mathcal{V}, \mathcal{A})$;
%that will be useful for computing an optimal solution to~$I$. 
to distinguish it clearly from the graph $D$ of $I$, we call the elements in $V$ ($A$) of $D$ \emph{vertices} (\emph{arcs})
and elements of $\mathcal{V}$ ($\mathcal{A}$) \emph{nodes} (\emph{links}).
We order the vertices of each path $P_i$, $1 \leq i \leq \ell$, according to
their distance to $s$ on $P_i$. For two vertices $x_i^1, x_i^2$ of $P_i$, we
write $x_i^1 \leq x_i^2$ if $x_i^1$ is at least as close to $s$ on $P_i$ as $x_i^2$.
Let us now construct the node set $\mathcal{V}$.  We add a node $v$ to
$\mathcal{V}$ for every $\ell$-tuple $v = (x_1, ..., x_\ell)$ of vertices in
$V(X_0)$ satisfying $x_i \in P_i$, for every $i \in \{ 1, 2, \ldots, \ell\}$.
Note that the corresponding vertices of a node are not necessarily distinct,
since the $\ell$ \emph{edge-}disjoint paths $P_1, P_2, \ldots, P_\ell$ may
share vertices.
We also define a (partial) ordering on the nodes in $\mathcal{V}$.  For two
nodes $v_1= (x_1^1, ..., x_\ell^1)$ and $v_2= (x_1^2, ..., x_\ell^2)$ we write
$v_1 \leq v_2$ if $x_i^1 \leq x_i^2$ for every $1 \leq i \leq \ell$.
Additionally, let $Q_i(x, y)$ be the sub-path of $P_i$ from a vertex $x$ to a
vertex $y$ of $P_i$.

We now construct the link set $\mathcal{A} := \mathcal{A}_1 \cup \mathcal{A}_2$
of $\mathcal{D}$ as the union of two link-sets $\mathcal{A}_1$ and
$\mathcal{A}_2$, which we will define next.  We add to $\mathcal{A}_1$ an arc
$xy$, if $x$ precedes $y$ and the subpaths of each $P_i$ from $x_i$ to $y_i$
contain no vulnerable arc. That is, we let
\[
	\mathcal{A}_1 := \{ xy \mid x, y \in \mathcal{V},\, x \leq y,\, Q_i(x_i, y_i) \cap M = \emptyset \text{ for $1 \leq i \leq \ell$}\}\enspace.
\]
We now define the link set $\mathcal{A}_2$.  For two nodes $x,y \in
\mathcal{V}$ such that  $x$ precedes $y$, if there is some $1 \leq i\leq \ell$,
such that $Q_i(x_i, y_i)$ contains at least one vulnerable arc, then we first need
to solve an instance of \dsnp on $\ell$ terminal pairs in order to know whether
we add the link $xy$ and, if so, at which cost.  We construct an instance $I(x,
y)$ of \dsnp as follows. The terminal pairs are $(x_i, y_i)_{1 \leq i \leq
\ell}$.  The input graph is given by $D'= (V, A')$, where $A' = (A \setminus X_0) \cup
\bigcup_{1 \leq i  \leq \ell} \overleftarrow{Q_i}(x_i, y_i)$, where
$\overleftarrow{Q_i}(x_i, y_i)$ are the arcs of $Q_i(x_i, y_i)$ in reversed direction
The arc costs are given by
\[
	c'_e \coloneqq
	\begin{cases}
		c_e &   \text{if $e \in A \setminus X_0$, and}\\
		0   &   \text{if $e \in \overleftarrow{Q_i}(x_i, y_i)$ for some  $i \in \{1, 2, \ldots, \ell\}$}. 
	\end{cases}
\]
That is, for $1 \leq i \leq \ell$, we reverse the path $Q_i(x_i, y_i)$ connecting
$x_i$ to $y_i$ and make the corresponding arcs available at zero cost. 
We then need to connect $x_i$ to $y_i$ without using arcs in $X_0$.
Since the number of terminal pairs is at most $\ell$ and thus constant, 
the \dsnp instance $I(v_1, v_2)$ can be solved in polynomial time by the algorithm 
of Feldman and Ruhl given in~\cite{feldman_ruhl_06}. 
Let $\opt(I (x, y))$ be the cost of an optimal solution to $I(v, v)$.
We add a link $xy$ to $\mathcal{A}_2$ if the computed solution of $I(x, y)$
is strongly connected. This completes the construction of $\mathcal{A}_2$. %The set of such links is $\mathcal{A}_2$.

For a link $e \in \mathcal{A}$ we let the weight $w_e$ be given by
    \[
        w_e \coloneqq
        \begin{cases}
		0 		&   \text{if $e \in \mathcal{A}_1$, and}\\
		\opt (I(x, y))	&   \text{if $e \in \mathcal{A}_2$.}
        \end{cases}
    \]

We now argue that a shortest path $\mathcal{P}$ from node $s_1 = (s, \ldots, s)
\in \mathcal{V}$ to node $t_1 = (t, \ldots, t) \in \mathcal{V}$ in
$\mathcal{D}$ corresponds to an optimal solution to $I$.  For every link $xy
\in \mathcal{P}$, we add the optimal solution to $I(x, y)$ computed by the
Feldman-Ruhl algorithm to our solution $Y$. A summary is given in
Algorithm~\ref{ALG:DP:augprob}. Proving that Algorithm~\ref{ALG:DP:augprob} is
quite technical and requires another auxiliary graph and a technical lemma. The
details can be found in Appendix~\ref{app:FTF:apx}.

%\vspace{0.5cm}
\begin{algorithm}[t]
	\caption{: Exact algorithm for \onerkdpa\label{ALG:DP:augprob}}
	\begin{algorithmic}[1]    %step size for the line numbering
		\REQUIRE{instance $I$ of \onerkdpa on a digraph $D = (V, A)$} 
		\STATE{Construct the graph $\mathcal{D}= ( \mathcal{V}, \mathcal{A})$}
		\STATE{Find a shortest path $\mathcal{P}$ in $\mathcal{D}$ from $(s, \ldots, s)$ to $(t, \ldots, t)$}
		\STATE{For each link $v w \in \mathcal{P} \cap \mathcal{A}_2$ add the arcs of an optimal solution to $I(v, w)$ to $Y$}
		\RETURN $Y$
	\end{algorithmic}
\end{algorithm} 

%\vspace{0.5cm}

\begin{theorem}[$\ast$]
	\label{thm:ALG:DP:optimal}
	The set $Y$ computed by Algorithm \ref{ALG:DP:augprob} is an optimal solution  to the instance $I$ of \onerkdpa.
\end{theorem}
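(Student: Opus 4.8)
The plan is to prove Theorem~\ref{thm:ALG:DP:optimal} by establishing a bijective-style correspondence between $s_1$-$t_1$ paths in $\mathcal{D}$ and feasible solutions to the augmentation instance $I$, with costs matching up to the operation of taking minimal solutions. Concretely, I would prove two directions: (i) \emph{soundness} --- the arc-set $Y$ produced from any $s_1$-$t_1$ path $\mathcal{P}$ in $\mathcal{D}$ is feasible for $I$, with $c(Y) = w(\mathcal{P})$; and (ii) \emph{completeness} --- every (inclusion-minimal) feasible solution $S$ to $I$ gives rise to an $s_1$-$t_1$ path in $\mathcal{D}$ of weight at most $c(S)$. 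Taking a \emph{shortest} path $\mathcal{P}$ then yields an optimal $Y$.

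For soundness, the key object is a single vulnerable arc $f \in M$. I need to show $(X_0 \cup Y) \setminus f$ contains $\ell$ disjoint $s$-$t$ paths. Walk along $\mathcal{P} = (s_1 = v^0, v^1, \ldots, v^r = t_1)$. The arc $f$ lies on some $Q_i(x_i^{j}, x_i^{j+1})$ for at most one consecutive node-pair on each path coordinate — but since the links along $\mathcal{P}$ partition each $P_i$ into consecutive subpaths, $f$ is ``covered'' by exactly one link $v^j v^{j+1}$, which must be in $\mathcal{A}_2$ (links in $\mathcal{A}_1$ have no vulnerable arc on any coordinate). For that link, the Feldman--Ruhl solution to $I(v^j, v^{j+1})$ is strongly connected and connects each $x_i^j$ to $x_i^{j+1}$ using arcs outside $X_0$ plus the reversed subpaths $\overleftarrow{Q_i}$; strong connectivity is exactly what lets us reroute around $f$: for the coordinate $i$ whose subpath contains $f$, we detour via the backup arcs, and for the other coordinates $i' \neq i$ we may continue along $Q_{i'}$ (or also detour, using strong connectivity to keep everything disjoint). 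For all links $v^{j'} v^{j'+1}$ with $j' \neq j$, the original subpaths $Q_i$ are intact. Splicing these pieces gives $\ell$ walks from $s$ to $t$; I would then argue they can be made disjoint — here the reversed-arcs trick of Feldman--Ruhl is what guarantees that an optimal strongly-connected solution to $I(x,y)$ decomposes into $\ell$ \emph{disjoint} $x_i$-to-$y_i$ connections together with the (reversed) original paths, which is precisely the content of the ``auxiliary graph and technical lemma'' the paper defers to the appendix. The cost bookkeeping is immediate: $\mathcal{A}_1$-links cost $0$ and contribute no new arcs, each $\mathcal{A}_2$-link contributes its $\opt(I(v^j,v^{j+1}))$ worth of arcs, and distinct links touch disjoint portions of $A \setminus X_0$ only if we are careful — more honestly, $c(Y) \le w(\mathcal{P})$ always, which suffices for the approximation/optimality argument.

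For completeness, take an inclusion-minimal feasible solution $S$ to $I$. The plan is to extract from $S$ the ``breakpoints'': vertices $s = u^0 \le u^1 \le \cdots \le u^r = t$ along the $\ell$ paths (a node sequence in $\mathcal{V}$) at which the failure of arcs ``between'' consecutive breakpoints can be repaired \emph{locally} using a strongly connected gadget drawn from $S$, while between breakpoints all coordinates are safe. Formally I would consider the $\ell$-tuples of vertices reachable simultaneously via safe arcs of $X_0$, and show that $S$ restricted to the ``block'' between $u^j$ and $u^{j+1}$ — together with the reversed original subpaths — must be strongly connected on the $\ell$ terminal pairs $(u_i^j, u_i^{j+1})$ (otherwise some single vulnerable-arc failure in that block disconnects a path and cannot be repaired, contradicting feasibility of $S$). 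Hence each block yields a feasible solution to $I(u^j, u^{j+1})$ of cost at most (cost of $S$ restricted to that block), so $\opt(I(u^j,u^{j+1}))$ is no larger; summing over blocks and noting blocks use disjoint portions of $S \setminus X_0$ gives an $s_1$-$t_1$ path in $\mathcal{D}$ of weight $\le c(S)$.

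The main obstacle — and the reason the paper pushes it to an appendix — is the disjointness argument inside each $\mathcal{A}_2$-block: showing that an optimal strongly connected solution to the \dsnp instance $I(x,y)$, which a priori is just a cheap subgraph with no disjointness guarantee, actually certifies the existence of $\ell$ \emph{arc-disjoint} reroutes $x_i \rightsquigarrow y_i$ in $(X_0 \cup S) \setminus f$ that also mesh disjointly with the unaffected coordinates and the global splicing across blocks. The reversed-path construction (making $\overleftarrow{Q_i}(x_i,y_i)$ available at zero cost and then asking for strong connectivity rather than mere connectivity) is the device that encodes a disjoint-rerouting requirement as a connectivity requirement, à la the flow-decomposition / Menger-style arguments; verifying that it does so correctly, and that the local gadgets glue into global disjoint paths, is where the real work lies. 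Everything else — the cost accounting, the reduction to shortest paths, the polynomial bound via Feldman--Ruhl for fixed $\ell$ — is routine once that lemma is in hand.
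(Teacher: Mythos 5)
Your high-level architecture (soundness of any $s_1$-$t_1$ path, completeness for an optimal solution, then take a shortest path) matches the paper's, and you correctly locate the crux. But the route you sketch for that crux is not the one that works, and the device that actually closes it is missing. The paper does \emph{not} argue that an optimal Feldman--Ruhl solution to $I(x,y)$ ``decomposes into $\ell$ disjoint $x_i$-to-$y_i$ connections'': an optimal Steiner-forest-type subgraph on $\ell$ terminal pairs carries no disjointness guarantee, and trying to splice local reroutes into $\ell$ globally disjoint $s$-$t$ paths coordinate by coordinate is exactly the step that would fail as stated. The missing idea is the residual graph $D_{X_0}(X_0\cup Y)$ (all arcs of $X_0$ reversed, arcs of $Y$ forward) together with the characterization: $Y$ is feasible if and only if every vulnerable arc of $X_0$ lies in a strongly connected component of $D_{X_0}(X_0\cup Y)$. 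The ``if'' direction is proved by a degree-counting/flow-augmentation argument --- removing the $X_0$-arcs and adding the $Y$-arcs of a residual cycle through $f$ preserves $\delta^+(v)=\delta^-(v)$ at internal vertices and $\delta^+(s)=\delta^-(t)=\ell$, so the resulting arc set still decomposes into $\ell$ disjoint $s$-$t$ paths --- and the ``only if'' direction is a cut argument. Disjointness thus comes for free from degree parity in the residual graph, not from any structural property of the local \dsnp solutions; this is also why it suffices that each $\mathcal{A}_2$-link merely \emph{connects} $x_i$ to $y_i$ in the graph $(A\setminus X_0)\cup\bigcup_i\overleftarrow{Q_i}(x_i,y_i)$.

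Your completeness direction has a second, smaller gap: you posit breakpoints $s=u^0\le u^1\le\cdots\le u^r=t$ without justifying that the ``blocks'' on different coordinates can be linearized consistently --- a priori the repair regions for vulnerable arcs on different paths $P_i$ could interleave so that no monotone node sequence in $\mathcal{V}$ passes through all of them. The paper resolves this by taking the blocks to be the strongly connected components of $D_{X_0}(X_0\cup Y^*)$, proving that on any shared coordinate $i$ disjoint components $Z_1,Z_2$ satisfy $t_i(Z_1)<s_i(Z_2)$ or vice versa (else they would merge), and then showing the induced order is acyclic, which is what licenses the greedy alternation of $\mathcal{A}_1$- and $\mathcal{A}_2$-links that builds $\mathcal{P}'$ with $w(\mathcal{P}')\le\opt(I)$. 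Without the residual-graph lemma and this ordering argument, both halves of your plan remain open.
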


Algorithm~\ref{ALG:DP:augprob} runs in polynomial time for a fixed number
$\ell$ of disoint $s$-$t$ paths, since it computes at most $n^\ell$ Min-cost
Strongly Connected Subgraphs on $\ell$ terminal pairs, which can be done in
polynomial time by a result of Feldman and Ruhl~\cite{feldman_ruhl_06}. 

\begin{theorem}[$\ast$]
	\label{thm:ALG:DP:runtime}
	Algorithm~\ref{ALG:DP:augprob} runs in time
	$O(|A| |V|^{6\ell-2} + |V|^{6\ell-1} \log |V|)$.
\end{theorem}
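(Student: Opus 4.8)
The plan is to bound the work done in each of the three phases of Algorithm~\ref{ALG:DP:augprob} --- constructing $\mathcal{D}$, computing a shortest path in $\mathcal{D}$, and reading off $Y$ --- and to check that the construction of $\mathcal{D}$ dominates. Write $n = |V|$ and $m = |A|$ throughout.

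First I would bound the size of $\mathcal{D}$. Every node of $\mathcal{V}$ is an $\ell$-tuple $(x_1,\dots,x_\ell)$ with $x_i \in V(P_i)$ and $|V(P_i)| \le n$, so $|\mathcal{V}| \le n^\ell$ and hence $|\mathcal{A}| \le |\mathcal{V}|^2 \le n^{2\ell}$. Enumerating $\mathcal{V}$ and, for each ordered pair of nodes, testing $x \le y$ takes $O(\ell\, n^{2\ell})$ time, which is negligible. To classify a link into $\mathcal{A}_1$ one precomputes, for each $i$ and each pair $x_i \le y_i$ on $P_i$, whether $Q_i(x_i,y_i) \cap M = \emptyset$; this costs $O(\ell n^2)$ in total, after which each link is handled in $O(\ell)$ time. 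Links in $\mathcal{A}_1$ carry weight $0$, so nothing further is computed for them.

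The dominant work is the construction of $\mathcal{A}_2$ together with its weights. For each of the at most $n^{2\ell}$ ordered pairs $(x,y)$ with $x\le y$ for which some $Q_i(x_i,y_i)$ contains a vulnerable arc, we build the graph $D'$ of the instance $I(x,y)$ in $O(m)$ time, solve $I(x,y)$ --- a \dsnp instance on $\ell$ terminal pairs --- with the algorithm of Feldman and Ruhl~\cite{feldman_ruhl_06}, whose running time is $O(mn^{4\ell-2} + n^{4\ell-1}\log n)$, test strong connectivity of the returned subgraph in $O(m)$ time, and store the solution and its cost. Multiplying by the number of links yields
\[
	n^{2\ell}\cdot O\!\left(mn^{4\ell-2} + n^{4\ell-1}\log n\right) = O\!\left(mn^{6\ell-2} + n^{6\ell-1}\log n\right)
\]
for this phase.

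It then remains to account for the last two steps. Dijkstra's algorithm on $\mathcal{D}$, whose link weights are nonnegative, runs in $O(|\mathcal{A}| + |\mathcal{V}|\log|\mathcal{V}|) = O(n^{2\ell} + \ell\, n^\ell \log n)$; and reconstructing $Y$ by walking along the shortest path $\mathcal{P}$ and, for each link of $\mathcal{P}\cap\mathcal{A}_2$, adding the already-computed arcs of the optimal solution to $I(x,y)$ costs $O(|\mathcal{V}|\, m) = O(mn^\ell)$. Both terms are dominated by the bound of the previous paragraph for every $\ell \ge 1$, giving an overall running time of $O(mn^{6\ell-2} + n^{6\ell-1}\log n) = O(|A|\,|V|^{6\ell-2} + |V|^{6\ell-1}\log|V|)$. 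The only genuinely delicate point is pinning down the exponent in the Feldman--Ruhl bound: it must be invoked for a network instance on $\ell$ \emph{demand pairs} (contributing the $n^{4\ell-2}$ factor), so that together with the $n^{2\ell}$ links the exponents add up to exactly $6\ell-2$.
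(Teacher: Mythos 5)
Your proof is correct and follows essentially the same route as the paper: bound $|\mathcal{V}|\le n^\ell$ and $|\mathcal{A}|\le n^{2\ell}$, multiply the number of links by the Feldman--Ruhl cost per \dsnp instance on $\ell$ terminal pairs, and observe that the shortest-path and reconstruction phases are dominated. Your per-instance Feldman--Ruhl bound $O(mn^{4\ell-2}+n^{4\ell-1}\log n)$ is in fact the one that makes the arithmetic consistent with the stated total (the paper's intermediate expression contains a typo in the logarithmic term), so no changes are needed.
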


From theorems~\ref{thm:ALG:DP:optimal} and~\ref{thm:ALG:DP:runtime} we obtain a
polynomial-time 2-approximation algorithm for \dorkf: Let $\opt(I)$ be the cost
of an optimal solution to an instance $I$ of \dorkf.  
The algorithm first computes a minimum-cost $s$-$t$ flow $X_0$ and
then runs Algorithm~\ref{ALG:DP:augprob} using $X_0$ as initial arc-set. The
algorithm returns the union of the arc-sets computed in the two steps.  By
Theorem \ref{thm:ALG:DP:optimal} we can augment $X_0$ in polynomial time to a
feasible solution $X_0 \cup Y$ to $I$. Since we pay at most $\opt(I)$ for the
sets $X_0$ and $Y$, respectively, the total cost is at cost at most $2\opt(I)$.
\begin{corollary}
  \label{cor:ALG:DP:2APX}
  \dorkf admits a polynomial-time 2-factor approximation algorithm.
\end{corollary}

\subsection{Relation to Other Problems of Open Complexity}
\label{sec:FTF:other-problems}
In the previous section we showed that there is a polynomial-time algorithm for
\onerkdpa, from which we obtained a 2-approximation for \dorkf. Ideally, one
would like to complement such an approximation result with a hardness or
hardness-of-approximation result. Since \onerkdpa admits a polynomial-time
algorithm according to Theorem~\ref{thm:ALG:DP:runtime}, we cannot use the
augmentation problem in order to prove \NP-hardness of \dorkf; an approach that
has been used successfully for instance for robust paths~\cite{bulk}, robust
matchings~\cite{robmatch} and robust spanning trees~\cite{flex}. Hence, there
is some hope that \dorkf might actually be polynomial-time solvable.
However, we show that a polynomial-time algorithm for \dortf implies
polynomial-time algorithms for two other problems with unknown complexity
status, namely \onetwotwodst and a special case of \twokdst.

We will first consider the relation of \dorkf and \twokdst, which asks for two
disjoint directed paths connecting a root vertex with each terminal: 

\begin{framed}
	\begin{tabular}[h]{lp{0.8\linewidth}}
		\multicolumn{2}{l}{\twokdst} \tabularnewline
		\textbf{Instance:} & directed graph $D=(V, A)$, cost function $c \in \mathbb{Q}^A$, root $s \in V$, and $k$ terminal vertices $t_1, t_2, \dots, t_k \in V$  \tabularnewline
		\textbf{Task:}  & find a minimum-cost set of edges $X \subseteq A$, such that $(V, X)$ contains two edge-disjoint $s$-$t$ paths for each $t \in \{t_1, t_2, \dots, t_k \}$.
	\end{tabular}
\end{framed}

We will denote the set of terminals by $T \coloneqq \{t_1, t_2, \dots, t_k \}$.
According to \cite{cheriyan2014approximating}, even the complexity of
\onetwotwodst is open, which is the following variant of \twokdst: we have only
two terminals $t_1$ and $t_2$ and aim to find two disjoint $s$-$t_1$ paths and
one $s$-$t_2$ path of minimal total cost.
Note that \twokdst is a generalization of \dstp and therefore does not admit a
polynomial-time $\log^{2- \varepsilon} n$ approximation algorithm unless $\NP
\subseteq \mathsf{ZTIME}(n^{\polylog(n)})$~\cite{DBLP:conf/stoc/HalperinK03}.
However, there is a big gap between the complexity of \dstp and \twokdst if the
number $k$ of terminals is fixed.  While it is known that \dstp is
fixed-parameter tractable when parameterized by the number of terminals (and
therefore polynomial-time solvable for constant $k$), it is unknown whether
\twokdst admits a polynomial-time algorithm even for $k=2$ (for $k=1$, an
optimal solution is a minimum-cost 2-flow).  We now show that a special case of
\dorkf corresponds to \twokdst with the additional constraint that every
$s$-$T$ cut contains at least $k+1$ edges.

\begin{proposition}[$\ast$]
	\label{prop:dorkf-twokdst}
	A polynomial-time algorithm for \dorkkf implies a polynomial-time algorithm
	for \twokdst with the additional constraint that every $s$-$T$ cut contains
	at least $k+1$ edges.
\end{proposition}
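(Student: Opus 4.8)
The plan is to show that \twokdst with the extra cut constraint reduces to \dorkkf, i.e.\ \ftf with flow value $\ell := k$. Given an instance of \twokdst with root $s$, terminals $T = \{t_1, \dots, t_k\}$, graph $D = (V,A)$ and costs $c$, together with the promise that every $s$-$T$ cut has at least $k+1$ arcs, I would build an \ftf instance as follows. First add a new sink vertex $\tau$ and, for each terminal $t_i$, attach a \emph{vulnerable} zero-cost arc $t_i \tau$; also, to each of these arcs attach a bundle of $k-1$ parallel safe zero-cost arcs from $t_i$ to $\tau$ (or equivalently subdivide and pad so that the only ``bottleneck'' of value $1$ out of $t_i$ is the single vulnerable arc). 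Keep all original arcs of $A$ as safe with their original costs. Set $M$ to be exactly the $k$ new vulnerable arcs $t_1\tau, \dots, t_k\tau$, and ask for a Fault-Tolerant $k$-Flow from $s$ to $\tau$: a minimum-cost $S$ such that $S \setminus f$ carries $k$ disjoint $s$-$\tau$ paths for every single vulnerable $f \in M$.

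The key equivalence to establish is: $S \cap A$ (the original arcs used) gives a feasible \twokdst solution if and only if $S$ is a feasible \dorkkf solution of the same cost. For the forward direction, if $X \subseteq A$ contains two edge-disjoint $s$-$t_i$ paths for every $i$, then after deleting one vulnerable arc $t_j\tau$ we still have, for each $i \neq j$, one $s$-$t_i$ path routed into $\tau$ through $t_i\tau$, and we route the second copy of an $s$-$t_j$ path into one of the parallel safe $t_j$-$\tau$ arcs — giving $k$ vertex-disjoint-at-$\tau$... actually \emph{arc}-disjoint $s$-$\tau$ paths; one needs to check arc-disjointness inside $D$, which is where the ``two edge-disjoint paths to each $t_i$'' hypothesis is used to peel off paths one terminal at a time. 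For the converse, given a feasible \ftf solution $S$, for each $j$ delete $t_j\tau$; the $k$ disjoint $s$-$\tau$ paths in $S \setminus t_j\tau$ must use $k$ distinct terminal-exit arcs, and since $t_j\tau$ is gone at least two of these paths reach $\tau$ \emph{via} $t_j$ (one through the bundle of safe arcs and, because over all $k$ paths only $k-1$ other terminals are available with capacity-one vulnerable exits plus one spare... ), yielding two arc-disjoint $s$-$t_j$ subpaths in $D$. Summing over $j$ and using arc-disjointness within each scenario gives the \twokdst feasibility of $S \cap A$.

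The cut hypothesis enters to guarantee that the constructed \ftf instance is itself feasible and that an optimal \ftf solution does not ``cheat'': the condition that every $s$-$T$ cut has $\geq k+1$ arcs is exactly what forces every $s$-$\tau$ cut in the padded graph to have value $\geq k+1$, so that after any single vulnerable deletion a $k$-flow still exists — this is the analogue of the feasibility observation made for the $(\ell+1)$-approximation of \ftf. I expect the \textbf{main obstacle} to be the bookkeeping in the converse direction: arguing that the $\ell = k$ arc-disjoint $s$-$\tau$ paths surviving the deletion of $t_j\tau$ can be re-decomposed into genuine pairs of arc-disjoint $s$-$t_j$ paths inside $D$ for \emph{every} $j$ simultaneously, without the different scenarios' witnesses conflicting. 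Handling this cleanly likely requires choosing the padding multiplicities carefully (so that the only way to send $k$ units after losing one vulnerable arc is the ``intended'' routing) and then a short flow-decomposition / Menger argument applied scenario by scenario, exactly as in the terminal-peeling step of the reduction from $m$-DST to \ftp earlier in the paper.
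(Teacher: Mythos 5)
Your reduction goes in the right direction (encode \twokdst as a fault-tolerant $k$-flow instance by adding a super-sink), but the assignment of vulnerable arcs is inverted relative to what is needed, and the padding you introduce to restore feasibility breaks the converse direction. In the paper's construction the $k$ new arcs $t_i t$ are the \emph{safe} ones and \emph{all original arcs are vulnerable}: if some solution $X$ had only one arc-disjoint $s$-$t_i$ path, there would be a single original arc $e$ separating $s$ from $t_i$ in $X$, and then $(A_t\setminus\{t_i t\})\cup\{e\}$ would be an $s$-$t$ cut of size $k$ containing a vulnerable arc, so the $k$-flow could not survive the fault $e$. The same argument forces every $s$-$T$ cut of the solution to have at least $k+1$ arcs, which is exactly where the additional constraint in the statement comes from; note that this constraint is a condition on feasible \emph{solutions} of the \twokdst instance, not a promise on the input graph as you treat it.

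In your construction the faults live only on the terminal exits and each terminal is padded with $k-1$ safe parallel arcs to $\tau$, so a single fault never threatens anything inside $D$, and the surviving flow is free to reroute through a non-faulted terminal. Concretely, for $k=2$: take $S$ consisting of one $s$-$t_1$ path, two arc-disjoint $s$-$t_2$ paths (all pairwise arc-disjoint in $D$), and all four exit arcs. Faulting $t_1\tau$, both units reach $\tau$ through $t_2$, which still has two exit arcs; faulting $t_2\tau$, one unit exits through $t_1$ and one through the remaining safe $t_2$-$\tau$ arc. So $S$ is feasible for your \ftf instance yet contains only one $s$-$t_1$ path, and the optimal \ftf value can be strictly below the \twokdst optimum. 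This is not a bookkeeping issue that a careful choice of multiplicities fixes: as long as all arcs of $D$ are safe, no single-arc fault can certify $2$-connectivity from $s$ to a terminal. The repair is the paper's gadget --- one safe zero-cost arc per terminal and $M:=A$.
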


Furthermore, we show that \onetwotwodst is a special case of \dortf.

\begin{proposition}[$\ast$]
	\label{prop:dortf-onetwotwodst}
	A polynomial-time algorithm for \dortf implies a polynomial-time
	algorithm for \onetwotwodst.
\end{proposition}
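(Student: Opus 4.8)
The plan is to reduce \onetwotwodst to \dortf in polynomial time, in the same spirit as the earlier reductions from Steiner-type problems to \ftp. Given an instance $(G=(V,A),c,s,t_1,t_2)$ of \onetwotwodst, I would build a \dortf instance on $D'=(V\cup\{t\},A\cup\{a_1,a_2,b\})$, where $a_1,a_2$ are two arcs from $t_1$ to $t$ and $b$ is an arc from $t_2$ to $t$; the arcs of $A$ keep their cost, the three new arcs get cost $0$, the vulnerable set is $M=\{a_1,a_2,b\}$ (no arc of $A$ is vulnerable), and the required flow value is $2$. If the model forbids parallel arcs, subdivide $a_1$ and $a_2$ (and $b$) and move the vulnerability to the first half of each branch; this changes nothing essential.

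First I would observe that, since the three new arcs have cost $0$ and feasibility of a \dortf solution is monotone under adding arcs, an optimal \dortf solution $S$ may be assumed to contain $\{a_1,a_2,b\}$; write $X:=S\cap A$. Since $t$ is a new sink whose only in-arcs are $a_1,a_2,b$, every $s$-$t$ path in $D'$ uses exactly one of them as its last arc. Hence $S\setminus b$ contains two arc-disjoint $s$-$t$ paths iff $X$ contains two arc-disjoint $s$-$t_1$ paths (the two paths must leave through $a_1$ and through $a_2$, so both pass through $t_1$), while $S\setminus a_1$ (and symmetrically $S\setminus a_2$) contains two arc-disjoint $s$-$t$ paths iff $X$ contains an $s$-$t_1$ path and an arc-disjoint $s$-$t_2$ path (one path leaves through $a_2$, the other through $b$). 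Therefore $S$ is feasible for \dortf iff $X$ satisfies (i) $X$ has two arc-disjoint $s$-$t_1$ paths, and (ii) $X$ has an $s$-$t_1$ path and an arc-disjoint $s$-$t_2$ path.

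The key step is to show that (i), together with the mere existence of \emph{some} $s$-$t_2$ path in $X$, already implies (ii); this makes feasibility of $S$ equivalent to $X$ being a feasible \onetwotwodst solution. For this I would add a super-sink $T$ to $X$ with arcs $t_1T$ and $t_2T$ and verify, by a short case analysis over $s$-$T$ cuts according to which of $t_1,t_2$ lies on the source side, that every $s$-$T$ cut has at least two arcs: if $t_1$ is on the sink side the cut induces an $s$-$t_1$ cut in $X$, hence has size $\geq 2$ by (i); if $t_1$ is on the source side and $t_2$ on the sink side, the cut contains $t_1T$ plus an $s$-$t_2$ cut in $X$, which is nonempty; and if both $t_1,t_2$ are on the source side the cut contains $t_1T$ and $t_2T$. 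By Menger's theorem there are then two arc-disjoint $s$-$T$ paths, and since $T$ has exactly two in-arcs these restrict to an arc-disjoint $s$-$t_1$ path and $s$-$t_2$ path in $X$. This is the only non-routine ingredient.

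Finally, since the new arcs cost nothing, $c(S)=c(X)$, so an optimal \dortf solution $S^\ast$ yields an optimal \onetwotwodst solution $S^\ast\cap A$ of the same cost, and conversely any feasible \onetwotwodst solution $X$ gives the feasible \dortf solution $X\cup\{a_1,a_2,b\}$ (using (i) directly for the scenario $f=b$ and the implication above for $f=a_1,a_2$), again of the same cost. Hence a polynomial-time algorithm for \dortf yields one for \onetwotwodst. I expect the cut/Menger argument of the third paragraph to be the part that needs the most care; the remaining steps are bookkeeping.
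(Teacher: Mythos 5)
Your reduction is correct and follows essentially the same approach as the paper's: append a new sink $t$ via zero-cost arcs, choose the vulnerable set so that the single-arc failure scenarios encode the two connectivity requirements, and use a super-sink/Menger cut argument to show that the cross-disjointness forced by the mixed scenario comes for free from a feasible \onetwotwodst solution. The only (immaterial) difference is the gadget: the paper routes $t_1$ through an auxiliary vertex $u$ and declares all original arcs vulnerable, whereas you use two parallel $t_1$-$t$ arcs and keep the original arcs safe.
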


\section{Conclusions and Future Work}\label{sec:conclusions}

This paper presents two problems, \ftp and \ftf, which add a non-uniform
fault model to the classical edge-disjoint paths problem.  In this model, not
all $k$-subsets of edges can be removed from the graph after a solution is
chosen, but rather a subset of \emph{vulnerable edges}, which are provided as
part of the input.  Such an adaptation is natural from the point of view of
many application domains.
We observed a dramatic increase in the computational complexity 
due to the fault model with respect to EDP. At the same time we identified several classes of
instances admitting a polynomial exact algorithm.  These classes include the
case $k=1$, directed acyclic graphs and fixed $k$ and series-parallel graphs.
% and series-parallel graphs.
Next, we defined a fractional counterpart of FTP and proved a tight bound on
the corresponding integrality gap. This result lead to a $k$-approximation
algorithm for FTP.  For \ftf, our main results are a $(\ell+1$)-approximation
algorithm and a 2-approximation algorithm for fixed~$\ell$.
%We conclude by stating a number
%of promising directions for future research.

One of the main tasks that remains is to improve the understanding of the
approximability of \ftp. In particular, it is interesting to see if the
approximation guarantee for \ftp can be improved to the approximation
guarantees of the best known algorithms for the Steiner Tree problem.  It is
also interesting to relate \ftp to more general problems such a Minimum-Cost
Fixed-Charge Network Flow and special cases thereof.  The complexity of $k$-FTP
in still unknown. It is interesting to see if the methods employed in the
current paper for $1$-FTP and $k$-FTP on directed acyclic graphs can be
extended to $k$-FTP on general graphs.  Another intriguing open question is
whether \dorkf is \NP-hard, which is open even for $\ell=2$. We showed that a
positive result in this direction implies polynomial-time algorithms for two
Steiner problems whose complexity status is open.

\bibliographystyle{plain}
\bibliography{bibliography}

\begin{thebibliography}{10}

\bibitem{adjiashvili2013fault}
David Adjiashvili.
\newblock Fault-tolerant shortest paths-beyond the uniform failure model.
\newblock {\em arXiv preprint arXiv:1301.6299}, 2013.

\bibitem{flex}
David Adjiashvili, Felix Hommelsheim, and Moritz M{\"u}hlenthaler.
\newblock Flexible graph connectivity.
\newblock In {\em International Conference on Integer Programming and
  Combinatorial Optimization}, pages 13--26. Springer, 2020.

\bibitem{bulk}
David Adjiashvili, Sebastian Stiller, and Rico Zenklusen.
\newblock Bulk-robust combinatorial optimization.
\newblock {\em Mathematical Programming}, 149(1-2):361--390, 2015.

\bibitem{aissi2005approximationA}
Hassene Aissi, Cristina Bazgan, and Daniel Vanderpooten.
\newblock Approximation complexity of min-max (regret) versions of shortest
  path, spanning tree, and knapsack.
\newblock In {\em European Symposium on Algorithms}, pages 862--873. Springer,
  2005.

\bibitem{buesing_12}
Christina B{\"u}sing.
\newblock Recoverable robust shortest path problems.
\newblock {\em Networks}, 59(1), 2012.

\bibitem{charikar1999approximation}
Moses Charikar, Chandra Chekuri, To-yat Cheung, Zuo Dai, Ashish Goel, Sudipto
  Guha, and Ming Li.
\newblock Approximation algorithms for directed steiner problems.
\newblock {\em Journal of Algorithms}, 33(1):73--91, 1999.

\bibitem{cheriyan2014approximating}
Joseph Cheriyan, Bundit Laekhanukit, Guyslain Naves, and Adrian Vetta.
\newblock Approximating rooted steiner networks.
\newblock {\em ACM Transactions on Algorithms (TALG)}, 11(2):1--22, 2014.

\bibitem{CT:00}
Joseph Cheriyan and Ramakrishna Thurimella.
\newblock Approximating minimum-size k-connected spanning subgraphs via
  matching.
\newblock {\em SIAM Journal on Computing}, 30(2):528--560, 2000.

\bibitem{DemandR}
Kedar Dhamdhere, Vineet Goyal, R~Ravi, and Mohit Singh.
\newblock How to pay, come what may: Approximation algorithms for demand-robust
  covering problems.
\newblock In {\em 46th Annual IEEE Symposium on Foundations of Computer Science
  (FOCS'05)}, pages 367--376. IEEE, 2005.

\bibitem{feldman_ruhl_06}
Jon Feldman and Matthias Ruhl.
\newblock The directed steiner network problem is tractable for a constant
  number of terminals.
\newblock {\em SIAM Journal on Computing}, 36(2):543--561, 2006.

\bibitem{kEdgeConnected1}
Harold~N Gabow and Suzanne~R Gallagher.
\newblock Iterated rounding algorithms for the smallest $k$-edge connected
  spanning subgraph.
\newblock {\em SIAM Journal on Computing}, 41(1):61--103, 2012.

\bibitem{DemandR2}
Daniel Golovin, Vineet Goyal, Valentin Polishchuk, R~Ravi, and Mikko Sysikaski.
\newblock Improved approximations for two-stage min-cut and shortest path
  problems under uncertainty.
\newblock {\em Mathematical Programming}, 149(1-2):167--194, 2015.

\bibitem{guo2011parameterized}
Jiong Guo, Rolf Niedermeier, and Ond{\v{r}}ej Such{\`y}.
\newblock Parameterized complexity of arc-weighted directed steiner problems.
\newblock {\em SIAM Journal on Discrete Mathematics}, 25(2):583--599, 2011.

\bibitem{DBLP:conf/stoc/HalperinK03}
Eran Halperin and Robert Krauthgamer.
\newblock Polylogarithmic inapproximability.
\newblock In {\em Proceedings of the 35th Annual {ACM} Symposium on Theory of
  Computing}, pages 585--594, 2003.

\bibitem{robmatch}
Felix Hommelsheim, Moritz M{\"u}hlenthaler, and Oliver Schaudt.
\newblock How to secure matchings against edge failures.
\newblock In {\em 36th International Symposium on Theoretical Aspects of
  Computer Science}. Schloss Dagstuhl-Leibniz-Zentrum f\"{u}r Informatik, 2019.

\bibitem{Jain2001}
Kamal Jain.
\newblock A factor 2 approximation algorithm for the generalized {S}teiner
  network problem.
\newblock {\em Combinatorica}, 21(1):39--60, 2001.

\bibitem{RRSP}
Christina Puhl.
\newblock Recoverable robust shortest path problems.
\newblock {\em Preprint}, pages 034--2008, 2009.

\bibitem{schrijver2003combinatorial}
Alexander Schrijver.
\newblock {\em Combinatorial optimization: polyhedra and efficiency},
  volume~24.
\newblock Springer Science \& Business Media, 2003.

\bibitem{RSP}
Gang Yu and Jian Yang.
\newblock On the robust shortest path problem.
\newblock {\em Computers and Operations Research}, 25(6):457–468, Jun 1998.

\bibitem{CompRSP}
Paweł Zieliński.
\newblock The computational complexity of the relative robust shortest path
  problem with interval data.
\newblock {\em European Journal of Operational Research}, 158(3):570–576, Nov
  2004.

\end{thebibliography}

\appendix

\section{Proofs Omitted from Section~\ref{sec:exact}}

\begin{proof}[Proof of Lemma~\ref{lem:disjointpaths}]
Assume the statement is not true. Then, by the Max-Flow Min-Cut Theorem there is some capacitated cut 
$\delta (V')$ for some $V' \subseteq V$ with $s \in V'$ and $t \notin V'$ such that
$c(\delta(V')) < k+1$. By the definition of $c$, this implies that $\delta(V')$ does not contain safe edges.
But then $F \coloneqq \delta(V')$ is a cut in $(V, X^*)$ of size at most $k$, a contradiction.
\end{proof}

\begin{proof}[Proof of Lemma~\ref{lem:robustbipath_contained}]
We assume without loss of generality that $S^*$ is a minimal feasible solution with respect
to inclusion. Let $Y \subseteq S^*$ be the set of bridges in $(V, S^*)$. From feasibility of 
$S^*$, we have $Y \cap M = \emptyset$. Consider any $s$-$t$ path $P$ in $S^*$. Let $u_1, \cdots, u_r$ be
be the set of vertices incident to $Y = P\cap Y$. Let $u_i$ and $u_{i+1}$ be such that $u_iu_{i+1} \not\in Y$.
(if such an edge does not exist, we have $Y = P$, which means that $P$ is a robust $s$-$t$ bipath).
Note that $S^*$ must contain two edge-disjoint $u_i$-$u_{i+1}$ paths $L_1, L_2$. Taking as
the set $Y$ together with all such pairs of paths $L_1, L_2$ results in a robust bipath.
\end{proof}

\begin{proof}[Proof of Theorem~\ref{thm:kisone}]
To solve $1$-\ftp we need to find the minimum cost robust $s$-$t$ bipath. To this end 
let us define two length functions $\ell_1, \ell_2 : V\times V \rightarrow \mathbb{R}_{\geq 0}$. For two
vertices $u,v\in V$ let $\ell_1(u,v)$ denote the shortest path distance from $u$ to $v$ in
the graph $(V,A\setminus M)$, and let $\ell_2(u,v)$ denote the cost of the shortest pair of edge-disjoint
$u$-$v$ paths in $G$. 
Clearly, both length functions can be computed in polynomial time (e.g. using
flow techniques). Finally, set $\ell(u,v) = \min\{\ell_1(u,v), \ell_2(u,v)\}$. 
Construct the complete graph on the vertex set $V$ and associate the length function $\ell$ with it. 
Observe that by definition of $\ell$, any $s$-$t$ path in this graph corresponds to a robust
$s$-$t$ bipath with the same cost, and vice versa. It remains to find the shortest $s$-$t$ bipath
by performing a single shortest $s$-$t$ path in the new graph. For every edge $uv$ in this 
shortest path, the optimal bipath contains the shortest $u$-$v$ path in $(V,a\setminus M)$ if
$\ell(u,v) = \ell_1(u,v)$, and the shortest pair of $u$-$v$ paths in $G$, otherwise.
\end{proof}

\begin{proof}[Proof of Lemma~\ref{lem:label}]
Consider first a fault-tolerant path $S \subseteq A$. We construct a corresponding $c^s$-$c^t$ path in $H$
as follows. Consider any $k+1$ $s$-$t$ flow $f^S$, induced by $S$. Let $p^1, \cdots, p^l$ be a path 
decomposition of $f^S$ and let $1 \leq \rho_1, \cdots, \rho_l \leq k+1$ (with $\sum_{i\in [l]} \rho_i = k+1$) be the
corresponding flow values.

Since $D$ is layered, the path $p^j$ contains exactly one vertex $v^j_i$ from $V_i$ and one edge $e^j_i$ 
from $A_i$ for every $j\in [l]$ and $i\in [r]$. For every $i\in[r]$ define the $i$-configuration $d^i$ with
$$
d^i_v = \sum_{j\in[l] : v = v^i_j} \rho_i,
$$
if some path $p^j$ contains $v$, and $d^i_v = 0$, otherwise. The fact that $d^i$ is an $i$-configuration follows
immediately from the fact that $f^S$ is a $(k+1)$-flow. In addition, for the same reason 
$d^i$ precedes $d_{i+1}$ for every $i\in [r-1]$. From the latter observations and the fact that $d^1 = c^s$
and $d^r =  c^t$ it follows that $P = d^1, d^2, \cdots, d^r$ is a $c^s-c^t$ path in $H$ with cost $\ell(P) \leq w(S)$.

Consider next an $c^s-c^t$ path $P = d^1, \cdots, d^r$ with cost $\ell(P) = \sum_{i=1}^{r-1} \ell(d^i, d^{i+1})$.
The cost $\ell(d^i, d^{i+1})$ is realized by some set of edges $R_i \subseteq E(d^i, d^{i+1})$ for every $i\in [r-1]$.
From Definition~\ref{def:conf}, the maximal $s$-$t$ flow in the graph $D' = (V, R)$ is at least $k+1$, where 
$R = \cup_{i\in [r-1]} R_i$. Next, Lemma~\ref{lem:disjointpaths} guarantees that there exists some feasible solution
$S \subseteq R$, the cost of which is at most $\ell(P)$. In the latter claim we used the disjointness of the sets 
$R_i$, which is due the layered structure of the graph $G$. This concludes the proof of the lemma.
\end{proof}

\begin{proof}[Proof of Theorem~\ref{thm:fastalgSRP}]
 The proof of correctness is by induction on the depth of the recursion in Algorithm~\ref{alg:ercc:sircc:srp}. Clearly 
the result returned by Algorithm~\ref{alg:ercc:sircc:srp} in lines $1$-$6$ is optimal.
Assume next that the algorithm computed correctly all optimal solutions for the subgraphs $H_1, H_2$,
namely that for every $i\in [2]$ and $j\in [k]$, the set $S_j^i$ computed in lines $7$-$8$ is an
optimal solution to the problem on instance $\mathcal{I}_i^j = (H_i, M \cap E[H_i], j)$.

Assume first that $G$ is a series composition of $H_1$ and $H_2$, and let $0 \leq i \leq k$. If
either $S_i^1 = \perp$ or $S_i^2 = \perp$ the problem with parameter $i$ is clearly also infeasible,
hence the algorithm works correctly in this case. Furthermore, since $G$ contains a cut vertex (the
terminal node, which is in common to $H_1$ and $H_2$), a solution $S$ to the problem is feasible for
$G$ if and only if it is a union of two feasible solutions for $H_1$ and $H_2$. From the inductive
hypothesis it follows that $S_i$ is computed correctly in line $14$.

Assume next that $G$ is a series composition of $H_1$ and $H_2$. Consider any feasible solution $S'$
to the problem on $G$ with parameter $i$. Let $S'_1$ and $S'_2$ be the restrictions of $S'$ to edges
of $H_1$ and $H_2$ respectively, and let $n_1$ and $n_2$ be the maximal integers such that
$S'_1$ and $S'_2$ are robust paths for $H_1$ and $H_2$ with parameters $n_1$ and $n_2$, respectively.
Observe that $i \leq n_1 + n_2 + 1$ must hold. Indeed if this would not be the case, then taking
any cut with $n_1 + 1$ edges in $S'_1$ and another cut with $n_2 + 1$ edges in $S'_2$ yields a cut
with $n_1 + n_2 + 2$ edges in $G$, contradicting the fact that $S'$ is a robust path with parameter $i$.
We conclude that the algorithm computes $S_i$ correctly in line $23$. Finally note
that the union any two robust paths for the graphs $H_1$ and $H_2$ with parameters $n_1$ and $n_2$ with 
$i \leq n_1 + n_2 + 1$ yield a feasible solution $S_i$. It follows that the minimum cost such robust
path is obtained as a minimum cost of a union of two solutions for $H_1$ and $H_2$, with robustness
parameters $j$ and $i-j-1$ for some value of $j$. To allow $S_i = S_i^1$ or $S_i = S_i^2$ we let 
$j$ range from $-1$ to $k$ and set $S^1_{-1} = S^2_{-1} = \emptyset$. This completes the proof of correctness.

To prove the bound on the running time, let $T(m,k)$ denote the running time of the algorithm on
a graph with $m$ edges and robustness parameter $k$. We assume that the graph is given by a 
hierarchical description, according to its decomposition into single edges.
The base case obviously takes $O(k)$ time. Furthermore we assume that the solution $(S_0, \cdots, S_k)$
is stored in a data structure for sets, which uses $O(1)$ time for generating empty sets and for
performing union operations.
If the graph is a series composition then the running time satisfies 
$T(m,k) \leq T(m',k) + T(m-m',k) + O(k)$ for some $m' < m$. If the graph is a parallel composition, then $T(m,k)$
satisfied the same inequality. We assume that the data structure, which stores the sets $S_i$
also contains the cost of the edges in the set. This value can be easily updates in time $O(1)$
when the assignment into $S_i$ is performed. It follows that $T(m,k) = O(mk) = O(nk)$ as required.
\end{proof}

\section{Proofs Omitted from Section~\ref{sec:intgap}}

\begin{proof}[Proof of Theorem~\ref{thm:integralitygap}]
	Consider an instance $I = (D,M,k)$ of \ftp. Let $x^*$ denote an optimal solution to the corresponding
	FRAC-FTP instance, and let $OPT = w(x^*)$ be its cost. Define a vector $y\in \mathbb{R}^A$ as follows.
	\begin{equation}\label{eq:gapproof1}
		y_e = 
		\begin{cases} 
			(k+1) x_e    & \mbox{if } e\not\in M \\ 
			\min\{1, (k+1)x_e\}  & \mbox{otherwise}.
		\end{cases}
	\end{equation}
	Clearly, it holds that $w(y) \leq (k+1)OPT$. 
	We claim that every $s$-$t$ cut in $D$ with capacities $y$ has capacity of at least $k+1$.
	Consider any such cut $C \subset A$, represented as the set of edges in the cut. Let 
	$M' = \{e\in M : x^*_e \geq \frac{1}{k+1}\}$ denote the set of faulty edges attaining high fractional
	values in $x^*$. Define $C' = C\cap M'$. If $|C'| \geq k+1$ we are clearly done. Otherwise, assume $|C'| \leq k$.
	In this case consider the failure scenario $F = C'$. Since $x^*$ is a feasible solution it must hold that
	\begin{equation}\label{eq:gapproof2}
		\sum_{e\in C\setminus C'} x^*_e \geq 1.
	\end{equation}
	Since for every edge $e\in C \setminus C'$ it holds that $y_e = (k+1)x^*_e$ we obtain
	\begin{equation}\label{eq:gapproof3}
		\sum_{e\in C\setminus C'} y_e \geq k+1,
	\end{equation} 
	as desired. From our observations it follows that the maximum flow in $D$ with capacities $y$ is at least
	$k+1$. Finally, consider the minimum cost $(k+1)$-flow $z^*$ in $D$ with capacities defined by
	\begin{equation}\label{eq:gapproof4}
		c_e = 
		\begin{cases} 
			k+1    & \mbox{if } e\not\in M \\ 
			1  & \mbox{otherwise}.
		\end{cases}
	\end{equation}
	From integrality of $c$ and the minimum-cost flow problem we can
	assume that $z^*$ is integral. Note that $y_e \leq c_e$ for every $e\in A$, hence any feasible $(k+1)$-flow with capacities
	$y$ is also a feasible $(k+1)$-flow with capacities $c$. From the previous observation it holds that $w(z^*) \leq w(y) \leq (k+1)OPT$.
	From Lemma~\ref{lem:disjointpaths} we know that $z^*$ is a feasible solution to the \ftp instance.
	This concludes the proof of the upper bound of $k+1$ for the integrality gap. 

	To prove the same lower bound we provide an infinite family of instances, containing instances with integrality
	gap arbitrarily close to $k+1$. Consider a graph with $p \gg k$ parallel edges with unit cost connecting $s$ and $t$, and let $M=A$. 
	The optimal solution to \ftp
	on this instance chooses any subset of $k+1$ edges. At the same time, the optimal solution to FRAC-FTP assigns
	a capacity of $\frac{1}{p-k}$ to every edge. This solution is feasible, since in every failure scenario, the number
	of edges that survive is at least $p-k$, hence the maximum $s$-$t$ flow is at least one. The cost of this solution is $\frac{p}{p-k}$.
	Taking $p$ to infinity yields instances with integrality gap arbitrarily close to $k+1$.
\end{proof}

\section{Proofs Omitted from Section~\ref{sec:FTF:apx-hardnes}}

\begin{proof}[Proof of Lemma~\ref{lemma:dorkf:hardness}]
	In the following it will be convenient to denote by $\overline{E}$ the
	edge-set of the bipartite complement of a bipartite graph with edge-set
	$E$.  Let $I = (G, c)$ be an instance of \wrma where $G=(U, W, E)$ is a
	balanced bipartite graph on $n$ vertices and $c \in \Z_{\geq
	0}^{\overline{E}}$. Our reduction is similar to the classical reduction
	from the perfect matching problem in bipartite graphs to the Max
	$s$-$t$ Flow problem.  We construct in polynomial-time an instance $I'
	= (D', c', s, t, M)$ of \ftf as follows.
	To obtain the digraph $D' = (V, A)$, we add to the vertex set of $G$ two new
	vertices $s$ and $t$ and add all arcs from $s$ to $U$ and from $W$ to
	$t$. Furthermore, we add all arcs from $U$ to $W$ and consider those
	that correspondond to an edge in $E$ as vulnerable.
	That is, we let $M \coloneqq \{ u w : u \in U,\, w \in W,\, u w \in E \}$. To
	complete the construction of $I'$, we let $\ell = n/2$, and let the arc-costs $c'$ be given by
	\[
		c'_{u w} \coloneqq
		\begin{cases}
			c_{u w}	&	\text{if $uw \in E(G)$, and}\\
			0 	&	\text{otherwise}.
		\end{cases}
	\] 
	
	For $X \subseteq E \cup \overline{E}$ we write $q(X)$ for the corresponding set of 
	arcs of $D'$. Similarly, for a set $Y \subseteq A$ of arcs we write
	$q^{-1}(Y)$ for the corresponding set of undirected edges of $G$.
	Observe that for a feasible solution $X$ to $I$, the arc set $q(X) \cup A_s \cup
	A_t$ is feasible for $I'$, where $A_s$ (resp., $A_t$) is the set of
	arcs leaving $s$ (resp., entering $t$).  Furthermore, a feasible
	solution $Y$ to $I'$ corresponds to a feasible solution $q^{-1}(Y
	\setminus (A_s \cup A_t))$ to $I$. Also note that, by the choice of
	$c'$, we have that the cost of two corresponding solutions is the same.
	It follows that since $\ell= n/2$, any polynomial-time $f(\ell)$-factor
	approximation algorithm for \dorkf implies a polynomial-time
	$f(n/2)$-factor approximation algorithm for \wrma, where $n = |U + W|$.
\end{proof}

\begin{proof}[Proof of Theorem~\ref{thm:dorkf:inapprox}]
	We give a polynomial-time cost-preserving reduction from \dsnp to \ftf
	via \wrma. The intermediate reduction step from \dsnp to \wrma is given
	in~\cite[Prop.~18]{robmatch}. Consider an instance $I$ of \dsnp on a
	weighted digraph $D = (V, A)$ on $n$ vertices with $k$ terminal pairs
	$(s_1, t_1), (s_2, t_2), \ldots, (s_k, t_k)$. According to the
	reduction given in the proof of~\cite[Prop.~18]{robmatch}, we obtain an
	instance of \wrma on a graph of at most $2(n+k) + 2(n-k) = 4n =: n'$
	vertices. By the arguments their proof, a $f(n')$-approximation
	algorithm for \wrma yields a $f(4n)$-approximation algorithm for \dsnp.
	We apply Proposition~\ref{lemma:dorkf:hardness} to conclude that an
	$f(\ell)$-approximation algorithm for \ftf yields a
	$f(2n)$-approximation algorithm for \dsnp. According to the result of
	Halperin and Krauthgamer~\cite{DBLP:conf/stoc/HalperinK03}, the problem
	\dsnp admits no polynomial-time $\log^{2-\varepsilon} n$-approximation
	algorithm for every $\varepsilon > 0$, unless $\NP \subseteq
	\mathsf{ZTIME}(n^{\polylog(n)})$. We conclude that \ftf admits no
	polynomial-time $\log^{2-\varepsilon}(\ell/2)$-factor approximation
	algorithm under the same assumption.
\end{proof}

\section{Proofs Omitted from Section~\ref{sec:FTF:apx}} \label{app:FTF:apx}

\begin{proof}[Proof of Theorem~\ref{thm:ftf:approx}]
	Let $I$ be an instance of \ftf on a digraph $D = (V, A))$ with weight
	$c \in \Z_{\geq 0}^A$, terminals $s$ and $t$,  vulnerable arcs $M$ and
	desired flow value  $\ell$.  We consider an instance $I' = (D, c, s, t,
	\ell+1, g)$ of \mcfp, where the arc capacities $g$ are given by 
	\[
		g_e \coloneqq
		\begin{cases}
			1 &   \text{if $e \in M$, and} \\
			1+ \frac 1k   &  \text{otherwise}
		\end{cases}
	\]
	An optimal solution to $I'$ can be computed computed in polynomial-time
	by standard techniques.  We saw in the discussion at the beginning of
	Section~\ref{sec:FTF:apx} that the set of arcs of positive flow in a solution
	to $I'$ yields a feasible solution to $I$. 

	It remains to bound the approximation ratio. Let $Y^*$ be an optimal
	solution to $I$ of cost $\opt(I)$. We first show that $Y^*$ contains
	$\ell+1$ disjoint $s$-$t$ paths.
	
	\setcounter{myclaim}{0}
	\begin{myclaim}
		$Y^*$ contains an $s$-$t$ flow of value $\ell+1$ with respect to the capacities $g$.	
	\end{myclaim}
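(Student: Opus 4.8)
The plan is to prove the claim through the Max-Flow Min-Cut Theorem: it suffices to show that every $s$-$t$ cut of $(V, Y^*)$ has capacity at least $\ell+1$ with respect to $g$. So I fix an arbitrary $s$-$t$ cut $C \subseteq Y^*$, viewed as the set of arcs crossing it, and bound $g(C) = \sum_{e \in C} g_e$ from below, recalling that $g_e = 1$ for $e \in M$ and $g_e = 1 + \tfrac{1}{\ell}$ for $e \in \overline{M}$.

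The only input I need is the feasibility of $Y^*$: for every $f \in M$, the arc set $Y^* \setminus f$ contains $\ell$ disjoint $s$-$t$ paths, hence by Menger's theorem every $s$-$t$ cut of $(V, Y^* \setminus f)$ has at least $\ell$ arcs. (If $M = \emptyset$ the instance is degenerate and feasibility is understood to mean $Y^*$ itself contains $\ell$ disjoint $s$-$t$ paths; the argument below for the all-safe case then applies directly.) I then distinguish two cases. If $C$ contains a vulnerable arc $f \in C \cap M$, then $C \setminus \{f\}$ is an $s$-$t$ cut of $(V, Y^* \setminus f)$, since deleting it leaves $(Y^* \setminus f) \setminus (C \setminus \{f\}) = Y^* \setminus C$, which has no $s$-$t$ path; thus $|C \setminus \{f\}| \ge \ell$, so $|C| \ge \ell+1$, and since $g_e \ge 1$ for every arc we get $g(C) \ge |C| \ge \ell+1$. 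If instead $C \cap M = \emptyset$, I pick any $f \in M$; now $C$ is itself an $s$-$t$ cut of $(V, Y^* \setminus f)$, so $|C| \ge \ell$, and as every arc of $C$ is safe, $g(C) = \left(1 + \tfrac{1}{\ell}\right)|C| \ge \left(1 + \tfrac{1}{\ell}\right)\ell = \ell+1$.

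Combining the two cases, every $s$-$t$ cut of $(V, Y^*)$ has $g$-capacity at least $\ell+1$, and Max-Flow Min-Cut yields the claimed $s$-$t$ flow of value $\ell+1$ supported on $Y^*$. There is no genuine obstacle; the only point to get right is that the capacity $1 + 1/\ell$ on safe arcs is exactly calibrated so that $\left(1 + 1/\ell\right)\ell = \ell+1$, which is precisely what the all-safe case needs, while in the other case the weaker bound $g_e \ge 1$ already suffices because the presence of a vulnerable arc in the cut forces $|C| \ge \ell+1$.
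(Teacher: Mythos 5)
Your proof is correct and follows essentially the same route as the paper: the paper likewise observes that feasibility forces every $s$-$t$ cut of $Y^*$ to contain either at least $\ell$ safe arcs or at least $\ell+1$ arcs, computes $g(C)\ge \ell+1$ in each case, and invokes max-flow min-cut. Your case split (vulnerable arc in the cut vs.\ none) is just an explicit justification of that dichotomy, which the paper asserts without detail.
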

	\begin{proof}
		First observe that in any feasible solution to $I$, every $s$-$t$ cut
		contains either at least $\ell$ safe arcs or at least $\ell+1$ arcs.
		Now, an $s$-$t$ cut $Z$ in $Y^*$ having at least $\ell$ safe
		arcs satisfies $g(Z) \geq (1 + \frac{1}{\ell}) \cdot \ell =
		\ell+1$.  On the other hand, an $s$-$t$ cut $Z'$ in $Y^*$
		containing at least $\ell+1$ arcs satisfies $g(Z') \geq
		\ell+1$.  Hence, each $s$-$t$ cut in $Y^*$ has capacity at
		least $\ell+1$. By the max-flow-min-cut theorem there is an
		$s$-$t$ flow of value at least $\ell+1$.
	\end{proof}

	The theorem now follows from the next claim.

	\begin{myclaim}
		\label{lem:flow-solution-guarantee}
		An optimal solution to $I'$ has cost at most $(\ell + 1)\cdot \opt(I)$.
	\end{myclaim}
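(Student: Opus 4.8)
The plan is to show that the arc-set $S$ returned by the algorithm --- the arcs of positive flow in an optimal solution to $I'$ --- is not much more expensive than $\opt(I)$, the key point being that one may take this optimal flow to be (a scaled) integral one, so that every arc of $S$ carries a non-negligible amount of flow. Concretely, I would first pass from $I'$ to the equivalent \mcfp instance $\tilde I'$ obtained by multiplying every capacity by $\ell$ (so vulnerable arcs get capacity $\ell$ and safe arcs capacity $\ell+1$, both integers) and the target flow value by $\ell$ (so it becomes $\ell(\ell+1)$, again an integer). A flow $z$ is feasible for $I'$ exactly when $\ell z$ is feasible for $\tilde I'$, with $\ell$ times the cost, so the two instances have the same optima up to scaling. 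By the integrality property of minimum-cost flow, $\tilde I'$ admits an integral optimal solution $\tilde z$, computable in polynomial time; put $z^* := \tilde z/\ell$. Then $z^*$ is an optimal solution to $I'$ with $z^*_e \in \frac1\ell\Z$ for every arc, and for $S := \{e : z^*_e>0\}$ we get $z^*_e \ge \frac1\ell$ for all $e\in S$. Moreover, as recalled in the discussion preceding the theorem, $S$ is a feasible solution to the \ftf instance $I$.

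Next I would chain two cost estimates. Since every arc of $S$ carries at least $\tfrac1\ell$ units of flow,
\[
  w(S)=\sum_{e\in S}c_e \;\le\; \ell\sum_{e\in S}c_ez^*_e \;=\; \ell\sum_{e\in A}c_ez^*_e .
\]
For the other estimate, let $h$ be the $s$-$t$ flow of value $\ell+1$ with support contained in $Y^*$ that is guaranteed by Claim~1. Then $h$ is feasible for $I'$, so optimality of $z^*$ gives $\sum_{e\in A}c_ez^*_e \le \sum_{e\in A}c_eh_e$, and since $h_e\le g_e$ with $g_e=1$ on vulnerable arcs and $g_e=1+\tfrac1\ell$ on safe arcs,
\[
  \sum_{e\in A}c_eh_e \;\le\; \sum_{e\in Y^*}c_eg_e \;\le\; \Bigl(1+\tfrac1\ell\Bigr)\sum_{e\in Y^*}c_e \;=\; \Bigl(1+\tfrac1\ell\Bigr)\opt(I).
\]
Combining the displays yields $w(S)\le \ell\bigl(1+\tfrac1\ell\bigr)\opt(I)=(\ell+1)\opt(I)$, which is exactly the claim; the theorem then follows because $S$ is a polynomial-time computable feasible solution to $I$.

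I expect the only delicate point to be securing the lower bound $z^*_e\ge \tfrac1\ell$ on $S$: without an integrality (or vertex) argument, a minimum-cost flow of value $\ell+1$ could route an arbitrarily small amount of flow through an arbitrarily expensive safe arc, so the \emph{flow cost} $\sum_e c_e z^*_e$ stays small while the \emph{support cost} $w(S)$ is uncontrolled. The scaling trick above is what pins this down; equivalently, one may take $z^*$ to be a vertex of the flow polytope of $I'$ and use total unimodularity of the incidence matrix to conclude that $\ell z^*$ is integral. Everything else --- feasibility of $S$ for $I$, and the existence of the flow $h$ inside $Y^*$ --- is already available from the preamble of this section and from Claim~1, so the remaining work is just the short computation above.
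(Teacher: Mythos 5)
Your proof is correct and follows essentially the same route as the paper's: use the integrality of minimum-cost flow after scaling capacities by $\ell$ to guarantee every support arc carries at least $1/\ell$ units of flow (hence support cost $\le \ell$ times flow cost), then bound the optimal flow cost by the cost of the flow of value $\ell+1$ living inside $Y^*$ from Claim~1, losing only the factor $1+\tfrac1\ell$ from the safe-arc capacities. If anything, your chaining of the two estimates is bookkept more cleanly than the paper's displayed inequalities, which attach the $(1+\tfrac1\ell)$ over-payment factor to the support-versus-flow-cost step rather than to the comparison with $\opt(I)$; the final bound $\ell\bigl(1+\tfrac1\ell\bigr)\opt(I)=(\ell+1)\opt(I)$ is the same.
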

	\begin{proof}
		Let $f^* \in \mathbb{Q}^A$ be an optimal $s$-$t$ flow with
		respect to the capacities $g$.  Furthermore, let $Y$ be the set
		of arcs of positive flow, that is $Y := \{ e \in A \mid f^*_e >
		0 \}$.  Let $Y_M = Y \cap M$ be the vulnerable arcs in $Y$ and
		let $Y_S = Y \setminus Y_M$ be the safe arcs. First,
		we may assume that each arc $e \in Y$ has flow value at least
		$f^*_e \geq 1/\ell$, since each arc has capacity either $1$ or
		$1+\frac{1}{\ell}$. This is true since we could scale the arc
		capacities $g$ by a factor $\ell$, which allows us to compute
		(in polynomial time) an integral  optimal solution with respect
		to the scaled capacity function, using any augmenting paths
		algorithm for \mcfp. In addition, observe that we may pay a
		factor of at most $1 + \frac{1}{\ell}$ too much for each safe
		arc since the capacity of the safe arc is $1 +
		\frac{1}{\ell}$.  Therefore, we may bound the cost of a
		safe arc $e \in Y_S$ by $ \ell \cdot (1 +
		\frac{1}{\ell}) \cdot c_e \cdot f_e$ and the cost of each
		vulnerable arc $e \in Y_M$ by $\ell \cdot c_e \cdot f_e$, where
		$f_e$ is the flow-value of arc $e$ according to the solution
		$Y$.  Hence, we obtain
		\begin{equation*}
			\begin{aligned}
				c(Y) & = c(Y_S) + c(Y_M) \\
				& \leq \ell \cdot \left( (1 + \frac{1}{\ell}) \cdot \sum_{e \in Y_S} c_e  \cdot f^*_e  + \sum_{e \in Y_M} c_e \cdot f^*_e \right)  \\
				& \leq \ell  \cdot (1 + \frac{1}{\ell}) \cdot \left( \sum_{e \in Y_S} c_e \cdot f^*_e + \sum_{e \in Y_M} c_e \cdot f^*_e \right)  \\
				& \leq (\ell+1) \cdot \opt(I)\enspace,
			\end{aligned}
			\label{eq:dorkf:lp:bound}
		\end{equation*}
		where the first inequality follows from the two arguments above
		and the last inequality follows from Claim 1.
	\end{proof}
\end{proof}

The remainder of this section is devoted to proving
Theorem~\ref{thm:ALG:DP:optimal}. For this purpose we need another auxiliary
graph, that we use as a certificate of feasibility.  For a graph $H = (V, A^*)$
such that $X_0 \subseteq A^* \subseteq A$, we denote the corresponding residual
graph by $D_{X_0}(A^*) = (V, A')$. The arc-set $A'$ is given by $A' := \{ uv
\in A \mid uv \notin X_0 \} \cup \{ vu \in A \mid uv \in X_0 \}$.
An illustration of this graph is given in
Figure~\ref{fig:augmentation:feasibility:example}. We first show that in a
feasible solution $Y \subseteq A \setminus X_0$, each vulnerable arc in $X_0$
is contained in a strongly connected component of $D_{X_0}(X_0 \cup Y)$.

\begin{figure}[t]
	\subcaptionbox{Graph $D$ and $X_0$ consisting of two disjoint paths.\label{fig:ex1}}[0.5\textwidth]
	{
			\begin{tikzpicture}
				\tikzstyle{hvertex}=[thick,circle,inner sep=0.cm, minimum size=2.2mm, fill=white]
				\tikzstyle{overtex}=[thick,circle,inner sep=0.cm, minimum size=2.2mm, fill=white, draw=black]

				\node[overtex] (v1) {};
				\node[hvertex, below of=v1] (v2) {};      
				\node[overtex, left of=v2, label=left:$s$] (s0) {};
				\node[overtex, below of=v2] (v3) {};

				\node[overtex,right of=v1] (v21) {};
				\node[hvertex,right of=v2] (v22) {};
				\node[overtex,right of=v3] (v23) {};

				\node[overtex,right of=v21] (v31) {};
				\node[overtex,right of=v22] (v32) {};
				\node[overtex,right of=v23] (v33) {};

				\node[overtex,right of=v31] (v41) {};
				\node[overtex,right of=v32] (v42) {};
				\node[overtex,right of=v33] (v43) {};
%      
%      \node[hvertex,right of=v42] (v52) {};
				\node[overtex,right of=v42, label=right:$t$] (t0) {};

				\draw[edge, ->, black] (s0) -- (v1);
				\draw[edge, ->, black] (s0) -- (v3);
				\draw[edge, -> , red] (v1) -- (v21);
				\draw[edge, -> , red] (v3) -- (v23);
				\draw[edge, ->, black] (v21) -- (v31);
				\draw[edge, ->, black] (v23) -- (v33);
				\draw[edge, -> , red] (v31) -- (v41);
				\draw[edge, -> , red] (v33) -- (v43);
				\draw[edge, ->, black] (v41) -- (t0);   
				\draw[edge, ->, black] (v43) -- (t0);

				\draw[edge, -> , red!30!white, dashed] (v1) to [bend right=60] (v21);
				\draw[edge, -> , red!30!white, dashed] (v3) -- (v32);  
				\draw[edge, ->, black!30!white, dashed] (v32) -- (v42);  
				\draw[edge, -> , red!30!white, dashed] (v42) -- (v41); 
				\draw[edge, -> , red!30!white, dashed] (v31) -- (v32); 
				\draw[edge, -> , red!30!white, dashed] (v42) -- (v43);  

				\draw[edge, -> , red!30!white] (v1) to [bend left=45] (v41);
				\draw[edge, -> , black!30!white] (v31) -- (v42); 
				\draw[edge, ->, red!30!white] (v32) -- (v33); 
			\end{tikzpicture}
		}
		\hspace{-1em}
		\subcaptionbox{Residual graph $D_{X_0}(X_0 \cup Y)$.\label{fig:augmentation:feasibility:strong:connected}}[0.5\linewidth]
		{
			\begin{tikzpicture}[vertex/.append style={node distance=6em}]
				\tikzstyle{hvertex}=[thick,circle,inner sep=0.cm, minimum size=2.2mm, fill=white]
				\tikzstyle{overtex}=[thick,circle,inner sep=0.cm, minimum size=2.2mm, fill=white, draw=black]

				\node[overtex] (v1) {};
				\node[hvertex, below of=v1] (v2) {};      
				\node[overtex, left of=v2, label=left:$s$] (s0) {};
				\node[overtex, below of=v2] (v3) {};

				\node[overtex,right of=v1] (v21) {};
				\node[hvertex,right of=v2] (v22) {};
				\node[overtex,right of=v3] (v23) {};

				\node[overtex,right of=v21] (v31) {};
				\node[overtex,right of=v22] (v32) {};
				\node[overtex,right of=v23] (v33) {};

				\node[overtex,right of=v31] (v41) {};
				\node[overtex,right of=v32] (v42) {};
				\node[overtex,right of=v33] (v43) {};
%      
%      \node[hvertex,right of=v42] (v52) {};
				\node[overtex,right of=v42, label=right:$t$] (t0) {};

				\draw[edge, <-, black] (s0) -- (v1);
				\draw[edge, <-, black] (s0) -- (v3);
				\draw[edge, <- , red] (v1) -- (v21);
				\draw[edge, <- , red] (v3) -- (v23);
				\draw[edge, <-, black] (v21) -- (v31);
				\draw[edge, <-, black] (v23) -- (v33);
				\draw[edge, <- , red] (v31) -- (v41);
				\draw[edge, <- , red] (v33) -- (v43);
				\draw[edge, <-, black] (v41) -- (t0);   
				\draw[edge, <-, black] (v43) -- (t0);    
				\draw[edge, -> , red] (v1) to [bend right=60] (v21);
				\draw[edge, -> , red] (v3) -- (v32);  
				\draw[edge, ->, black] (v32) -- (v42);  
				\draw[edge, -> , red] (v42) -- (v41); 
				\draw[edge, -> , red] (v31) -- (v32); 
				\draw[edge, -> , red] (v42) -- (v43);
			\end{tikzpicture}
		}
		%\vspace{-1em}
		\caption{Illustration of the structure of feasible solutions to
		\onerkdpa. Unsafe arcs are red, safe arcs are black. In Fig.~\ref{fig:ex1}: edges of $X_0$ are black and red; edges of $A - X_0$ are
light gray and light red. Dashed edges belong to $Y$. 
		\label{fig:augmentation:feasibility:example}}
	\end{figure}
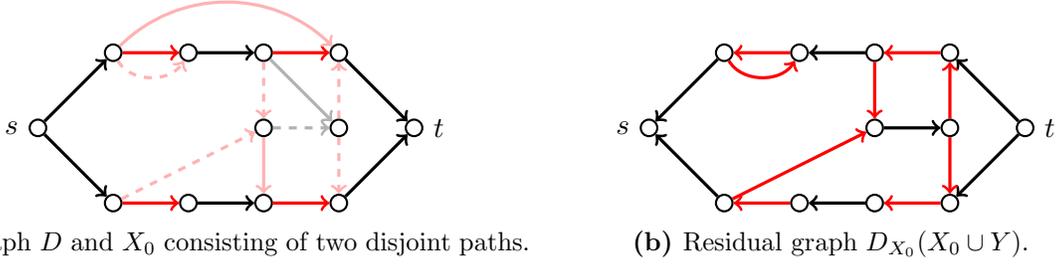

\begin{lemma}
	\label{lem:augm:residual-graph-feasibility}
	Let  $Y \subseteq A \setminus X_0$. Then $Y$ is a feasible solution to
	$I$ if and only if each vulnerable arc $f \subseteq M \cap X_0$ is
	contained in a strongly connected component of $D_{X_0}(X_0 \cup
	Y)$.
\end{lemma}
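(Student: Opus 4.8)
The plan is to read the residual graph $D_{X_0}(X_0\cup Y)$ as what it really is: the residual graph of the integral $\ell$-flow $g$ on the unit-capacity graph $(V,X_0\cup Y)$ defined by $g_e=1$ for $e\in X_0$ and $g_e=0$ otherwise (recall $X_0$ is exactly the union of the $\ell$ disjoint paths $P_1,\dots,P_\ell$). With this viewpoint the lemma becomes a standard statement about augmenting cycles. First I would reduce to the relevant faults: since $Y\subseteq A\setminus X_0$ and $X_0$ already contains $\ell$ disjoint $s$-$t$ dipaths, for any $f\in M$ with $f\notin X_0$ the set $(X_0\cup Y)\setminus f$ still contains $X_0$, hence is fine; so $Y$ is feasible for $I$ if and only if for every $f\in M\cap X_0$ the graph $(V,(X_0\cup Y)\setminus f)$ contains $\ell$ disjoint $s$-$t$ dipaths, equivalently (by Menger) admits an $s$-$t$ flow of value $\ell$.

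Next I would fix $f=uv\in M\cap X_0$ and observe, directly from the definition, that $D_{X_0}(X_0\cup Y)$ contains the reverse arc $vu$ coming from $f\in X_0$ but no forward copy of $f$; arcs of $Y$ appear forward and arcs of $X_0$ appear reversed. Consequently $u$ and $v$ lie in a common strongly connected component of $D_{X_0}(X_0\cup Y)$ if and only if there is a directed $u$-$v$ path in $D_{X_0}(X_0\cup Y)$, because the reverse direction $v\rightsquigarrow u$ is always realized by the single arc $vu$ contributed by $f$. So the lemma will follow once I prove, for each such $f$, the equivalence
\[
(V,(X_0\cup Y)\setminus f)\text{ has an $s$-$t$ flow of value }\ell\quad\Longleftrightarrow\quad D_{X_0}(X_0\cup Y)\text{ has a directed $u$-$v$ path.}
\]

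For ``$\Leftarrow$'' I would take a simple $u$-$v$ path $Q$ in $D_{X_0}(X_0\cup Y)$; it cannot use the arc $vu$, so $Q$ together with $vu$ is a directed cycle in the residual graph, i.e.\ an augmenting cycle for $g$. Pushing one unit of flow around it keeps the flow value $\ell$, sets $g_f$ to $0$, and produces an integral flow supported on $(X_0\cup Y)\setminus f$, whose path decomposition yields the $\ell$ disjoint dipaths. For ``$\Rightarrow$'' I would take an integral flow $g'$ of value $\ell$ supported on $(X_0\cup Y)\setminus f$ (the $\ell$ disjoint dipaths); then $g'-g$ is a circulation which, pushed onto $D_{X_0}(X_0\cup Y)$, decomposes into directed cycles, and since $g'_f=0<1=g_f$ the reverse arc $vu$ of $f$ is traversed by one of these cycles. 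Deleting $vu$ from that simple cycle leaves a directed $u$-$v$ path in $D_{X_0}(X_0\cup Y)$. Combining the reduction, the SCC reformulation, and this equivalence proves the lemma, including the cases $u=s$ or $v=t$, which behave identically.

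I expect the only real obstacle to be bookkeeping: keeping the capacity and residual-graph conventions consistent, verifying that all augmenting cycles and path decompositions stay within the arc set $X_0\cup Y$ (which is automatic, since every arc of $D_{X_0}(X_0\cup Y)$ is derived from $X_0$ or $Y$), and checking that a simple $u$-$v$ path in the residual graph never uses the arc $vu$ (so that appending $vu$ genuinely closes a cycle). None of these require more than routine verification.
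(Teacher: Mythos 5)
Your proposal is correct. The ``if'' direction is in substance identical to the paper's: the paper takes the $u$--$v$ path $P$ in the residual graph guaranteed by strong connectivity, sets $(X_0 - P'_{X_0} - f)\cup P'_Y$, and verifies by degree counting that this is again a union of $\ell$ disjoint $s$-$t$ paths --- which is exactly your ``push one unit around the augmenting cycle $Q+vu$'' step written out by hand. Your preliminary reduction (faults $f\in M\setminus X_0$ are harmless because $(X_0\cup Y)\setminus f\supseteq X_0$) is a point the paper leaves implicit, and it is worth making explicit as you do. The only genuine divergence is in the ``only if'' direction: the paper argues via the cut side of max-flow/min-cut --- if $v$ is not reachable from $u$ in the residual graph, the reachable set $L$ contains $s$ but not $t$, and $\delta^+(L)$ is an $s$-$t$ cut of size at most $\ell$ in $X_0\cup Y$ containing the vulnerable arc $f$, contradicting feasibility --- whereas you argue via the flow side, decomposing the circulation $g'-g$ into residual cycles and locating the reverse arc $vu$ on one of them. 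The two arguments are dual and equally routine; the paper's cut version has the mild advantage of not needing to first produce an integral flow $g'$ supported on $(X_0\cup Y)\setminus f$ (it works directly with the hypothesis that no such flow exists), while yours isolates a clean, reusable equivalence between feasibility under the fault $f$ and residual $u$-$v$ reachability. Either write-up would be acceptable.
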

\begin{proof}[Proof of Lemma~\ref{lem:augm:residual-graph-feasibility}]
	We first prove the ``if'' part, so let $f = uv$ be a
	vulnerable arc in $X_0$ that is contained in a strongly connected
	component of $D_{X_0}(X_0 \cup Y)$.  Since $f \in X_0$, the arc $f$ is
	reversed in $D_{X_0}(X_0 \cup Y)$ and since $f$ is on a cycle $C$ in
	$D_{X_0}(X_0 \cup Y)$, there is a path $P$ from $u$ to $v$ in
	$D_{X_0}(X_0 \cup Y)$. Let $P'$ be the path corresponding to $P$ in
	$X_0 \cup Y$.  Note that $P'$ is not a directed path in $D$ and that an
	arc $e$ on $P'$ is traversed forward if $e \in P' \cap Y$ and traversed
	backward if $e \in P' \cap X_0$. We partition $P'$ into two disjoint
	parts $P'_{X_0} = P' \cap X_0$ and $P'_Y = P' \cap Y$.  We now argue
	that $(X_0 - P'_{X_0} - f) \cup P'_Y$ contains $\ell$ disjoint
	$s$-$t$ paths.
  %in $X_0 \cup Y$. 
	Clearly, we have $(X_0 - P'_{X_0} - f) \cup P'_Y \subseteq X_0 \cup
	Y$. Furthermore, by our assumption that $X_0$ is the union of $\ell$
	$s$-$t$ edge-dijoint paths, for each vertex $v \in V - \{s, t\}$, we
	have $\delta^+(v) = \delta^-(v)$ and $\delta^+(s) = \delta^-(t) =
	\ell$.  Since $C$ is a cycle in $D_{X_0}(X_0 \cup Y)$ the degree
	constraints also hold for $(X_0 - P'_{X_0} - f) \cup P'_Y$.  Hence
	$(X_0 - P'_{X_0} - f) \cup P'_Y$ is the union of $\ell$ disjoint
	$s$-$t$ paths.

	We now prove the ``only if'' part. Let $f = uv \in X_0$ be a vulnerable
	arc and suppose $f$ is not contained in a strongly connected component
	of $D_{X_0}(X_0 \cup Y)$.  Let $L \subseteq V$ be the set of vertices
	that are reachable from $u$ in $D_{X_0}(X_0 \cup Y)$ and let $R
	= V - L$.  Note that $s \in L$, since $u$ is on some $s$-$t$ path in
	$X_0$ and $t \in R$, since otherwise there is a path from $u$ to $v$ in
	$D_{X_0}(X_0 \cup Y)$ (since every arc in $X_0$ is reversed in
	$D_{X_0}(X_0 \cup Y)$). 
	Let $L' = \{ x_1, \dots, x_\ell \} \subseteq L$, $x_i \in P_i$ for 
	$1 \leq i \leq \ell$, be the vertices of $L$ that are closest to $t$ in
	$X_0$.
	We now claim that $\delta^+ (L)$ is a cut of size $\ell$ in $X_0 \cup
	Y$ containing $f$.  Since $f$ is vulnerable this contradicts the
	feasibility of $X_0 \cup Y$.  We have $f \in \delta^+ (L)$ in $X_0
	\cup Y$, since otherwise $f$ is contained in a strongly connected
	component of $D_{X_0}(X_0 \cup Y)$. By the construction of $L$, we have $Y \cap \delta^+(L) = \emptyset$.
	Since $X_0$ is the union of $\ell$ disjoint paths, the set
	$\delta^+(L)$ has size at most $\ell$, proving our claim, since this
	implies that $X_0 \cup Y$ is not feasible.
\end{proof}
\begin{proof}[Proof of Theorem~\ref{thm:ALG:DP:optimal}]
	\setcounter{myclaim}{0}
	Let $\mathcal{P}$ be a shortest path in the auxiliary graph $\mathcal{D}$
	and let $Y$ be the solution computed by Algorithm \ref{ALG:DP:augprob}.
	We first establish the feasibility of $Y$.
	\begin{myclaim}\label{clm:feasibility:algo1}
		The solution $Y$ computed by Algorithm \ref{ALG:DP:augprob} is feasible.
	\end{myclaim}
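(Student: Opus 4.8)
The plan is to derive feasibility of $Y$ from the combinatorial criterion of Lemma~\ref{lem:augm:residual-graph-feasibility}: it suffices to show that every vulnerable arc $f = uv \in M \cap X_0$ lies in a strongly connected component of the residual graph $D_{X_0}(X_0 \cup Y)$, which by the corrected definition is $(V,\, Y \cup \overleftarrow{X_0})$. Since $uv \in X_0$, the reversed arc $vu$ is always present in this graph, so it is enough, for an arbitrary such $f$, to exhibit a directed $u$–$v$ walk in $D_{X_0}(X_0 \cup Y)$; together with $vu$ this places $u$, $v$, and hence $f$, into one strongly connected component.

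First I would locate the link of the shortest path $\mathcal{P}$ that ``straddles'' $f$. Let the successive nodes of $\mathcal{P}$ be $x^0, x^1, \ldots, x^m$ with $x^0 = (s,\dots,s)$ and $x^m = (t,\dots,t)$, and write $x_i^j$ for the $i$-th coordinate of $x^j$. The arc $f$ lies on some path $P_i$, and because every link is non-decreasing in each coordinate, the sequence $s = x_i^0 \le x_i^1 \le \dots \le x_i^m = t$ is monotone along $P_i$. Choosing $j$ minimal with $v \le x_i^j$ gives $x_i^{j-1} \le u < v \le x_i^j$ on $P_i$, hence $f \in Q_i(x_i^{j-1}, x_i^j)$. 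As $f$ is vulnerable, the link $e_j = (x^{j-1}, x^j)$ of $\mathcal{P}$ cannot belong to $\mathcal{A}_1$, so $e_j \in \mathcal{A}_2$; in particular the \dsnp instance $I(x^{j-1}, x^j)$ is feasible, and Algorithm~\ref{ALG:DP:augprob} adds the arcs of its optimal solution $Z_j$ that lie in $A \setminus X_0$ to $Y$.

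Next I would read $Z_j$ inside the residual graph. By construction $Z_j$ uses only arcs of $A \setminus X_0$ (now contained in $Y$) and reversed arcs $\overleftarrow{Q_l}(x_l^{j-1}, x_l^j)$, each of which is the reversal of an arc of $X_0$; hence $Z_j$ is a subgraph of $D_{X_0}(X_0 \cup Y)$. Since $(x_i^{j-1}, x_i^j)$ is a terminal pair of $I(x^{j-1}, x^j)$, the solution $Z_j$ contains a directed $x_i^{j-1}$–$x_i^j$ path $R$, which therefore also lives in $D_{X_0}(X_0 \cup Y)$. Concatenating the sub-walk of $\overleftarrow{Q_i}(x_i^{j-1}, x_i^j)$ from $u$ to $x_i^{j-1}$, then $R$, then the sub-walk of $\overleftarrow{Q_i}(x_i^{j-1}, x_i^j)$ from $x_i^j$ to $v$ — all reversed $X_0$-arcs, hence present in the residual graph — yields the required $u$–$v$ walk. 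As $f$ was arbitrary, Lemma~\ref{lem:augm:residual-graph-feasibility} gives that $Y$ is feasible, which is exactly Claim~\ref{clm:feasibility:algo1} used in the proof of Theorem~\ref{thm:ALG:DP:optimal}.

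The main obstacle I anticipate is the bookkeeping between the three graphs in play — the auxiliary graph $D'$ in which $I(x^{j-1}, x^j)$ is posed (with its zero-cost reversed arcs), the residual graph $D_{X_0}(X_0 \cup Y)$, and the original graph $D$ — and, in particular, being careful that the reversed arcs occurring in $Z_j$ are genuinely reversals of $X_0$-arcs (so that they survive in the residual graph) while only the true $A \setminus X_0$ arcs of $Z_j$ are what is added to $Y$. For cleanliness one should also record that we may assume $Z_j$ contains all of $\overleftarrow{Q_i}(x_i^{j-1}, x_i^j)$, since these arcs have cost zero; this is also the natural reading of the strong-connectivity test governing membership in $\mathcal{A}_2$, although that test itself is needed only for the optimality half of Theorem~\ref{thm:ALG:DP:optimal} and not for the feasibility claim proved here.
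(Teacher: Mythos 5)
Your proof is correct and follows essentially the same route as the paper's: reduce to the strong-connectivity criterion of Lemma~\ref{lem:augm:residual-graph-feasibility}, identify the link of $\mathcal{P}$ straddling the vulnerable arc (which must lie in $\mathcal{A}_2$), and use the \dsnp solution together with the reversed $X_0$-arcs to close the cycle through $f$. You are merely more explicit than the paper about the concatenation of reversed subpaths and about which arcs of the \dsnp solution actually land in $Y$ versus in the residual graph, which is a welcome clarification rather than a deviation.
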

	\begin{proof}
		For a link $xy \in \mathcal{P} \cap \mathcal{A}_2$, let $Y(x, y)$ be an optimal
		solution to the instance $I(x, y)$ of \dsnp. We now argue that $X_0 \cup Y$ is
		feasible to the instance $I$ of \onerkdpa. By
		Lemma~\ref{lem:augm:residual-graph-feasibility}, it suffices to
		show that each vulnerable arc of $X_0$ is contained in some
		strongly connected component of $D_{X_0}(X_0 \cup Y)$.
		Consider a vulnerable arc $x_i y_i \in X_0$ on a path $P_i$
		contained in $X_0$. Two nodes $x$ and $y$ containing $x_i$
		and $y_i$, respectively, cannot be connected by a link in
		$\mathcal{A}_1$ of $\mathcal{D}$, since $x_i y_i$ is
		vulnerable.  Let $u = (u_1, u_2, \ldots, u_\ell)$ (resp., $v =
		(v_1, v_2, \ldots, v_\ell)$) be the node on $\mathcal{P}$ such
		that $u_i$ (resp., $v_i$) is closest to $x_i$ (resp., $y_i$) on
		the subpath from $s$ to $x_i$ (resp., $y_i$ to $t$) of $P_i$.
		These two nodes exist since $\mathcal{P}$ is a path from $(s,
		\ldots, s)$ to $(t, \ldots, t)$ in $\mathcal{D}$.  If there is
		more than one such node, let $u$ be a greatest and $v$ be a
		smallest such node with respect to the ordering $\leq$.
		By this choice of $u$ and $v$,  we have that $uv \in
		\mathcal{P} \cap \mathcal{A}_2$.  Therefore, the optimal
		solution $Y(u, v)$ to $I(u, v)$ has been added to $Y$ by
		Algorithm~\ref{ALG:DP:augprob}.  Since $Y(u, v)$ connects $x_i$
		and $y_i$ in $D_{X_0}(X_0 \cup Y)$, we have that the arc $x_i
		y_i$ is contained in a strongly connected component in the
		residual graph $D_{X_0}(X_0 \cup Y)$.  
	\end{proof}
		
	%Let $c \in \Z^A$ be the arc-weights of the input graph and
	Let $Y^*$ be an optimal solution to $I$ of weight $\opt(I)$. We now
	show that $Y$ computed by Algorithm~\ref{ALG:DP:augprob} is optimal.
	Observe that  the weight of $Y$ is equal to $c'(\mathcal{P})$, so it
	suffices to show that $w(\mathcal{P}) \leq \opt(I)$. To prove the
	inequality, we first introduce a partial ordering of the strong
	components of $D_{X_0}(X_0 \cup Y^*)$. Using this ordering we can
	construct a path $\mathcal{P}'$ in $\mathcal{D}$ from $(s, \ldots, s)$
	to $(t, \ldots, t)$ of cost $w(\mathcal{P'}) = \opt(I)$. We conclude
	by observing that a shortest path $\mathcal{P}$ has cost at most
	$w(\mathcal{P}')$.

	We introduce some useful notation. Let $Z$ be a strongly connected
	component of $D_{X_0}(X_0 \cup Y^*)$ and let $L(Z) = \{ i \in \{1, 2,
	\ldots, \ell\} \mid E(P_i) \cap E(Z) \neq \emptyset \}$ be the set of
	indices of the paths $P_1, ..., P_\ell$ that have at least one edge in
	common with $Z$ (ignoring orientations).  Additionally, for each $i \in
	L(Z)$, let $s_i(Z)$ be the vertex of $Z$ that is closest to $s$ on
	$P_i$ and let $S(Z) := \bigcup_{i \in L(Z)} s_i(Z)$.  Similarly, for
	each $i \in L(Z)$ let $t_i(Z)$ be the vertex of $Z$ that is closest to
	$t$ on $P_i$ and let $T(Z) := \bigcup_{i \in L(Z)} t_i(Z)$.

	\begin{myclaim}\label{clm:ordering}
		Let $e_i^1, e_i^2 \in P_i \cap M$ be two vulnerable arcs of a
		path $P_i$, such that their corresponding connected components
		$Z_1$ and $Z_2$ of $D_{X_0}(X_0 \cup Y^*)$ are disjoint.  If
		$e_i^1$ precedes $e_i^2$ on $P_i$, then $t_i(Z_1) < s_i(Z_2)$
		for every $i \in L(Z_1) \cap L(Z_2)$.
	\end{myclaim}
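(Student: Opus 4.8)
The plan is to argue entirely inside the residual graph $R \coloneqq D_{X_0}(X_0 \cup Y^*)$, using that every arc of $X_0$ --- hence every arc of each $P_j$ --- is reversed in $R$. Since $Y^*$ is feasible, Lemma~\ref{lem:augm:residual-graph-feasibility} ensures that $e_i^1$ and $e_i^2$ lie in strongly connected components $Z_1$ and $Z_2$ of $R$; in particular both endpoints of $e_i^1$ lie in $V(Z_1)$ and both endpoints of $e_i^2$ lie in $V(Z_2)$. The key step, which I expect to be the main obstacle, is the following \emph{contiguity} property: for any strongly connected component $Z$ of $R$ and any $j \in L(Z)$, the set $V(Z)\cap V(P_j)$ is precisely the vertex set of the subpath $Q_j(s_j(Z),t_j(Z))$. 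The inclusion ``$\subseteq$'' is immediate since $s_j(Z)$ and $t_j(Z)$ are the extreme vertices of $Z$ on $P_j$. For ``$\supseteq$'', pick $z$ on $Q_j(s_j(Z),t_j(Z))$ and set $x\coloneqq s_j(Z)$, $y\coloneqq t_j(Z)$, so that $x\le z\le y$ on $P_j$; because the arcs of $P_j$ are reversed in $R$, following the reversed $P_j$ yields a directed $y$-to-$z$ path and a directed $z$-to-$x$ path in $R$, while strong connectivity of $Z$ yields a directed $x$-to-$y$ path inside $Z$; concatenating them puts $z$ on a closed walk through $x\in V(Z)$, hence $z\in V(Z)$.

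Next I would handle the case $j=i$. By contiguity, $V(Z_1)\cap V(P_i)$ and $V(Z_2)\cap V(P_i)$ are vertex sets of subpaths of $P_i$, and they are disjoint because $Z_1\cap Z_2=\emptyset$; since the vertices of the simple path $P_i$ are linearly ordered, one of these subpaths lies entirely before the other on $P_i$. As both endpoints of $e_i^1$ lie in the first subpath, both endpoints of $e_i^2$ lie in the second, and $e_i^1$ precedes $e_i^2$ on $P_i$, the subpath for $Z_1$ is the earlier one; in particular $t_i(Z_1) < s_i(Z_2)$, and moreover $s_i(Z_1)\le t_i(Z_1) < s_i(Z_2)\le t_i(Z_2)$ on $P_i$.

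Finally, fix any $j\in L(Z_1)\cap L(Z_2)$ and suppose for contradiction that $t_j(Z_1) < s_j(Z_2)$ fails. By contiguity and $Z_1\cap Z_2=\emptyset$, the subpaths $V(Z_1)\cap V(P_j)$ and $V(Z_2)\cap V(P_j)$ are disjoint, so the failure forces $t_j(Z_2) < s_j(Z_1)$, i.e.\ $Z_2$ precedes $Z_1$ on $P_j$, whereas $Z_1$ precedes $Z_2$ on $P_i$ by the previous paragraph. I then assemble a closed directed walk in $R$ out of four directed pieces: the reversed subpath of $P_i$ from $t_i(Z_2)$ to $s_i(Z_1)$; a directed $s_i(Z_1)$-to-$t_j(Z_1)$ path inside the strongly connected $Z_1$; the reversed subpath of $P_j$ from $t_j(Z_1)$ to $s_j(Z_2)$; and a directed $s_j(Z_2)$-to-$t_i(Z_2)$ path inside the strongly connected $Z_2$ (all four are genuine directed walks of $R$ precisely because the $X_0$-arcs are reversed and because $Z_1,Z_2$ are strongly connected). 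This closed walk meets $s_i(Z_1)\in V(Z_1)$ and $t_i(Z_2)\in V(Z_2)$, so $Z_1$ and $Z_2$ would coincide --- contradicting their disjointness. Hence $t_j(Z_1) < s_j(Z_2)$, which is the assertion of the claim.

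The remaining bookkeeping is routine: that $s_j(Z), t_j(Z)$ are well defined whenever $j\in L(Z)$ (which covers $j=i$, since $e_i^1,e_i^2$ are arcs of $P_i$ lying in $Z_1,Z_2$), that vertex-disjoint subpaths of a simple path appear one after the other, and that a closed walk through two vertices places them in the same strong component. The only delicate points are establishing the contiguity property and checking that the four-part walk in the last step is genuinely directed in $R$ --- which is exactly where the reversal of all $X_0$-arcs is used.
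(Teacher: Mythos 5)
Your proof is correct and rests on the same mechanism as the paper's: because every arc of $X_0$ is reversed in the residual graph, a violation of the claimed ordering yields directed connections between the two components in both directions, forcing them into a single strongly connected component and contradicting their disjointness. The paper's own proof is a three-line version that only exhibits the forward connection from $t_i(Z_1)$ to $s_i(Z_2)$ and leaves the return connection (which comes from $e_i^1$ preceding $e_i^2$ on the original path $P_i$) implicit; your contiguity lemma and the explicit four-part closed walk supply exactly the details the paper glosses over, so your write-up is a more complete rendering of the same argument rather than a different one.
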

	\begin{proof}
		Suppose for a contradiction that there is some $i \in L(Z_1)
		\cap L(Z_2)$, such that $t_i(Z_1) \geq s_i(Z_2)$.  By the
		definition of $D_{X_0}(X_0 \cup Y^*)$, we have that $t_i(Z_1)$
		is connected to $s_i(Z_2)$ in $D_{X_0}(X_0 \cup Y^*)$.  But
		this implies that $Z_1$ and $Z_2$ are not disjoint, a
		contradiction.
	\end{proof}

	Using this claim we can construct a path $\mathcal{P'}$ in
	$\mathcal{D}$ of cost at most $\opt(I)$.

	\begin{myclaim}\label{clm:pathP'}
		There is a path $\mathcal{P'}$  from $(s, \ldots, s)$ to $(t, \ldots, t)$ in $\mathcal{D}$ of cost at most $\opt(I)$.
	\end{myclaim}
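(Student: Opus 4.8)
The plan is to read the desired path $\mathcal{P}'$ directly off an optimal solution $Y^*$, using the residual graph $D_{X_0}(X_0\cup Y^*)$ as a road map. First I would collect the strongly connected components $Z_1,\dots,Z_m$ of $D_{X_0}(X_0\cup Y^*)$ that contain at least one vulnerable arc of $X_0$; by Lemma~\ref{lem:augm:residual-graph-feasibility}, applied to the feasible solution $Y^*$, every vulnerable arc of $X_0$ lies in one of them. The key preliminary observation is that for $i\in L(Z_j)$ every vertex of $P_i$ between $s_i(Z_j)$ and $t_i(Z_j)$ already belongs to $Z_j$: walking backward along $P_i$ gives a path $t_i(Z_j)\rightsquigarrow x\rightsquigarrow s_i(Z_j)$ in the residual graph (all $X_0$-arcs are reversed there), and strong connectivity of $Z_j$ supplies $s_i(Z_j)\rightsquigarrow t_i(Z_j)$, so $x$ lies on a closed walk through $Z_j$. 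Consequently the $P_i$-segments $Q_i(s_i(Z_j),t_i(Z_j))$ of distinct components are vertex-disjoint, hence linearly ordered along each $P_i$, and Claim~\ref{clm:ordering} guarantees these per-path orders are mutually consistent; I would therefore fix a common linear order and assume (after renumbering) it is $Z_1,\dots,Z_m$.

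Next I would assemble $\mathcal{P}'$ by alternating $\mathcal{A}_1$- and $\mathcal{A}_2$-links. For each $j$ define an \emph{entry node} $u^j$ and an \emph{exit node} $w^j$ whose $i$-th coordinate is $s_i(Z_j)$ (resp.\ $t_i(Z_j)$) when $i\in L(Z_j)$ and is the most recently fixed vertex on $P_i$ (or $s$ if none) when $i\notin L(Z_j)$. Starting from $(s,\dots,s)$, for $j=1,\dots,m$ take the $\mathcal{A}_1$-link to $u^j$ followed by the $\mathcal{A}_2$-link $u^j w^j$, and end with an $\mathcal{A}_1$-link to $(t,\dots,t)$. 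Monotonicity $u^j\le w^j$ is immediate, while $w^j\le u^{j+1}$ follows from the consistent ordering of the segments. The $\mathcal{A}_1$-links are legitimate because, by the preliminary observation, every vulnerable arc of $X_0$ on $P_i$ lies inside some segment $Q_i(s_i(Z_j),t_i(Z_j))$ with $i\in L(Z_j)$, so the stretches of the $P_i$'s crossed by $\mathcal{A}_1$-links carry no vulnerable arc.

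The core of the argument is the cost bound for the $\mathcal{A}_2$-links. For fixed $j$, I would exhibit the arc set of $Z_j$ — namely the forward arcs of $Z_j$, which are exactly those of $Y^*\cap E(Z_j)$, together with the reversed segments $\overleftarrow{Q_i}(s_i(Z_j),t_i(Z_j))$ for $i\in L(Z_j)$, which are available at zero cost in the auxiliary instance $I(u^j,w^j)$ — as a strongly connected subgraph connecting the terminal pairs of $I(u^j,w^j)$. This certifies both that the link $u^j w^j$ is present in $\mathcal{A}_2$ and that $w_{u^jw^j}=\opt(I(u^j,w^j))\le c(Y^*\cap E(Z_j))$. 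Since $\mathcal{A}_1$-links have weight $0$ and the components $Z_1,\dots,Z_m$ are pairwise vertex- (hence arc-) disjoint, summing yields $w(\mathcal{P}')=\sum_{j=1}^m \opt(I(u^j,w^j))\le\sum_{j=1}^m c(Y^*\cap E(Z_j))\le c(Y^*)=\opt(I)$, which is the claim.

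I expect the main obstacle to be twofold. The substantive point is the $\mathcal{A}_2$-link step: one must verify that the solution to $I(u^j,w^j)$ returned by the Feldman--Ruhl procedure is strongly connected so that the link is actually added, and the arc set of $Z_j$ is precisely the witness that makes this work while simultaneously bounding the cost — here care is needed that the reversed segments used are exactly those made available at zero cost in $I(u^j,w^j)$. The fiddly part is the combinatorial bookkeeping for the coordinates $i\notin L(Z_j)$: one has to check that "carrying over" the current $P_i$-vertex genuinely produces a monotone node sequence, that the intermediate $\mathcal{A}_1$-links exist, and that the global linear order extracted from Claim~\ref{clm:ordering} is well defined (i.e.\ acyclic) even for components sharing no common path.
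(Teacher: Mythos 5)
Your proposal is correct and follows essentially the same route as the paper's proof: both decompose $Y^*$ via the strongly connected components of $D_{X_0}(X_0\cup Y^*)$, use Claim~\ref{clm:ordering} to order them consistently, witness each $\mathcal{A}_2$-link by the component's forward arcs plus the zero-cost reversed segments (giving $w_{u^jw^j}\le c(Z_j\setminus X_0)$), and sum over the disjoint components. The only difference is organizational — you fix a global topological order of the components up front and write the nodes explicitly, whereas the paper builds the path greedily and establishes the existence of the next admissible component by deriving a cycle contradiction from Claim~\ref{clm:ordering}; both hinge on the same acyclicity of the component precedence relation.
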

	\begin{proof}
	  We give an algorithm that constructs a path $\mathcal{P}'$ from $(s,
	  \ldots, s)$ to $(t, \ldots, t)$ in $\mathcal{D}$, such that $\mathcal{P}'$ only uses links
	  in $\mathcal{A}_2$ that correspond to strongly connected
	  components of $Y^*$ in $D_{X_0}(X_0 \cup Y^*)$. Starting from $s_1 =
	  (s, \ldots, s) \in \mathcal{V}$, we perform the  following two steps
	  alternatingly until we reach $(t, ..., t) \in \mathcal{V}$. 
	  \begin{enumerate}
	    \item From the current node $u$, we proceed by
		    greedily taking links of $\mathcal{A}_1$ until we reach a
		    node $v = (v_1, v_2, \ldots, v_\ell) \in \mathcal{V}$ 
		with the property that each vertex $v_i$, $1 \leq i \leq \ell$,
		is either $t$ or part of some strongly connected component of
		$D_{X_0}(X_0 \cup Y^*)$.\label{itm:step:1}
	    \item From the current node $v$, we take a link $vw \in
	      \mathcal{A}_2$ to some node $w \in \mathcal{V}$, where the link
	      $vw$ corresponds to a strongly connected component $Z$ of
	      $D_{X_0}(X_0 \cup Y^*)$.\label{itm:step:2}
	  \end{enumerate}
	  First, we observe that Step~\ref{itm:step:1} is well-defined: If at
	  some point we reach a node $v = (v_1, v_2, \ldots, v_\ell)$ with no
	  out-arcs in $\mathcal{A}_1$ and there is some $1 \leq i \leq \ell$,
	  such that $v_i$ is not in a strongly connected component of
	  $D_{X_0}(X_0 \cup Y^*)$, then $Y^*$ is not feasible, since $Q_i(v_i,
	  \cdot)$ contains a vulnerable arc but there is no substitute path in
	  $X_0 \cup Y^*$.  To show that we obtain an $s_1$-$t_1$ path by 
	  alternating the two steps above it remains to show
	  that whenever we are in Step~\ref{itm:step:2}, there is a link $vw
	  \in \mathcal{A}_2$ such that $vw$ corresponds to a strongly connected
	  component of $D_{X_0}(X_0 \cup Y^*)$.  Let $v = (v_1, ..., v_\ell)
	  \in \mathcal{V}$ be the current node at the beginning of
	  Step~\ref{itm:step:2}. 
	  Without loss of generality we may assume that
	  $v_i \neq t$ for $i \in \{1, 2, \ldots,
	  q\}$ for some $1 \leq q \leq \ell$ (if $v_i = t$ for all $i \in [\ell]$ then we are done). 
	  We now need to show that there is a
	  strongly connected component $Z$ in $D_{X_0}(X_0 \cup Y^*)$ such that
	  $S(Z) \subseteq \{v_1, ..., v_q\}$.
Suppose for each strongly connected component $Z$ in $D_{X_0}(X_0 \cup Y^*)$ satisfying 
$S(Z) \cap \{ v_1, ..., v_q \} \neq \emptyset$ we have that $S(Z) \nsubseteq \{ v_1, ..., v_q \}$ and
let $Z_1, ..., Z_j$ be those components.
By Claim \ref{clm:ordering} we have that for every component $Z_i, i \in [j]$ 
there is a component $Z_{i'}, i' \in [j]$,
such that there is some $l \in L_i \cap L_{i'}$ with $t_l(Z_{i'}) < s_l(Z_i)$.
But then the ordering of the sets $Z_1, ..., Z_j$ induces a cycle, a contradiction to the ordering.

	Hence the algorithm computes an $s_1$-$t_1$ path $\mathcal{P}'$;
	it remains to bound its cost. Consider any link $e = vw \in
	\mathcal{P}' \cap \mathcal{A}_2$ with $v = (v_1, ..., v_\ell)$ and $w =
	(w_1, ..., w_\ell)$ and let $Z_e$ be its corresponding strongly
	connected component in $Y^*$. 
	Let $c(Z_e) \coloneqq \sum_{a \in Z_e \setminus X_0} c(a)$, i.e.\ the cost of the edges in $Z_e$ that do
	not belong to $X_0$.
	For the cost of the link~$e$ we now have that $w_e = \opt( I(v, w) )
	\leq c(Z_e)$,
	since we compute an optimal solution to $I(v, w)$ that connects the
	terminal pairs $(v_i, w_i)_{1 \leq i \leq \ell}$ in the residual graph.
	Hence we have $w(\mathcal{P}') = \sum_{e \in \mathcal{P}' \cap
	\mathcal{A}_2} w_e \leq \sum_{e \in \mathcal{P}' \cap \mathcal{A}_2}
	c(Z_e) = \opt(I)$. 
\end{proof}

	A shortest path $\mathcal{P}$ from $s_1$ to $t_1$ in $\mathcal{D}$
	satisfies $c'(\mathcal{P}) \leq c'( \mathcal{P}') \leq \opt(I)$. 
\end{proof}

\begin{proof}[Proof of Theorem~\ref{thm:ALG:DP:runtime}]
  Let $|V| = n$ and
  $|A| = m$.  
  The auxiliary graph
  $\mathcal{D}= (\mathcal{V}, \mathcal{A})$ has order $| \mathcal{V} | \leq
  n^\ell$ and size $ | \mathcal{A} | \leq n^{2\ell}$. For each link $e \in
  \mathcal{A}$ we need to solve (at most) one instance of $\dsnp$ on at most
  $\ell$ terminal pairs, which can be done in time $O(n^{2\ell} \cdot (m
  n^{4\ell-2} + n^{2\ell-1} \log n))$ using the algorithm
  from~\cite{feldman_ruhl_06} for finding a cost-minimal strongly connected
  subgraph on $\ell$ terminal pairs. 
  Hence, $\mathcal{D}$ can be constructed in $O(mn^{6\ell-2} + n^{6\ell-1} \log
  n)$.  Since $\mathcal{D}$ is acyclic, a shortest path from $(s, \ldots, s)$
  to $(t, \ldots, t)$ in $\mathcal{D}$ can be computed in time $O(|
  \mathcal{V}| + | \mathcal{A} |) = O(n^{2\ell})$. Therefore,
  Algorithm~\ref{ALG:DP:augprob} runs in time $O(mn^{6\ell-2} + n^{6\ell-1} \log
  n)$.
\end{proof}
\section{Proofs Omitted from Section~\ref{sec:FTF:other-problems}}

\begin{proof}[Proof of Proposition~\ref{prop:dorkf-twokdst}]
	Consider an instance $I$ of \twokdst on a graph $D=(V, A)$ with edge
	weights $c \in \mathbb{Q}^A$, root $s \in V$, and $k$ terminals $T=
	\{t_1, \ldots, t_k\}$.
	We construct (in polynomial time) an instance $I' = ((V', A \cup A_t), c', s, t, A)$ of \dorkkf as follows.
	We add to $D$ a vertex $t$ and an arc from each terminal to $t$; that is, we
	add the edge-set $A_t = \{( t_i, t ) : t_i \in T\}$. Let $D'$ be the resulting graph.
	The cost function $c'$ is given by
	\[
		c'_e \coloneqq
		\begin{cases}
			c_e	&	\text{if $e \in A$, and}\\
			0 	&	\text{otherwise}.
		\end{cases}
	\] 
	Finally, we set $M \coloneqq A$, that is, all original arcs are unsafe,
	while the new arcs $A_t$ are safe.

	Let $X$ be a feasible solution to $I'$. We have $A_t \subseteq X$,
	since otherwise $X$ is not feasible. We now show that there are two
	disjoint $s$-$t_i$ paths in $(V, X \setminus A_t)$ for every $t_i \in
	T$. Assume this is not true and there is some $1 \leq i \leq k$, such
	that there are no two disjoint $s$-$t_i$ paths in $(V, X)$.  By the
	max-flow-min-cut theorem there is a cut edge $e \in A$ (or no edge at
	all) that separates $s$ and $t_i$ in $(V, X)$.  But then $Y \coloneqq
	\{(A_t \cup \{ e \}) \setminus ( t_i, t ) \}$ is an $s$-$T$ cut in $(V,
	X)$ of size $k$ containing vulnerable edge, which contradicts the
	assumption that~$X$ is feasible for $I'$.  Furthermore, every $s$-$T$
	cut in $(V, X \setminus A_t)$ contains at least $k+1$ edges since all
	edges in $A$ are unsafe, since otherwise $X$ is not feasible for $I'$.
%This proves the second part and hence the proposition.
	Finally, observe that there is a one-to-one correspondence between
	feasible solutions to $I'$ and those of the \twokdst instance $I$ with
	the additional property that every $s$-$T$ cut contains at least $k+1$
	edges.
\end{proof}

\begin{proof}[Proof of Proposition~\ref{prop:dortf-onetwotwodst}]
	Let $I$ be an instance of \onetwotwodst on a graph $D=(V, A)$ with edge-weights
	$c \in \mathbb{Q}^E$, root $s \in V$, and terminals $T= \{t_1, t_2 \}$.
	Similar to the proof of Proposition~\ref{prop:dorkf-twokdst} we construct an
	instance $I'$ of \dortf as follows. We add to $D$ two vertices $u$ and $t$ and
	four directed edges $\hat{A} = \{ (s, u), (t_1, u), (u, t), (t_2, t) \}$.
	Let the resulting graph be $D'$. The edge weights $c'$ of $D'$ are given by
	\[
		c'_e \coloneqq
		\begin{cases}
			c_e	&	\text{if  $e \in A(G)$, and}\\
			0	&	\text{otherwise}.
		\end{cases}
	\] 
	Finally, we set $M \coloneqq A \cup \{ (s,u), (t_1, u) \} $, that is, the edges incident to $t$ are safe while
	all other edges are unsafe. 

	Let $X$ be a feasible solution to $I'$. We have $\hat{A} \subseteq X$, since
	otherwise $X$ is not feasible. We now show that there is at least one
	$s$-$t_i$ path and there are at least two disjoint $s$-$t_2$ paths in $(V, X
	\setminus \hat{A})$. Assume first that there is no path from $s$ to $t_1$ in
	$(V, X \setminus \hat{A})$. But then $\{ (s, u), (v, t) \}$ is a cut of size
	two in $D'$, where $(s, u)$ is a vulnerable edge. This contradicts the
	feasibility of $X$. Now assume that there are no two disjoint $s$-$t_2$ paths
	in $(V, X \setminus \hat{A})$.  Similar to the proof of
	Proposition~\ref{prop:dorkf-twokdst} we then have a contradiction to the
	feasibility of $X$.  Finally, observe that there is a one-to-one correspondence
	between feasible solutions to $I$ and feasible solutions to $I'$. 
\end{proof}
\end{document}